\let\oldtocsection=\tocsection
\let\oldtocsubsection=\tocsubsection
\let\oldtocsubsubsection=\tocsubsubsection
\renewcommand{\tocsection}[2]{\hspace{0em}\oldtocsection{#1}{#2}}
\renewcommand{\tocsubsection}[2]{\hspace{1em}\oldtocsubsection{#1}{#2}}
\renewcommand{\tocsubsubsection}[2]{\hspace{2em}\oldtocsubsubsection{#1}{#2}}
\theoremstyle{plain}
\numberwithin{equation}{section}
\newcommand{\refnewtheoremn}[4]{
\newaliascnt{#1}{#2}
\newtheorem{#1}[#1]{#3}
\aliascntresetthe{#1}
\expandafter\providecommand\csname #1autorefname\endcsname{#4}}
\newcommand{\refnewtheorem}[3]{\refnewtheoremn{#1}{#2}{#3}{#3}}
\def\makeCal#1{
\expandafter\newcommand\csname c#1\endcsname{\mathcal{#1}}}
\def\makeBB#1{
\expandafter\newcommand\csname b#1\endcsname{\mathbb{#1}}}
\def\makeFrak#1{
\expandafter\newcommand\csname f#1\endcsname{\mathfrak{#1}}}
\edef\y{\@Alph\count@}
\theoremstyle{definition}
\newtheorem{thm}{Theorem}[section]
\theoremstyle{definition}
\renewcommand{\Im}{\operatorname{Im}}
\renewcommand{\Re}{\operatorname{Re}}
\newcommand {\Hom}{\operatorname{Hom}}
\newcommand{\ch}{\operatorname{ch}}
\newcommand{\Coh}{\operatorname{Coh}}
\newcommand{\SL}{\operatorname{SL}}
\newcommand{\GL}{\operatorname{GL}}
\renewcommand{\dim}{\operatorname{dim}}
\newcommand{\Li}{\operatorname{Li}}
\newcommand {\<}{\langle}
\renewcommand {\>}{\rangle}
\newcommand{\half}{\tfrac{1}{2}}
\newcommand{\D}{D}
\renewcommand{\b}{\,|\,}
\newcommand{\gammac}{\gamma^\vee}
\newcommand{\mat}[4]{\begin{pmatrix}#1&#2\\#3&#4\end{pmatrix}}
\newcommand{\oomit}[1]{}
\DeclareMathAlphabet{\mathcalligra}{T1}{calligra}{m}{n}
\DeclareFontShape{T1}{calligra}{m}{n}{<->s*[2.2]callig15}{}
\begin{document}

\title{ Resurgence and Riemann-Hilbert problems for  elliptic Calabi-Yau threefolds}
\author{Tom Bridgeland and Iv{\'a}n Tulli}
%\address{University of Sheffield}
\date{}

\begin{abstract}{Let $X$ be a  Calabi-Yau threefold with an elliptic fibration. We investigate the non-linear Riemann-Hilbert  problems associated to the Donaldson-Thomas  theory of fibre classes, and relate them to the Borel sum of the $A$-model topological string free energy for such classes. } \end{abstract}

\maketitle

%\tableofcontents

\section{Introduction}\label{introsec}

There has been a great deal of interest  recently in applying techniques from resurgence analysis to study non-perturbative effects in topological string theory.  A central object in the theory is the free energy, which is a formal series in the topological string coupling $\lambda$. More precisely, the  A-model closed string free energy of a Calabi-Yau (CY) threefold $X$
(in the holomorphic limit)
is a series of the form \begin{equation}F(\lambda,Q)=\sum_{g\geq 0}F_g(Q)\, \lambda^{2g-2},\end{equation} where $F_g(Q)$ coincides with the generating function for genus $g$ Gromov-Witten (GW) invariants of $X$.  In studying non-perturbative effects it has been very fruitful to consider, where possible, the Borel sum of the free energy \cite{ASTT, IM, GM, GKKM, G}.
The latter depends on a choice of a ray $r\subset \mathbb{C}^*$, with the Borel sum for different  choices being related by Stokes jumps. These jumps are conjecturally related to the Donaldson-Thomas (DT) invariants of $X$. 

One (non-compact) CY threefold  that has been much-studied from this point of view is  the resolved conifold. The Borel summability of its A-model free energy was established in \cite{PS,HO,ASTT}. The resulting non-perturbative free energy is closely related to the log of the triple sine function $\sin_3(z\b \omega_1,\omega_2,\omega_3)$\footnote{The multiple sine functions are in turn defined in terms of the multiple gamma functions of Barnes \cite{Bar}.} \cite{KK}. The Stokes behaviour of the Borel sums was completely described in \cite{ASTT} and shown to be controlled by the corresponding DT invariants. 

These ideas make contact with a closely related area of research which aims to use the DT invariants of a CY$_3$ category to construct a geometric structure on the space of stability conditions \cite{RHDT2,BS,B2,BM, Btau}. The  geometric structure goes by the name of a Joyce structure, and  is built from solutions to a collection of non-linear Riemann-Hilbert (RH) problems. These problems  involve piecewise holomorphic functions on  $\bC^*$, with discontinuities prescribed by   the DT invariants along a collection of rays, and fixed asymptotics at $0$ and $\infty$. The existence and uniqueness of solutions for such problems is not known in general, but several examples have been worked out in detail   \cite{RHDT2,BM,B}.

In the case of the resolved conifold the RH problems  have unique solutions \cite{B,AP}, which are closely related to the Borel sums of the free energy. More precisely, this statement holds after appropriately fixing the  constant term in the asymptotics at $\epsilon=0$. The solutions are then given by functions $Y_i^{r}(\epsilon \b \omega_1,\omega_2)$, with $i=1,2$, where $r=\bR_{>0}\cdot \zeta\subset \mathbb{C}^*$ is a ray, and $\epsilon\in \bH_r$ lies in the open half-plane $\mathbb{H}_r\subset \bC$ centered on $r$. They can be repackaged in terms of functions $\tau^r(\epsilon \b \omega_1,\omega_2)$ satisfying the equations
\begin{equation}\label{taudef}
       \frac{\partial}{\partial \omega_i} \log\tau^r(\epsilon\b \omega_1,\omega_2)=\frac{1}{2\pi \mathrm{i}}\cdot \frac{\partial}{\partial \epsilon}  \log Y_{i}^r(\epsilon\b\omega_1,\omega_2).
\end{equation}
It is these functions $\tau^r(\epsilon \b \omega_1,\omega_2)$ which are closely  related to the Borel summation of the free energy along the ray $r$ \cite{B,A,ASTT}. 

The goal of this paper is to address similar questions for compact CY threefolds with elliptic fibrations. 
We  study the Borel sums of the  A-model free energy and its relation to the RH problem defined by the DT invariants. We only consider the part of the free energy of $X$ corresponding to fibre classes, i.e. classes $\beta\in H_2(X,\bZ)$ satisfying $\pi_*(\beta)=0$. Similarly we only consider DT invariants for coherent sheaves supported on the fibres of $\pi$. 

 The rest of the introduction contains a detailed summary of our main results. In general we find a similar situation to that of the resolved conifold, although our results are not as complete.  One significant additional difficulty  is that  whereas in the case of the resolved conifold the set of Stokes directions is a closed subset of the circle, for the Borel sums  considered in this paper the Stokes directions are everywhere dense. Nonetheless, we find that the free energy is again Borel summable at least along almost all non-Stokes rays, and we construct natural solutions to a weak version of the RH problem which ignores the asymptotics at $\infty$. Moreover, the two stories are again related by the equation \eqref{taudef}.
 
\subsection{Borel sum of the free energy}\label{introBorel}

In order to state our results in more detail let us briefly recall the basics of Borel summation.
 Consider a formal power series $F(\epsilon)=\sum_{i\geq 1} a_i \epsilon^i$. The Borel transform is the series $f(\eta)=\sum_{i\geq 1} a_i\eta^{i-1}/(i-1)!$. For simplicity let us assume that $f(\eta)$ is the Taylor expansion of a meromorphic function on $\bC$ which we also denote  by $f(\eta)$. We consider rays $r\subset \bC^*$ of the form $r=\bR_{>0}\cdot \zeta$ with $\zeta\in \bC^*$. 
 Such a ray  is called a Stokes ray if it contains a pole of $f(\eta)$, otherwise it is a non-Stokes ray.
For a given $\epsilon \in \bC^*$ the series $F(\epsilon)$ is said to be Borel summable along a non-Stokes ray $r\subset \bC^*$ if the integral \begin{equation}
    \label{transform}
F^r(\epsilon)=\int_r e^{-\eta/\epsilon} f(\eta)\mathrm{d}\eta\end{equation} exists.  The Borel sum is then defined to be the value of this integral. In practice, given a non-Stokes ray $r\subset \bC^*$, we shall only consider the Borel sum for $\epsilon$ lying in the open half-plane $\bH_r=\{z\in \bC:\Re(z/\zeta)>0\}$ centered on $r$.  

We will be interested in applying Borel summation to the fibre-class free energy of the A-model topological string on   an elliptic CY threefold $X$.

\begin{assumptions}
\label{CYass}
By an elliptic CY threefold we mean a smooth projective threefold $X$, with trivial canonical bundle, equipped with a flat map $\pi\colon X\to B$  whose general fibre is a genus 1 curve. We always assume  that $B$ is smooth, that the singular fibers of $\pi$ are reduced and irreducible, and that $\pi$ has a section.
We further assume that the DT/GW correspondence holds for $X$.\end{assumptions}

Under these assumptions the GW  invariants of $X$ in the fibre classes were  computed
in \cite[Section B.3]{OP}. See Appendix \ref{appendix} for a more detailed discussion. This leads to an  expression 
 \begin{equation}
 \label{free_energy}F_{\text{GW}}(\lambda\b \tau)=-e(X)\cdot \sum_{g\geq 2} \frac{B_{2g} \, G_{2g-2}(\tau)}{4g(2g-2)} \left(\frac{\lambda}{2\pi}\right)^{2g-2}\end{equation}
for the $g\geq 2$ part of the fibre-class free energy. It is a formal series in $\lambda$ whose coefficients depend on a K{\"a}hler parameter $\tau\in \bC$ satisfying $\Im(\tau)>0$. More precisely, $\tau$ is the pairing of the complexified K{\"a}hler class $B+\mathrm{i}\omega\in H^2(X,\bC)$ with the fundamental class $\beta\in H_2(X,\bZ)$  of a smooth fibre of $\pi$. The expression involves the Bernoulli numbers $B_{2g}$, the Eisenstein series $G_{2g-2}(\tau)$, and the topological Euler characteristic $e(X)$.

It will be convenient to set $2\pi \mathrm{i} \epsilon=\lambda/2\pi$ and view $F_{\text{GW}}(\lambda\b\tau)$ as a formal series in $\epsilon$. Furthermore, in order to relate the Borel summations of $F_{\text{GW}}(\epsilon\b \tau)$ to the RH problem below, we consider instead 
\begin{equation}\label{free_energyres}
    F_{\text{GW}}(\epsilon \b \omega_1,\omega_2):=F_{\text{GW}}(\epsilon/\omega_1\b \omega_2/\omega_1)\,,
\end{equation}
where $\epsilon,\omega_1,\omega_2\in \mathbb{C}^*$ and $\text{Im}(\omega_2/\omega_1)>0$.
All our results about  $F_{\text{GW}}(\epsilon \b \omega_1,\omega_2)$ then specialize to results about \eqref{free_energy} by simply taking $\omega_1=1$ and $\omega_2=\tau$.
Our main result concerning the Borel summation of $F_{\text{GW}}(\epsilon \b \omega_1,\omega_2)$  is as follows:
  
 \begin{thm}
 \label{one}
Fix $\omega_1,\omega_2\in \mathbb{C}^{*}$ with $\Im(\omega_2/\omega_1)>0$. \begin{itemize}
    \item[(i)]
The Borel transform of  $F_{\text{GW}}(\epsilon\b\omega_1,\omega_2)$ is the Taylor expansion of a meromorphic function on $\bC$ with double poles at the points $a_1\omega_1+a_2\omega_2$ with $(a_1,a_2)\in \bZ^2\setminus\{0\}$ and no other poles.  \smallskip
\item[(ii)]For almost all non-Stokes rays $r\subset \bC^*$ the series $F_{\text{GW}}(\epsilon\b\omega_1,\omega_2)$ is Borel summable along $r$ provided $\epsilon$  lies in the  corresponding open half-plane $\bH_r\subset \bC^*$.\qed \end{itemize} 
\end{thm}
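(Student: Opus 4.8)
The plan is to prove (i) by re-expressing the free energy as a sum over the lattice $\Lambda=\bZ\omega_1+\bZ\omega_2$ of a single one-variable generating function, and then evaluating the Borel transform of that function in closed form. Writing $W_{m,n}=m\omega_1+n\omega_2$ and inserting the absolutely convergent Eisenstein expansion $G_{2g-2}(\omega_2/\omega_1)=\omega_1^{\,2g-2}\sum_{(m,n)\neq 0}W_{m,n}^{-(2g-2)}$ (valid for $2g-2\geq 4$) into \eqref{free_energy}--\eqref{free_energyres}, the powers of $\omega_1$ cancel and one obtains, as an identity of formal series,
\begin{equation}
F_{\text{GW}}(\epsilon\b\omega_1,\omega_2)=-e(X)\sum_{(m,n)\neq 0}\Phi\!\left(\frac{2\pi\mathrm{i}\,\epsilon}{W_{m,n}}\right),\qquad \Phi(x)=\sum_{g\geq 2}\frac{B_{2g}}{4g(2g-2)}\,x^{2g-2},
\end{equation}
where the $g=2$ term is set aside, because there $G_2$ is only conditionally convergent. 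Since the Borel transform of $\Phi(cx)$ equals $c\,\hat\Phi(c\eta)$, everything reduces to computing $\hat\Phi$.

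I would compute $\hat\Phi$ from the Bernoulli generating function. Using $\tfrac{1}{4g(2g-2)(2g-3)!}=\tfrac{2g-1}{2(2g)!}$ together with $\sum_{g\geq 0}\frac{B_{2g}}{(2g)!}t^{2g}=\frac{t}{2}\coth\frac{t}{2}$, and differentiating $\frac12\coth\frac t2=t^{-1}\sum_{g\geq0}\frac{B_{2g}}{(2g)!}t^{2g}$ once in $t$ to get $\sum_{g\geq 0}\frac{(2g-1)B_{2g}}{(2g)!}t^{2g-2}=-\frac{1}{4\sinh^2(t/2)}$, one finds after subtracting the $g=0$ and $g=1$ terms that
\begin{equation}
\hat\Phi(\xi)=\frac{1}{2\xi}\left(-\frac{1}{4\sinh^2(\xi/2)}+\frac{1}{\xi^2}-\frac{1}{12}\right).
\end{equation}
This is meromorphic on $\bC$, holomorphic at the origin (the bracket vanishes there to order two) and with double poles exactly at $\xi\in 2\pi\mathrm{i}(\bZ\setminus\{0\})$, arising from the zeros of $\sinh(\xi/2)$. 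Hence each summand $\frac{2\pi\mathrm{i}}{W_{m,n}}\hat\Phi(2\pi\mathrm{i}\eta/W_{m,n})$ has double poles precisely at the integer multiples $N W_{m,n}$, $N\in\bZ\setminus\{0\}$.

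To see that the sum over $(m,n)$ defines a meromorphic function, I would subtract from $\hat\Phi$ its leading monomial (the $g=2$ term, of order $\xi$ at $0$); the modified block then vanishes to order $\xi^3$ at $0$, so as $|W_{m,n}|\to\infty$ the corresponding summand is $O(|W_{m,n}|^{-4})$ and the lattice sum converges absolutely and locally uniformly off the poles. The discarded $g=2$ contribution, formed from the genuine quasimodular $G_2$, is a single monomial in $\eta$, hence entire. Thus $\hat F$ is an entire function plus an absolutely convergent series of meromorphic functions, so it is meromorphic; its pole set is $\{NW_{m,n}\}=\{a_1\omega_1+a_2\omega_2:(a_1,a_2)\in\bZ^2\setminus\{0\}\}$, and a residue computation (the finitely many leading coefficients at each lattice point share a sign and do not cancel) shows the poles have order exactly two. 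This proves (i).

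For (ii) the Borel sum along a non-Stokes ray $r=\bR_{>0}\zeta$ is $\int_r e^{-\eta/\epsilon}\hat F(\eta)\,\mathrm{d}\eta$. Holomorphy of $\hat F$ near $0$ handles the inner endpoint, and for $\epsilon\in\bH_r$ the factor $e^{-\eta/\epsilon}$ decays exponentially along $r$, so it suffices to bound the growth of $\hat F$ on $r$. The only large values come from proximity to the dense pole set, where a double pole contributes $O(\operatorname{dist}(\eta,\Lambda)^{-2})$; the crux is therefore a Diophantine separation estimate: for almost every direction $\zeta$ there exist $c(\zeta)>0$ and $s>1$ with $\operatorname{dist}(t\zeta,\Lambda)\geq c(\zeta)\,t^{-s}$ for all large $t$. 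This holds off a measure-zero set by Borel--Cantelli, since in the annulus $|W|\in[R,2R]$ the $O(R^2)$ lattice points each block an angular interval of size $O(R^{-s-1})$, of total measure $O(R^{1-s})$, which is dyadically summable for $s>1$. Combined with the absolute convergence from (i), such a bound gives $|\hat F(t\zeta)|=O(t^{2s})$, a polynomial growth that the exponential decay overwhelms, yielding convergence of the integral. The main obstacle is exactly this uniform growth control: one must dominate the full lattice sum along a pole-free ray that nevertheless comes arbitrarily close to the dense poles, and convert the almost-everywhere Diophantine bound into an estimate on $\hat F$ that survives summation over $\Lambda$; by comparison the closed-form evaluation of $\hat\Phi$ and the convergence in (i) are essentially routine bookkeeping.
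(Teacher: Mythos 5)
Your proposal is essentially correct, but it takes a genuinely different route from the paper's. For part (i) you regroup the double sum by lattice points: $F_{\text{GW}}$ becomes a sum over $W\in\Lambda^*(\omega_1,\omega_2)$ of copies of a single one-variable series $\Phi$, whose Borel transform you evaluate in the closed hyperbolic form $\widehat{\Phi}(\xi)=\frac{1}{2\xi}\bigl(-\frac{1}{4\sinh^2(\xi/2)}+\frac{1}{\xi^2}-\frac{1}{12}\bigr)$ (your coefficient identity $\frac{1}{4g(2g-2)(2g-3)!}=\frac{2g-1}{2(2g)!}$ and the resulting pole structure check out). The paper instead regroups by the integer index: it represents the Borel transform as a Hadamard-product contour integral of the Bernoulli factor $h_1$ against the Eisenstein factor $\mathscr{h}$ (Proposition \ref{count}), then moves the contour to obtain $f(\eta)=\sum_{m\geq 1}m^{-3}\,\mathscr{f}(\eta/m)$ (Proposition \ref{down}), a sum over $m$ of Weierstrass-type functions with poles on $m\Lambda^*$. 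These are dual regroupings of the same double sum over $\bZ_{\geq 1}\times\Lambda^*$. Your version buys an elementary, closed-form building block and a transparent pole count with no elliptic function theory; the paper's version buys exactly the objects ($\mathscr{f}$, $\mathscr{k}_i$, and the polylogarithm form of the Borel sums in Theorem \ref{Borelsumthm}(ii)) that are needed downstream for the Stokes-jump residue computation and the solution of the weak RH problem in Theorem \ref{two}. For part (ii) the Diophantine input is the same — your dyadic Borel--Cantelli count is in substance the proof of Khinchin's theorem invoked in Theorem \ref{roth}(iv) — but the implementation differs: the paper fixes a single ray of finite irrationality measure and proves absolute convergence of $\int_r e^{-\eta/\epsilon}f\,\mathrm{d}\eta$ by cutting the ray into segments near and far from lattice points and exploiting the (quasi-)periodicity of $\wp$, $\zeta$, $\rho_i$ (Lemmas \ref{lemma2} and \ref{lemma3}), which gives summability for every ray with $\mu(r)<\infty$, not just almost every ray.

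The one step you should not treat as routine is the asserted bound $|\widehat{F}(t\zeta)|=O(t^{2s})$; as stated it is both unproved and slightly off. The double-pole coefficient of your block at a lattice point $\omega_0=kW$ with $|\omega_0|\approx t$ is proportional to $\omega_0/k^2$, so the near-pole contribution is of size $t\cdot\mathrm{dist}(t\zeta,\Lambda)^{-2}=O(t^{1+2s})$ rather than $O(t^{2s})$; and the infinitely many far blocks are not controlled by the locally uniform convergence from (i), which is a statement on compact sets. One needs a genuine splitting along the ray: Taylor-remainder bounds $O(|\eta|^3/|W|^4)$ for $|W|\geq 2|\eta|$, and for $|W|<2|\eta|$ direct bounds on $1/\sinh^2$ in terms of $\mathrm{dist}(2\pi\mathrm{i}\eta/W,2\pi\mathrm{i}\bZ)$ together with the subtracted monomials, which alone already contribute $O(|\eta|\log|\eta|)$. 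All of this can be carried out and still yields polynomial growth, so your conclusion survives; but it is the actual content of part (ii) — the analogue of the paper's Lemma \ref{lemma3} — and your writeup currently names it as the obstacle rather than resolving it. A second, minor point: to identify the formal Borel transform with the Taylor expansion of your lattice sum, note that the modified blocks converge uniformly on a small disc about $\eta=0$, so coefficient-wise summation is legitimate.
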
 

More precisely, given a non-Stokes ray $r\subset \bC^*$, there is a unique real number $\alpha \in \mathbb{R}\setminus \mathbb{Q}$ such that $\pm (\omega_1+\alpha \omega_2) \in r$. We show that the Borel sum $F^{r}_{\text{GW}}(\epsilon\b \omega_1,\omega_2)$ exists and defines a holomorphic function of $\epsilon\in \bH_r$ whenever $\alpha$ does  not lie in the measure-zero subset of $\bR\setminus\bQ $ consisting of Liouville irrationals. For a general non-Stokes ray $r\subset \bC^*$  we can still associate a meaningful holomorphic function of $\epsilon \in \mathbb{H}_r$ by using integrals along certain detour paths (see Section \ref{Borelssummary}). These integrals reduce to the Borel sums from Theorem \ref{one} whenever the ray $r$  corresponds to  an element $\alpha\in \bR\setminus\bQ$ which is not a Liouville irrational.

\subsection{DT invariants and the RH problem}\label{DTRHintrosec}

Let $\pi\colon X\to B$ be an elliptic CY threefold satisfying Assumptions \ref{CYass}.  We consider the full triangulated subcategory $\cD(\pi)\subset \D^b\Coh(X)$ of the bounded derived category of coherent sheaves  consisting of objects  whose set-theoretic support is contained in a finite union of fibres of $\pi$. The Chern character defines a homomorphism
\begin{equation}
    \label{ch}\ch\colon K_0(\cD(\pi))\to N(\pi)\subset H^*(X,\bZ),
\end{equation}
 whose image $N(\pi)=\bZ\gamma_1\oplus \bZ\gamma_2$ is a free abelian group of rank 2.  It is convenient to choose the generators $\gamma_1,\gamma_2\in N(\pi)$
so that if $E$ is a rank $r$, degree $d$ vector bundle supported on a smooth fibre of $\pi$ then  $\ch(E)=-d\gamma_1+r\gamma_2$.

 Given a pair of complex numbers $\omega_1,\omega_2\in \bC^*$ with $\Im(\omega_2/\omega_1)>0$ there is a natural stability condition on the  category $\cD(\pi)$, uniquely defined up to the action of the even shifts,  whose central charge $Z\colon K_0(\cD(\pi))\to \bC$  is the composition of the Chern character \eqref{ch} with the map
 \begin{equation}\label{central_charge}
Z\colon N(\pi)\to \bC, \qquad Z(a_1\gamma_1+a_2\gamma_2)=a_1\omega_1+a_2\omega_2.\end{equation}
A calculation of Toda  \cite[Thm. 6.9]{T}  shows  that    the corresponding DT invariants are  \begin{equation}\label{dt}\Omega(a_1\gamma_1+a_2\gamma_2)=-e(X), \qquad (a_1,a_2)\in \bZ^2\setminus\{0\}.
 \end{equation}

In \cite{RHDT} it was explained how to associate a RH problem to the data of the lattice $N(\pi)$, the central charge \eqref{central_charge}, and the DT invariants \eqref{dt}. We will recall the  details of this construction in Section \ref{appsec} below. Here we will simply state the resulting RH problem and discuss its solution.
 A ray $r\subset \bC^*$ will be called a Stokes ray if it contains a point of the form $Z(\gamma)$ with  $0\neq \gamma\in N(\pi)$, otherwise $r$ will be called non-Stokes.  As before,  given a ray $r\subset \bC^*$, we denote by $\bH_r\subset \bC^*$ the open half-plane centered on it.

 \begin{problem}\label{RHprob}
 For each non-Stokes ray $r\subset \bC^*$ find holomorphic functions $Y^{r}_{i}\colon \bH_r\to \bC^*$ for $i=1,2$ such that the following statements hold.
 \begin{itemize}
 \item[(RH1)] If $\Delta\subset \mathbb{C}^*$ be a convex sector whose boundary consists of non-Stokes rays $r_1,r_2$ taken in clockwise order then
 \begin{equation}
Y^{r_2}_{i}(\epsilon)=Y^{r_1}_{i}(\epsilon)\cdot \prod_{\gamma=a_1\gamma_1+a_2\gamma_2\in Z^{-1}(\Delta)}\left(1-e^{-Z(\gamma)/\epsilon}\right)^{-a_i\cdot e(X)}\end{equation}
for $\epsilon\in \bH_{r_1}\cap \bH_{r_2}$ with $0<|\epsilon|\ll 1$. 

\item[(RH2)] As $\epsilon\to 0$ in any closed subsector of $\bH_r$ we have $Y^r_i(\epsilon)\to 1$.
 \item[(RH3)] There is an $N>0$ such that as $\epsilon\to \infty$ in $\bH_r$ there is a bound $|\epsilon|^{-N}< |Y^r_i(\epsilon)|<|\epsilon|^N$.
\end{itemize}
\end{problem}

It is easy to see that if this problem has a solution then it is unique. We shall instead consider what we call the weak RH problem 
in which we drop condition (RH3). The resulting solution is then unique up to simultaneous multiplication of the functions $Y^r_{i}$ for all rays $r\subset \bC^*$ by  an arbitrary  pair of holomorphic functions $P_i\colon \bC\to \bC^*$ satisfying $P_i(0)=1$. 

In order to motivate our solution of the weak RH problem, consider again the holomorphic functions $F_{\text{GW}}^{r}(\epsilon \b \omega_1,\omega_2)$ from Section \ref{introBorel} and define
\begin{equation}
    \tau_{\text{GW}}^r(\epsilon \b \omega_1,\omega_2):=\exp(F_{\text{GW}}^r(\epsilon \b \omega_1,\omega_2))\,.
\end{equation}
As before, for a general non-Stokes ray $r$ it is understood that the above expression is defined via detour paths.
Furthermore, let $H_{\text{GW}}(\epsilon\b \omega_1,\omega_2)$ be the formal series in $\epsilon$ without constant term satisfying 
\begin{equation}
    \frac{\partial}{\partial \epsilon}H_{\text{GW}}(\epsilon \b \omega_1,\omega_2)=F_{\text{GW}}(\epsilon \b \omega_1,\omega_2)\,.
\end{equation}
When looking for solutions of the RH problem related to $\tau_{\text{GW}}^{r}(\epsilon \b \omega_1,\omega_2)$ via \eqref{taudef}, it is then natural to consider Borel summations of $2\pi \mathrm{i}\cdot \frac{\partial}{\partial \omega_i} H_{\text{GW}}(\epsilon \b \omega_1,\omega_2)$. Our second main result is then as follows:

 \begin{thm}
 \label{two} Fix $\omega_1,\omega_2\in \mathbb{C}^*$ with $\text{Im}(\omega_2/\omega_1)>0$.  Then there exists a solution $Y^{r}_{i}(\epsilon\b\omega_1,\omega_2)$ of the weak RH problem such that $\log Y_i^{r}(\epsilon \b \omega_1,\omega_2)$ is the Borel sum of $2\pi \mathrm{i}\frac{\partial}{\partial \omega_i} H_{\text{GW}}(\epsilon \b \omega_1,\omega_2)$ along $r$ for almost all non-Stokes rays $r$. Furthermore, 
\begin{equation}\label{tauRHw}
    \frac{\partial}{\partial \omega_i} \log\tau^r_{\text{GW}}(\epsilon\b \omega_1,\omega_2)=\frac{1}{2\pi \mathrm{i}}\cdot \frac{\partial}{\partial \epsilon}  \log Y^{r}_{i}(\epsilon\b\omega_1,\omega_2)
\end{equation}
for all non-Stokes rays $r$. \qed
 \end{thm}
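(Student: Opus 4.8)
The plan is to \emph{construct} the solution by Borel summation and then verify the defining properties of the weak RH problem (Problem \ref{RHprob} with (RH3) dropped) together with the compatibility \eqref{tauRHw}. Write $H^{r}$ for the Borel sum of $H_{\text{GW}}$ along a non-Stokes ray $r$ — interpreted through the detour paths of Section \ref{Borelssummary} for the measure-zero family of rays attached to Liouville irrationals — and \emph{define}
\begin{equation*}
\log Y^{r}_{i}(\epsilon\b\omega_1,\omega_2):=2\pi\mathrm{i}\,\frac{\partial}{\partial\omega_i}H^{r}(\epsilon\b\omega_1,\omega_2).
\end{equation*}
Since $\partial/\partial\omega_i$ acts only on the Taylor coefficients of $H_{\text{GW}}$, it commutes with the Borel transform and with the contour integral (for fixed $r$ the poles stay off the ray, so one may differentiate under the integral sign). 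Hence the right-hand side is exactly the Borel sum of the series $2\pi\mathrm{i}\,\partial_{\omega_i}H_{\text{GW}}$, as the statement requires. This series inherits from $H_{\text{GW}}$ a vanishing constant term and a Borel transform holomorphic away from the lattice $\{\eta_\gamma=Z(\gamma)\}$, so the sum (or its detour version) is well defined.

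Property (RH2) is then immediate: the Borel (or detour) sum of a series with vanishing constant term tends to $0$ as $\epsilon\to0$ in any closed subsector of $\bH_r$, so $Y^r_i\to1$. The identity \eqref{tauRHw} is purely formal once summation is known to intertwine the two derivatives. We have already noted that $\partial/\partial\omega_i$ commutes with the summation; and $\partial/\partial\epsilon$ commutes with it by the standard fact — an integration by parts in $\eta$, using that the Borel transform of $H_{\text{GW}}$ vanishes at $\eta=0$ — that $\partial_\epsilon H^{r}$ is the Borel sum of $\partial_\epsilon H_{\text{GW}}=F_{\text{GW}}$, i.e. $\partial_\epsilon H^{r}=F^{r}_{\text{GW}}=\log\tau^{r}_{\text{GW}}$. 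Combining,
\begin{equation*}
\frac{1}{2\pi\mathrm{i}}\frac{\partial}{\partial\epsilon}\log Y_i^{r}=\frac{\partial}{\partial\epsilon}\frac{\partial}{\partial\omega_i}H^{r}=\frac{\partial}{\partial\omega_i}\frac{\partial}{\partial\epsilon}H^{r}=\frac{\partial}{\partial\omega_i}\log\tau^{r}_{\text{GW}},
\end{equation*}
which is \eqref{tauRHw}. The same manipulation applies verbatim to the detour integrals, giving the statement for \emph{all} non-Stokes rays.

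The substantive property is (RH1). Because \eqref{tauRHw} is now available, differentiating it in $\epsilon$ shows that the Stokes jumps of $\log Y_i^r$ and of $\log\tau^r_{\text{GW}}=F^r_{\text{GW}}$ across the Stokes rays in a convex sector $\Delta$ bounded by non-Stokes rays $r_1,r_2$ are related by $\partial_\epsilon(\delta\log Y_i)=2\pi\mathrm{i}\,\partial_{\omega_i}(\delta F^r_{\text{GW}})$, where $\delta$ denotes the difference between the $r_2$- and $r_1$-sums. Thus (RH1) reduces to a statement about the jump of the Borel-summed free energy, and this is a clean residue computation: since by Theorem \ref{one} the Borel transform $f$ is meromorphic with only double poles at the $\eta_\gamma=Z(\gamma)$, rotating the contour from $r_1$ to $r_2$ gives
\begin{equation*}
\delta F^r_{\text{GW}}=2\pi\mathrm{i}\sum_{\gamma\in Z^{-1}(\Delta)}\operatorname{Res}_{\eta=\eta_\gamma}\big(e^{-\eta/\epsilon}f(\eta)\big),
\end{equation*}
an explicit sum of exponentials $e^{-\eta_\gamma/\epsilon}$ whose coefficients are read off the principal part of $f$ and are therefore, by the proof of Theorem \ref{one} and the formula \eqref{free_energy}, fixed by $e(X)$. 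Applying $2\pi\mathrm{i}\,\partial_{\omega_i}$ (which brings down a factor $-a_i/\epsilon$ from each exponential, using $\partial_{\omega_i}Z(\gamma)=a_i$) and integrating once in $\epsilon$, with the integration constant fixed to $0$ by (RH2), I expect to recover exactly $\log\big((1-e^{-Z(\gamma)/\epsilon})^{-a_i\,e(X)}\big)$ for each $\gamma$, namely the logarithm of the factor in (RH1); the collinear lattice points $n\gamma$ supply the Taylor coefficients $\tfrac1n$ of $-\log(1-x)$, after a short computation reconciling the double- and simple-pole parts of $f$.

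The main obstacle is not this single-ray residue bookkeeping but the control of the \emph{infinite} family of jumps. Because the Stokes rays are everywhere dense, $Z^{-1}(\Delta)$ is infinite and the product in (RH1) must be shown to converge and to equal the difference of the two summations for $0<|\epsilon|\ll1$. This is precisely where the detour-path construction of Section \ref{Borelssummary} is indispensable — it is what allows the contour to be transported past a dense set of singularities while keeping the integrals convergent — and where the restriction to small $|\epsilon|$ and to the weak problem (dropping (RH3)) enters. I expect the estimates guaranteeing absolute convergence of $\sum_{\gamma\in Z^{-1}(\Delta)}$ of the residue contributions, uniformly for $\epsilon$ in compact subsets of $\bH_{r_1}\cap\bH_{r_2}$ of small modulus, to be the technically hardest part of the argument.
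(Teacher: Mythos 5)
Your overall architecture (construct the solution by Borel summation, get (RH2) from the vanishing constant term, get \eqref{tauRHw} by commuting derivatives with summation, get (RH1) by contour rotation and residues) is the right shape, but the foundation you build on has a genuine gap. You route everything through $H^{r}$, the Borel sum of $H_{\text{GW}}$ itself, and then differentiate it in $\omega_i$ and $\epsilon$. The paper deliberately avoids this object: the Borel transform $h(\eta\b\omega_1,\omega_2)$ of $H$ is built out of $\mathscr{h}(u\b\omega_1,\omega_2)=\log\sigma(u)-\log(u)-\tfrac12 G_2u^2$ and therefore has \emph{logarithmic branch points}, not poles, at the lattice points (see the remark after Proposition \ref{propone}). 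Consequently none of the machinery you implicitly invoke is available for it: the Diophantine estimates of Lemma \ref{lemma3} and the absolute-convergence statements of Theorem \ref{Borelsumthm} and Proposition \ref{last} are proved only for the meromorphic transforms $f$ and $k_i$; the existence of $H^{r}$ is never established; the $\delta$-independence of detour integrals and the residue calculus used for the jumps do not apply to branch points; and "differentiating under the integral sign" in $\omega_i$ is not innocent here, because the singularities of $h$ move with $\omega_i$ and come arbitrarily close to the ray --- this is exactly the Diophantine difficulty, not a formality. The paper's way around this is to define $\log Y^r_i$ \emph{directly} as (a multiple of) the detour integral $K_i^{r(\delta)}$ of $\mathrm{Li}_1(e^{-\eta/\epsilon})\,\mathscr{k}_i(\eta)$ as in \eqref{hear}, and to prove \eqref{tauRHw} not via symmetry of mixed partials of a potential $H^{r}$ but via the direct identity $\partial_{\omega_i}F^{r(\delta)}=\partial_\epsilon K_i^{r(\delta)}$, obtained by differentiating under the integral and integrating by parts twice using \eqref{mod_zeta}--\eqref{curlydef2}.

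The second gap is that the two substantive analytic points are asserted or deferred rather than proved. (RH2) is not "immediate": with poles accumulating toward every ray, the limit $\epsilon\to 0$ requires a dominating integrable function, which is exactly Lemma \ref{rh2}. For (RH1), your plan (jump of $F^r_{\text{GW}}$ by residues of the double-pole function $f$, then $\partial_{\omega_i}$, then integrate back in $\epsilon$) inserts several more unjustified interchanges (term-by-term differentiation and integration of an infinite sum over a dense set of Stokes rays), and you yourself flag the convergence of the infinite family of jumps and the arc contributions as "the technically hardest part" without carrying them out --- but that is precisely the content of Proposition \ref{stokesjumpsprop}, where the paper makes this work by using the polylogarithm-weighted integrands: each lattice point $\omega=a_1\omega_1+a_2\omega_2$ is a \emph{simple} pole of $\mathscr{k}_i$ with residue $-a_i$, and the factor $\mathrm{Li}_1(e^{-\omega/\epsilon})=-\log(1-e^{-\omega/\epsilon})$ produces the (RH1) factor in one stroke, the multi-cover ($m$-fold collinear) contributions having already been resummed into the polylogarithm. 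Your alternative bookkeeping with double poles of $f$ and collinear towers is plausible in outline but is exactly the part that needs the estimates you postpone, so as it stands the proposal does not constitute a proof.
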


 \subsection{Further remarks}

Our results leave several natural challenges and questions for future research.  For almost all  rays $r\subset \bC^*$ our  solution  to the weak RH problem can be expressed  \eqref{hear}  as an integral
\begin{equation}\label{hu}
    Y_{i}^r(\epsilon)=\exp\left(-\frac{e(X)}{2\pi\mathrm{i}}\int_{r} \mathrm{Li}_1(e^{- \eta/\epsilon})\frac{\partial}{\partial \omega_i} \mathscr{h}(\eta\b\omega_1,\omega_2)\mathrm{d}\eta\right)\,,
\end{equation}
where $\mathscr{h}(\eta\b\omega_1,\omega_2)$ is closely related to the log of the Jacobi theta function, and is defined in terms of the  Weierstrass sigma function by the equation
\[\mathscr{h}(u\b\omega_1,\omega_2)=\log \sigma(u\b\omega_1,\omega_2)-\log(u)-\half G_2(\omega_1,\omega_2)u^2.\]
An obvious challenge is to  upgrade Theorem \ref{two} by constructing a solution to the full Riemann-Hilbert problem. This would involve understanding the behaviour of the integral \eqref{hu} in the limit $\epsilon\to \infty$.

In the case of the resolved conifold, the Borel sum of the free energy along a particular ray can be re-expressed \cite[Theorem 2.1]{ASTT}   in terms of the Barnes triple sine function. It is natural to ask whether the integral \eqref{hu} can also be re-expressed in some more convenient form, and whether it can be related to known special functions. %Of course  this could also help to understand the limit $\epsilon\to \infty$.

A very interesting property of the solution to the RH problem  in the case of the resolved conifold  is an unexpected symmetry exchanging  the parameter $\epsilon \in \bC^*$ with the central charge parameter corresponding to the  class of a point. A possible relation to S-duality in string theory was discussed in \cite[Section 6]{ASTT}.  For the RH problem considered in this paper the analogous symmetry would  exchange the parammeters $\epsilon$ and $\omega_2$. Since the solutions to the RH problem already have an obvious  $\SL_2(\bZ)$ symmetry acting on the parameters $\omega_1,\omega_2$, this perhaps hints at a possible connection with modular forms for $\SL_3(\bZ)$.

\subsection{Structure of the paper}
In Section \ref{analysis} we introduce some modified Weierstrass elliptic functions and summarize the results from complex analysis that we will need.   In Section \ref{appsec} we apply the contents of Section \ref{analysis} to prove our main results,  Theorems \ref{one} and \ref{two}. The proofs of the results in Section \ref{analysis} can be found  in Sections \ref{Boreltranssec} and  \ref{proofssec}. Section \ref{Boreltranssec}  is concerned with the properties of the Borel transforms, whereas Section \ref{proofssec} deals with  the proof of Borel summability  and  the existence of closely related integrals along detour paths.

\subsection{Acknowledgements} The authors are grateful to Georg Oberdieck who suggested applying the techniques of \cite{ASTT,B} to the case of elliptic CY threefolds.

%%%%%%%%%%%%%%%%%%%%%%%%%%%%%%%%%%%%%%%%%%%%%%%%%%%%%%%%%%%%%%%%%%%%%%%%%%%%%%%%%%%%%%%%%%%%%%%%
%%%%%%%%%%%%%%%%%%%%%%%%%%%%%%%%%%%%%%%%%%%%%%%%%%%%%%%%%%%%%%%%%%%%%%%%%%%%%%%%%%%%%%%%%%%%%%%%
%%%%%%%%%%%%%%%%%%%%%%%%%%%%%%%%%%%%%%%%%%%%%%%%%%%%%%%%%%%%%%%%%%%%%%%%%%%%%%%%%%%%%%%%%%%%%%%%
%%%%%%%%%%%%%%%%%%%%%%%%%%%%%%%%%%%%%%%%%%%%%%%%%%%%%%%%%%%%%%%%%%%%%%%%%%%%%%%%%%%%%%%%%%%%%%%%

\section{Summary of the relevant complex analysis}
\label{analysis}

In this section we collect the precise statements of the results from complex analysis which will be applied to prove Theorems \ref{one} and \ref{two} in Section \ref{appsec}. The proofs for the results in this section can be found in Sections \ref{Boreltranssec} and \ref{proofssec}. 

\subsection{Elliptic functions}
\label{elliptic}
Define the region
\begin{equation}\label{reg}
    \cR=\{(\omega_1,\omega_2)\in (\bC^*)^2: \Im (\omega_2/\omega_1)>0\}.
\end{equation}
A point $(\omega_1,\omega_2)\in \cR$ defines a lattice  \begin{equation}\Lambda(\omega_1,\omega_2)=\bZ \omega_1+\bZ \omega_2\subset \bC.\end{equation}
We set $\Lambda^*(\omega_1,\omega_2)=\Lambda(\omega_1,\omega_2)\setminus\{0\}$.
For an even integer $n\geq 2$ we introduce the Eisenstein series
 \begin{equation}
     \label{eisenstein}
G_{n}(\omega_1,\omega_2)=\sum_{\omega\in \Lambda^*(\omega_1,\omega_2)} \frac{1}{\omega^n}=\sum_{0\neq (a_1,a_2)\in \bZ^2} \frac{1}{(a_1\omega_1+a_2\omega_2)^n}.\end{equation}
This series is absolutely convergent for $n>2$, while for $n=2$ it is only conditionally convergent. We define $G_2$ by the Eisenstein summation
 \begin{equation}
     G_2(\omega_1,\omega_2):=\sum_{a_1\in \mathbb{Z}\setminus\{0\}}\frac{1}{(a_1\omega_1)^2}+\sum_{a_2\in \mathbb{Z}\setminus\{0\}}\sum_{a_1\in \mathbb{Z}}\frac{1}{(a_1\omega_1+a_2\omega_2)^2}\,.
 \end{equation}
The resulting functions $G_n(\omega_1,\omega_2)$ are holomorphic on $\cR$ for all $n\geq 2$. These are related to the $G_{n}(\tau)$ appearing in \eqref{free_energy} via $G_{n}(\tau)=G_{n}(1,\tau)$.

We recall the Weierstrass elliptic functions. The functions $\wp(u\b\omega_1,\omega_2)$ and $\zeta(u\b\omega_1,\omega_2)$  are meromorphic functions of $u\in \bC$ with poles of order 2 and 1 respectively at the lattice points $\Lambda(\omega_1,\omega_2)$. The function $\sigma(u\b\omega_1,\omega_2)$ is an entire function of $u\in \bC$ with simple zeroes at the lattice points. There are relations
\begin{equation}
\zeta(u\b\omega_1,\omega_2)=\frac{\partial}{\partial u} \log \sigma(u\b \omega_1,\omega_2), \qquad \wp(u\b\omega_1,\omega_2)=-\frac{\partial^2}{\partial u^2} \log \sigma(u\b \omega_1,\omega_2),
\end{equation}
and a Laurent expansion at $u=0$
\begin{equation}
    \label{expan}
\log \sigma(u\b \omega_1,\omega_2)-\log(u)=-\sum_{g\geq 3} \frac{G_{2g-2}(\omega_1,\omega_2)}{2g-2} \cdot  u^{2g-2}.\end{equation}

We introduce minor modifications
\begin{equation}
    \label{mod_h}
\mathscr{h}(u\b\omega_1,\omega_2)=\log \sigma(u\b\omega_1,\omega_2)-\log(u)-\half G_2(\omega_1,\omega_2)u^2,\end{equation}
\begin{equation}\label{mod_zeta}\tilde{\zeta}(u\b\omega_1,\omega_2)= \frac{\partial}{\partial u} \mathscr{h}(u\b \omega_1,\omega_2)=\zeta(u\b\omega_1,\omega_2)-u^{-1}-G_2(\omega_1,\omega_2) \, u ,\end{equation}
\begin{equation}\label{mod_wp} \tilde{\wp}(u\b\omega_1,\omega_2)=-\frac{\partial ^2}{\partial u^2} \mathscr{h}(u\b\omega_1,\omega_2)=\wp(u\b\omega_1,\omega_2)-u^{-2}+G_2(\omega_1,\omega_2),\end{equation}
which are holomorphic near $u=0$.
Below we shall need the related functions
\begin{equation}\label{curlydef1}\mathscr{f}(u\b\omega_1,\omega_2)=2\tilde{\zeta}(u\b\omega_1,\omega_2)-u \tilde{\wp}(u\b\omega_1,\omega_2), \end{equation}
\begin{equation}\label{curlydef2}\mathscr{k}_i(u\b\omega_1,\omega_2)=\frac{\partial}{\partial \omega_i} \mathscr{h} (u\b\omega_1,\omega_2),\end{equation}
which have poles precisely at the nonzero lattice points $\Lambda^*(\omega_1,\omega_2)$. These  are  double poles in the case of $\mathscr{f}$ and simple poles in the case of $\mathscr{k}_i$. Later we will need the parity properties
\begin{equation}
    \label{evenodd}
    \mathscr{f}(-u\b\omega_1,\omega_2)=-\mathscr{f}(u\b\omega_1,\omega_2),\qquad \mathscr{k}_i(-u\b\omega_1,\omega_2)=\mathscr{k}_i(u\b\omega_1,\omega_2)
\end{equation}
which follow immediately from the expansion \eqref{expan}.

\subsection{Borel transforms}

Our starting point is the following formal power series in $\epsilon$
\begin{equation} \label{power1} H(\epsilon\b\omega_1,\omega_2)=\sum_{g\geq 2} \frac{B_{2g} \,G_{2g-2}(\omega_1,\omega_2)\, (2\pi \mathrm{i})^{2g}\, \epsilon^{2g-1}}{4g(2g-1)(2g-2)}.\end{equation}
The coefficients are holomorphic functions of $(\omega_1,\omega_2)\in \cR$ involving the Bernoulli numbers $B_{2g}$ and the Eisenstein series \eqref{eisenstein}. We then  consider 
the formal power series
\begin{equation}\label{power2} F(\epsilon\b\omega_1,\omega_2)=\frac{\partial }{\partial \epsilon} H(\epsilon\b\omega_1,\omega_2),\qquad K_i(\epsilon\b\omega_1,\omega_2)=\frac{\partial }{\partial \omega_i} H(\epsilon\b\omega_1,\omega_2).\end{equation}
We denote by $f(\eta\b\omega_1,\omega_2)$ and $k_i(\eta\b\omega_1,\omega_2)$ the Borel transforms of these series. They are power series in $\eta$ with coefficients which are holomorphic functions of $(\omega_1,\omega_2)\in \cR$.

Note that $F(\epsilon\b\omega_1,\omega_2)$ is related to the previous $F_{\text{GW}}(\epsilon\b \omega_1,\omega_2)$ from \eqref{free_energyres}  by
\begin{equation}\label{FGWFrelres}
    F_{\text{GW}}(\epsilon \b \omega_1,\omega_2)=-\frac{e(X)}{(2\pi \mathrm{i})^2}\cdot F(\epsilon \b \omega_1, \omega_2)\,.
\end{equation}
In particular, for  $F_{\text{GW}}(\lambda \b \tau)$ given in \eqref{free_energy}
\begin{equation}\label{FGWFrel}
    F_{\text{GW}}(\lambda \b \tau)=-\frac{e(X)}{(2\pi \mathrm{i})^2}\cdot F(\epsilon \b 1, \tau), \qquad 2\pi \mathrm{i}\epsilon=\frac{\lambda}{2\pi}\,.
\end{equation}
We choose to work with $F(\epsilon \b \omega_1,\omega_2)$ rather than directly  with $F_{\text{GW}}(\lambda \b \tau)$ for two reasons. On the one hand, the change of variables from $\lambda$ to $\epsilon$ and the rescaling by $-e(X)/(2\pi \mathrm{i})^2$ eliminates certain awkward factors from the Borel sums and the positions of the poles of the Borel transform. On the other hand, the introduction of the variables $\omega_1,\omega_2$  facilitates the relation with the RH problem. 

Let us fix a point $(\omega_1,\omega_2)\in \cR$. The following result is proved in Section \ref{Boreltranssec}.

\begin{prop}
\label{propone} 
\begin{itemize}
    \item[(i)]
The Borel transforms   $f(\eta\b\omega_1,\omega_2)$ and $k_i(\eta\b\omega_1,\omega_2)$  have positive radius of convergence and hence define holomorphic functions in a neighbourhood of $\eta=0$.
\smallskip

\item[(ii)] These functions extend to meromorphic functions of $\eta\in \bC$ with poles precisely at the nonzero lattice points $\Lambda^*(\omega_1,\omega_2)$. The poles  are double poles in the case of $f$ and simple poles in the case of $k_i$.
\smallskip

\item[(iii)] 
  There are explicit expressions
\begin{equation}
    \label{curlyf}
f(\eta\b\omega_1,\omega_2)= \sum_{m\geq 1} \frac{1}{m^3} \, \mathscr{f}\Big(\frac{\eta}{m}\b \omega_1,\omega_2\Big), \end{equation}
\begin{equation}
    \label{curlyk2}
k_i(\eta\b\omega_1,\omega_2)= \sum_{m\geq 1} \frac{1}{m^2} \, \mathscr{k}_i\Big(\frac{\eta}{m}\b \omega_1,\omega_2\Big), \end{equation}
which converge absolutely and uniformly for $\eta$ in compact subsets of $\bC$. \qed
\end{itemize}

\end{prop}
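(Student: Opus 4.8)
The plan is to reduce parts (i)--(iii) to the explicit resummations \eqref{curlyf} and \eqref{curlyk2}, and then to read off all three claims from the known structure of the Weierstrass-type functions $\mathscr{f}$ and $\mathscr{k}_i$. The first step is to compute the Borel transforms termwise. Differentiating \eqref{power1} as in \eqref{power2} and applying the definition of the Borel transform, the only non-trivial input is the classical identity $B_{2g}\,(2\pi\mathrm{i})^{2g}=-2\,(2g)!\,\zeta(2g)$, where $\zeta$ here denotes the Riemann zeta function. After cancelling the factorials the coefficients collapse, giving
\[
f(\eta\b\omega_1,\omega_2)=-\sum_{g\geq 2}(2g-1)\,\zeta(2g)\,G_{2g-2}(\omega_1,\omega_2)\,\eta^{2g-3},
\]
\[
k_i(\eta\b\omega_1,\omega_2)=-\sum_{g\geq 2}\frac{\zeta(2g)}{2g-2}\,\partial_{\omega_i}G_{2g-2}(\omega_1,\omega_2)\,\eta^{2g-2}.
\]

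For part (i) I would estimate these coefficients directly. Setting $\rho=\min_{\omega\in\Lambda^*(\omega_1,\omega_2)}|\omega|>0$, the inequality $|\omega|^{-n}\leq|\omega|^{-4}\rho^{-(n-4)}$ yields $|G_n(\omega_1,\omega_2)|\leq C\rho^{-n}$ for $n\geq 4$, and (bounding $|a_i|$ linearly in $|\omega|$ and using the prefactor from differentiating $\omega^{-n}$) an analogous bound with an extra polynomial factor in $g$ holds for $\partial_{\omega_i}G_n$. Since $\zeta(2g)$ is bounded, the coefficients of both series grow at most like $\mathrm{poly}(g)\cdot\rho^{-2g}$, so the radii of convergence are at least $\rho$ and the Borel transforms are holomorphic near $\eta=0$.

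The heart of the argument is to match these germs with the right-hand sides of \eqref{curlyf} and \eqref{curlyk2}. From the expansion \eqref{expan}, the definition \eqref{mod_h} and the formulae \eqref{curlydef1}--\eqref{curlydef2} one extracts the Taylor expansions at $u=0$,
\[
\mathscr{f}(u\b\omega_1,\omega_2)=-\sum_{g\geq 2}(2g-1)\,G_{2g-2}\,u^{2g-3},\qquad \mathscr{k}_i(u\b\omega_1,\omega_2)=-\sum_{g\geq 2}\frac{1}{2g-2}\,\partial_{\omega_i}G_{2g-2}\,u^{2g-2},
\]
valid for $|u|<\rho$. Substituting $u=\eta/m$, weighting by $m^{-3}$ and $m^{-2}$ respectively, and summing over $m\geq 1$ reproduces exactly the factor $\sum_{m\geq 1}m^{-2g}=\zeta(2g)$ in every coefficient, so the sums agree with $f$ and $k_i$ coefficientwise. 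The only rearrangement involved is the interchange of the sum over $m$ with the sum over $g$, which is justified for $|\eta|<\rho$ by the absolute convergence from part (i); crucially, $G_{2g-2}$ is kept throughout as a single already Eisenstein-summed coefficient, so this step never reorders the conditionally convergent sum defining $G_2$. This establishes \eqref{curlyf} and \eqref{curlyk2} as identities of holomorphic germs at $\eta=0$.

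Finally, part (ii) and the global validity of (iii) follow by analysing the resummed series on all of $\bC$. The functions $\mathscr{f}$ and $\mathscr{k}_i$ are meromorphic with double, respectively simple, poles on $\Lambda^*(\omega_1,\omega_2)$ and vanish at the origin. Because $m\,\Lambda^*(\omega_1,\omega_2)\subseteq\Lambda^*(\omega_1,\omega_2)$, each term $m^{-3}\mathscr{f}(\eta/m)$ is holomorphic off the lattice; on a compact set $S\subset\bC\setminus\Lambda^*(\omega_1,\omega_2)$ the finitely many small-$m$ terms are bounded, while for large $m$ the argument $\eta/m$ lies in the disc of holomorphy where $|\mathscr{f}(\eta/m)|\leq C|\eta|/m$, so the tail is dominated by $\sum_m m^{-4}$ and the series converges locally uniformly. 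This supplies the meromorphic continuation, and the same argument applies to $k_i$. To pin down the orders of the poles, near a fixed $\omega_0\in\Lambda^*(\omega_1,\omega_2)$ only the finitely many $m$ with $\omega_0/m\in\Lambda^*(\omega_1,\omega_2)$ contribute a singular part; using $\mathscr{f}(u)\sim-\omega'/(u-\omega')^2$ as $u\to\omega'$, the leading coefficient of the double pole of $f$ at $\omega_0$ equals $-\omega_0\sum_{m:\,\omega_0/m\in\Lambda^*}m^{-2}\neq 0$, and the parallel residue computation handles the simple poles of $k_i$. I expect this last step --- the uniform control of $\sum_m m^{-3}\mathscr{f}(\eta/m)$ away from the lattice, together with the verification that the contributions to each pole do not cancel --- to be the main obstacle; the bookkeeping around the conditionally convergent $G_2$ also warrants care, though packaging it inside $G_{2g-2}$ as above avoids any reordering of the Eisenstein sum.
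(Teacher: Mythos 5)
Your proposal is correct, but it takes a genuinely different route from the paper. The paper (following \cite{ASTT}) realises the Borel transform as a Hadamard product: the Bernoulli part $h_1(s)$ is summed in closed form into a function with simple poles at $\mathbb{Z}\setminus\{0\}$ of residue $-1/2m$, and $f$, $k_i$ are represented by contour integrals such as $\frac{1}{2\pi\mathrm{i}}\int_{|s|=1/2}h_1(s)\,\mathscr{f}(\eta/s)\,\mathrm{d}s/s^2$; the expansions \eqref{curlyf}--\eqref{curlyk2} then drop out by pushing the contour to large square contours, collecting the residues at $s=\pm m$ via the parity \eqref{evenodd}, and estimating the integral on the squares to zero. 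You instead absorb the Bernoulli numbers directly through $B_{2g}(2\pi\mathrm{i})^{2g}=-2\,(2g)!\,\zeta(2g)$ and prove \eqref{curlyf}--\eqref{curlyk2} by coefficient matching, the sum over $m$ regenerating the factor $\zeta(2g)$; this is more elementary (no Hadamard-product lemma, no contour estimates), at the price of supplying the coefficient bounds for part (i) and justifying the $m$--$g$ interchange by hand, both of which you do adequately (and your remark that $G_{2g-2}$ is carried as a single Eisenstein-summed coefficient correctly defuses the $G_2$ issue). The endgame --- locally uniform convergence of the tail $\sum_{m\geq M}$ and identification of the poles --- is essentially identical in the two arguments, and your explicit computation of the double-pole coefficient $-\omega_0\sum_m m^{-2}\neq 0$ of $f$ is in fact more careful than the paper's, which does not address possible cancellation. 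One caveat, which you share with the paper rather than introduce: for $k_i$ your residue computation gives $-a_i\sum_m m^{-2}$ at $\omega_0=a_1\omega_1+a_2\omega_2$, which vanishes when $a_i=0$; so, taken literally, $k_1$ has no pole at the points $a_2\omega_2$ (and $k_2$ none at $a_1\omega_1$). This imprecision is already present in the statement of the proposition and in the paper's proof (``a similar argument applies to $k_i$''), and it is harmless downstream since all jump formulas are weighted by $a_i$; it would have been worth noting explicitly, since your method exposes it.
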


The Borel transform of $H(\epsilon\b\omega_1,\omega_2)$ also has positive radius of convergence, but the  holomorphic continuation of the resulting function $h(\eta\b\omega_1,\omega_2)$ is more complicated due to  the presence of logarithmic singularities, and we will not directly consider this function here. 

\subsection{Irrationality measure}
\label{irrat}

 To define the Borel sum of the series  \eqref{power2} we must consider a Laplace-type integral of the form \eqref{transform}. Note that a ray $r\subset \bC^*$ is non-Stokes precisely if it contains no points of the lattice $\Lambda(\omega_1,\omega_2)\subset \bC$. 
 Since a non-Stokes ray still comes arbitrarily close to points of $\Lambda(\omega_1,\omega_2)$, when trying to control the growth of such integrals we encounter some basic notions from Diophantine approximation 
which we now recall.

The irrationality measure $\mu(\alpha)$ of a real number  $\alpha \in \mathbb{R}$ \cite[Definition E.1]{YB} is defined to be the infimum $\mu(\alpha)=\inf  R(\alpha)$ of the set
    \begin{equation}
        R(\alpha)=\big\{d \in \mathbb{R}_{>0} \; | \; 0<\left|\alpha - p/q\right|<1/q^d  \text{ for at most finitely many }p,q\in \mathbb{Z}, q>0\big\}.
    \end{equation}
    If $R(\alpha)= \emptyset$ we set $\mu(\alpha)=\infty$. In this case $\alpha$ is  known as a Liouville irrational.
    We will use the following well-known properties of $\mu(\alpha)$. 

\begin{thm}
\label{roth}

\begin{itemize}
    \item[(i)] if $\alpha\in \bQ$ then $\mu(\alpha)=1$,\smallskip
    
    \item[(ii)] if $\alpha \in \mathbb{R}\setminus\mathbb{Q}$ then $\mu(\alpha)\geq 2$, \smallskip
    
    \item[(iii)] if $\alpha\in \bR\setminus\bQ$ then
    \begin{equation} 
   \mu\bigg(\frac{a\alpha+b}{c\alpha+d}\bigg)=\mu(\alpha)\text{ for all } \mat{a}{b}{c}{d}\in \GL_2(\bZ),
   \end{equation}
   
   \item[(iv)] the subset $\{\alpha\in \bR:\mu(\alpha)>2\}$ has measure zero.
\end{itemize}
\end{thm}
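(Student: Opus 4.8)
The plan is to prove the four statements separately, each being a standard fact of Diophantine approximation that rests on a different classical input. A useful preliminary observation is that $R(\alpha)$ is an up-set: if $d\in R(\alpha)$ and $d'>d$, then for $q>1$ we have $q^{-d'}<q^{-d}$, so the solutions of $0<|\alpha-p/q|<q^{-d'}$ form a subset of those of $0<|\alpha-p/q|<q^{-d}$ and are again finite in number. Hence $\mu(\alpha)$ is a genuine threshold, with $d>\mu(\alpha)$ forcing finitely many solutions and $d<\mu(\alpha)$ infinitely many. For (i), write $\alpha=a/b$ in lowest terms; whenever $p/q\neq\alpha$ we have $|\alpha-p/q|=|aq-bp|/(bq)\ge 1/(bq)$, so a solution of $0<|\alpha-p/q|<q^{-d}$ forces $q^{d-1}<b$, which for $d>1$ leaves only finitely many $q$ (and finitely many $p$ each); thus $\mu(\alpha)\le 1$. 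Taking $q=nb$, $p=na\pm1$ produces infinitely many solutions with $|\alpha-p/q|=1/q<q^{-d}$ for every $d<1$, giving $\mu(\alpha)\ge 1$. For (ii) I would simply invoke Dirichlet's approximation theorem, which yields infinitely many $p/q$ with $0<|\alpha-p/q|<q^{-2}$ when $\alpha$ is irrational; these satisfy $0<|\alpha-p/q|<q^{-d}$ for all $d<2$, so $\mu(\alpha)\ge 2$.

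For (iii) the cleanest route is to use that $\GL_2(\bZ)$ is generated by the translations $\alpha\mapsto\alpha+n$, the negation $\alpha\mapsto-\alpha$, and the inversion $\alpha\mapsto 1/\alpha$; since inverses also lie in $\GL_2(\bZ)$, it suffices to check that $\mu$ is unchanged under each generator. Translation and negation are immediate from the substitutions $p\mapsto p\pm nq$ and $p\mapsto -p$. The content is in the inversion $\beta=1/\alpha$. Here a rational $P/Q$ approaching $\beta$ satisfies $|\beta-P/Q|=\tfrac{|P|}{|\alpha|\,Q}\,|\alpha-Q/P|$, and because any sequence of improving approximants has $P/Q\to\beta\neq 0$ we get $|P|\asymp Q$. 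Writing $Q/P$ in lowest terms as $p'/q'$ (so $q'\le|P|\asymp Q$), this comparability of numerator and denominator shows the exponent is preserved up to an arbitrarily small loss: given $d''>\mu(\alpha)$, pick $d\in(\mu(\alpha),d'')$; infinitely many solutions of $0<|\beta-P/Q|<Q^{-d''}$ would yield, via the displayed identity, infinitely many solutions of $0<|\alpha-p'/q'|<(q')^{-d}$, contradicting $d\in R(\alpha)$. Hence $\mu(1/\alpha)\le\mu(\alpha)$, and applying this to $\alpha=1/\beta$ gives equality.

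For (iv) I would argue by Borel--Cantelli, working on $[0,1]$ and covering $\bR$ by countably many translates. If $\mu(\alpha)>2$ then $\mu(\alpha)>2+\delta$ for some rational $\delta>0$, so $0<|\alpha-p/q|<q^{-2-\delta}$ has infinitely many solutions; as each $q$ admits only finitely many valid $p$, infinitely many distinct $q$ occur. With $A_q=\bigcup_{0\le p\le q}\{\alpha\in[0,1]:|\alpha-p/q|<q^{-2-\delta}\}$ one has $|A_q|\le 4q^{-1-\delta}$, so $\sum_q|A_q|<\infty$ and $|\limsup_q A_q|=0$; therefore $\{\alpha:\mu(\alpha)>2\}\cap[0,1]\subseteq\bigcup_{\delta\in\bQ_{>0}}\limsup_q A_q$ is null, and the full set is null by covering $\bR$.

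I expect the main obstacle to be part (iii): unlike the others it is not a one-line citation, and the bookkeeping required to show the irrationality exponent is \emph{exactly} preserved under inversion --- controlling that good approximants have comparable numerator and denominator, and handling the passage to lowest terms and the signs --- is the only genuinely delicate point.
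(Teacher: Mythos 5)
Your proposal is correct, and for the one part the paper actually proves in detail --- part (iii), relegated to Appendix B --- your argument is essentially the paper's: reduce to the generators of $\GL_2(\bZ)$ (translation, negation, inversion), note the first two are trivial, and handle inversion by flipping good approximants $p/q$ of $\alpha$ into approximants $q/p$ of $1/\alpha$, absorbing the bounded factor $\frac{|\alpha|Q}{|P|}$ coming from $|P|\asymp Q$ at the cost of an arbitrarily small drop in the exponent; the paper runs the same $\varepsilon$-loss from below (for all $r'<r<\mu(\alpha)$, deduce $r'\le\mu(1/\alpha)$) where you run it from above (for all $d''>\mu(\alpha)$, deduce $d''\in R(1/\alpha)$), and your extra passage to lowest terms is harmless but unnecessary since the definition of $\mu$ does not require coprime pairs. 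Where you genuinely diverge is in parts (i) and (iv): the paper simply cites Liouville's theorem for $\mu(\alpha)\le 1$ and Khinchin's theorem for the measure-zero statement, whereas you prove both from scratch --- the lowest-terms bound $|\alpha-p/q|\ge 1/(bq)$ (which is just Liouville in degree one) and an explicit Borel--Cantelli estimate $|A_q|\lesssim q^{-1-\delta}$ with a countable union over rational $\delta$ and integer translates. Your version is therefore more self-contained, at the price of length; the paper's is shorter but leans on classical references. One small point worth keeping if you write this up: your observation that $R(\alpha)$ is an up-set is what licenses the step ``$d>\mu(\alpha)\Rightarrow d\in R(\alpha)$'' used in the inversion argument, and the paper uses this implicitly without comment, so stating it explicitly as you do is a genuine improvement in rigor.
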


\begin{proof}
If $\alpha \in \mathbb{Q}$ then it is easy to check that $\mu(\alpha)\geq 1$, while $\mu(\alpha)\leq 1$ follows from Liouville's theorem, which states that algebraic numbers of degree $n$ satisfy $\mu(\alpha)\leq n$. Part (ii) follows immediately from the Dirichlet approximation theorem, while (iv) is a Theorem due to Khinchin \cite{Kin}, whose proof is essentially an application of the Borel-Cantelli Lemma.  We could not find a direct reference for part (iii) so we include a proof in Appendix \ref{app_irrat}. 
\end{proof}

Given a point $(\omega_1,\omega_2)\in \cR$ we can define the irrationality measure $\mu(r)\in [1,\infty]$ of a ray $r\subset \bC^*$ as follows.
If $\pm \omega_2\in r$  we define $\mu(r)=1$. Otherwise there is a unique $\alpha\in \bR$ such that $\pm(\omega_1+\alpha\cdot \omega_2)\in r$  and we define $\mu(r)=\mu(\alpha)$. 
Part (iii) of Theorem \ref{roth}  ensures that the resulting notion   depends only on the lattice $\Lambda(\omega_1,\omega_2)\subset \bC$ rather than the specific generators $\omega_1,\omega_2$. Part (i) shows that $\mu(r)=1$ precisely if $r$ contains a lattice point, and part (iv) that almost all rays have $\mu(r)=2$.

\subsection{Borel sums and integrals along detour paths}\label{Borelssummary}

Let us again fix a point $(\omega_1,\omega_2)\in \cR$. Recall that a ray $r\subset \bC^*$  is non-Stokes precisely if it contains no lattice points.  The following results about Borel summation are proved in Section \ref{borelsummabilityproofsec}.

\begin{thm}\label{Borelsumthm}
    Let $r\subset \bC^*$ be a non-Stokes ray with  $\mu(r)<\infty$ and take $\epsilon \in \bH_r$.
  \begin{itemize}
     \item[(i)] 
The integrals \begin{equation}
    \label{transform2}
F^r(\epsilon \b \omega_1,\omega_2)=\int_r e^{-\eta/\epsilon} f(\eta\b \omega_1,\omega_2)\mathrm{d}\eta, \qquad K_i^r(\epsilon)=\int_r\ e^{-\eta/\epsilon} k_i(\eta \b \omega_1,\omega_2)\mathrm{d}\eta,\end{equation}
   are absolutely convergent and depend holomorphically on $\epsilon\in \mathbb{H}_r$. In particular, the series  $F(\epsilon\b \omega_1,\omega_2)$ and $K_i(\epsilon\b \omega_1,\omega_2)$ are Borel summable along the ray $r$.
   \item[(ii)] The  Borel sums can be re-expressed as absolutely convergent integrals
\begin{equation} \label{works}F^r(\epsilon\b\omega_1,\omega_2)=\int_{r} \mathrm{Li}_2(e^{- \eta/\epsilon}) \, \mathscr{f}(\eta \b \omega_1,\omega_2)\mathrm{d}\eta,
\end{equation}
\begin{equation}\label{depot}
K_{i}^r(\epsilon\b \omega_1,\omega_2)=\int_{r} \Li_1(e^{- \eta/\epsilon})\, \mathscr{k}_i (\eta\b \omega_1,\omega_2)\mathrm{d}\eta,\end{equation}
where $\mathrm{Li}_k(z)$ denotes the $k$-th polylogarithm. \qed
\end{itemize}
\end{thm}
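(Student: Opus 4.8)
The plan is to establish \textbf{Theorem \ref{Borelsumthm}} by combining the explicit meromorphic description of the Borel transforms from \textbf{Proposition \ref{propone}} with the Diophantine control coming from the hypothesis $\mu(r)<\infty$. The two parts are closely linked: the polylogarithm expressions in (ii) are essentially obtained by interchanging the Laplace integral in (i) with the $m$-sums in \eqref{curlyf} and \eqref{curlyk2}, using the elementary identities
\begin{equation}\label{liident}
\int_r e^{-\eta/\epsilon}\,\mathrm{Li}_0(e^{-\eta/\epsilon})\,\mathrm{d}\eta \ \text{-type sums}\ =\ \sum_{m\geq1}\int_r e^{-m\eta/\epsilon}(\cdots)\,\mathrm{d}\eta,
\end{equation}
together with the series $\mathrm{Li}_k(z)=\sum_{m\geq1}z^m/m^k$. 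So once (i) is in place, (ii) should follow by a rescaling $\eta\mapsto m\eta$ in each summand and recognizing the resulting $m$-sum as a polylogarithm; the work is then to justify the interchange of sum and integral, which the uniform convergence in \textbf{Proposition \ref{propone}(iii)} and the absolute convergence asserted in (i) are designed to provide.

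The crux is therefore part (i): showing the Laplace integrals \eqref{transform2} converge absolutely for $\epsilon\in\bH_r$. First I would fix $\epsilon\in\bH_r$, so that $\Re(\eta/\epsilon)$ grows linearly along $r$, giving exponential decay $|e^{-\eta/\epsilon}|=e^{-c|\eta|}$ with $c>0$ from the factor $e^{-\eta/\epsilon}$. The difficulty is that $f$ and $k_i$ are meromorphic with poles at \emph{all} nonzero lattice points $\Lambda^*(\omega_1,\omega_2)$, and since $r$ is non-Stokes but the Stokes rays are dense, the ray $r$ passes arbitrarily close to lattice points as $|\eta|\to\infty$. Near a lattice point $\omega=a_1\omega_1+a_2\omega_2$ at distance $\delta$ from $r$, the double pole of $f$ contributes a factor as large as $\delta^{-2}$ (and $\delta^{-1}$ for the simple poles of $k_i$). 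The decisive estimate is thus a lower bound on the distance from a point $\eta\in r$ with $|\eta|\approx t$ to the nearest lattice point, as a function of $t$.

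The main obstacle is precisely this Diophantine distance estimate, and this is where $\mu(r)<\infty$ enters. Writing $r=\bR_{>0}\cdot(\omega_1+\alpha\omega_2)$ with $\mu(\alpha)=\mu(r)<\infty$, the distance from a far-out point of $r$ to a lattice point $a_1\omega_1+a_2\omega_2$ is controlled by how well $\alpha$ is approximated by $-a_1/a_2$; the definition of the irrationality measure gives, for any $\mu'>\mu(r)$, a bound $|\alpha - p/q|\geq C q^{-\mu'}$ for all but finitely many $p/q$, which translates into a lower bound on $\delta$ that decays only \emph{polynomially} in $|\eta|$. So I would split the integral over $r$ into segments indexed by their nearest lattice point, bound the integrand on each segment by (polynomial blow-up from the pole) $\times$ (exponential decay $e^{-c|\eta|}$), and observe that the exponential decay defeats the at-most-polynomial growth coming from the Diophantine lower bound on $\delta$. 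Summing the resulting geometrically decaying contributions yields absolute convergence and, by dominated convergence, holomorphy in $\epsilon$ on $\bH_r$. The finitely many exceptional good rational approximations (where the distance bound may fail) contribute only finitely many segments and are harmless. Once this estimate is secured, both the convergence in (i) and the justification of the interchange needed for (ii) fall out, so I expect essentially all the analytic weight of the theorem to sit in this single polynomial-versus-exponential comparison.
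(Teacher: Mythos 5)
Your identification of the crux --- a Diophantine lower bound on the distance from the ray to lattice points, with exponential decay defeating the resulting polynomial blow-up --- is exactly the engine of the paper's proof (Lemma \ref{lemma3}), and the finitely-many-exceptions point matches the paper's use of \eqref{keyinq}. But your execution has a genuine gap: you propose to bound the full Borel transform $f(\eta)$ directly, treating its behaviour near a lattice point $\omega$ as a clean ``$\delta^{-2}$ blow-up''. Unlike $\wp$, the function $f(\eta)=\sum_{m\geq 1}m^{-3}\mathscr{f}(\eta/m)$ is not elliptic, and no periodicity transports estimates from one pole to another: the double-pole coefficient of $\mathscr{f}$ at a lattice point $\lambda$ is $-\lambda$, so the polar coefficients of $f$ at $\omega$ grow like $|\omega|$, and the ``regular part'' of $f$ near $\omega$ is an infinite sum over all remaining $m$ whose size must itself be controlled. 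The paper sidesteps this entirely: it only ever estimates the single kernels $\mathscr{f},\mathscr{k}_i$ --- more precisely $\wp$, $\zeta$ and $\rho_i=\partial_{\omega_i}\log\sigma$, for which Lemma \ref{lemma2} (periodicity/quasi-periodicity) gives \emph{uniform} control in discs around every lattice point --- proves absolute convergence of the weighted integrals \eqref{works}, \eqref{depot}, and then obtains part (i) and part (ii) simultaneously from Fubini--Tonelli, using $|f(\eta)|\leq\sum_{m\geq 1} m^{-3}|\mathscr{f}(\eta/m)|$. Your route can be repaired, but only by bounding $f$ through exactly this termwise sum (each $\eta/m$ lies on the same ray, so the same Diophantine estimate applies for every $m$), which amounts to reproducing the paper's argument in a different order.

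The second gap is in part (ii): you justify interchanging the $m$-sum with the Laplace integral by ``uniform convergence on compacts plus the absolute convergence from (i)''. Neither suffices: uniform convergence on compact sets says nothing about an integral over an unbounded ray, and absolute convergence of $\int_r|e^{-\eta/\epsilon}|\,|f(\eta)|\,|\mathrm{d}\eta|$ has the absolute value \emph{outside} the $m$-sum, whereas Tonelli requires finiteness of $\int_r|e^{-\eta/\epsilon}|\sum_{m\geq 1}m^{-3}|\mathscr{f}(\eta/m)|\,|\mathrm{d}\eta|$, with the absolute values inside. This stronger statement is what the paper's chain of equalities \eqref{altcomp} establishes, and it is the precise reason the paper estimates the kernel $\mathscr{f}$ rather than $f$: once that is done, the interchange, the identity \eqref{works}, and the absolute convergence in \eqref{transform2} all follow at once, with holomorphy in $\epsilon$ then coming from Morera plus one more application of Fubini.
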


Note that equation \eqref{works}  follows from \eqref{curlyf} and the following formal rearrangements, which are justified in the proof of Theorem \ref{Borelsumthm}: 

\begin{equation}
\begin{split}
\int_r e^{-\eta/\epsilon} \, \sum_{m\geq 1} \frac{1}{m^3}  \,  \mathscr{f}\Big(\frac{\eta}{ m}\Big) \, \mathrm{d}\eta
&=\sum_{m\geq 1}\, \int_r e^{-\eta/\epsilon} \, \frac{1}{m^3}\, \mathscr{f}\Big(\frac{\eta}{ m}\Big)
\, \mathrm{d}\eta=\sum_{m\geq 1}\, \int_{r} \frac{1}{m^2} \, e^{-m\eta/\epsilon}\mathscr{f}(\eta)\, \mathrm{d}\eta\\
&=\int_{r} \, \sum_{m\geq 1} \frac{1}{m^2} \, e^{-m\eta/\epsilon} \mathscr{f}(\eta)\, \mathrm{d}\eta=\int_{r} \Li_2(e^{-\eta/\epsilon}) \mathscr{f}(\eta)\, \mathrm{d}\eta.
\end{split}
\end{equation}
Similar remarks apply to \eqref{depot}.

 Consider now an arbitrary non-Stokes ray $r\subset \bC^*$.  For any $0<\delta\ll \text{min}\{|\omega_1|,|\omega_2|\}$  there is a uniquely-defined detour path $r(\delta)$ which combines the ray $r$ with arcs of angle $<\pi$ taken from  discs of radius $\delta$ centered on  points of $\Lambda^*(\omega_1,\omega_2)$ (see Figure \ref{detourpathfig}). The following proposition is proved in Section \ref{detourpathproof}.

 \begin{figure}
     \centering
     \includegraphics[scale=0.7]{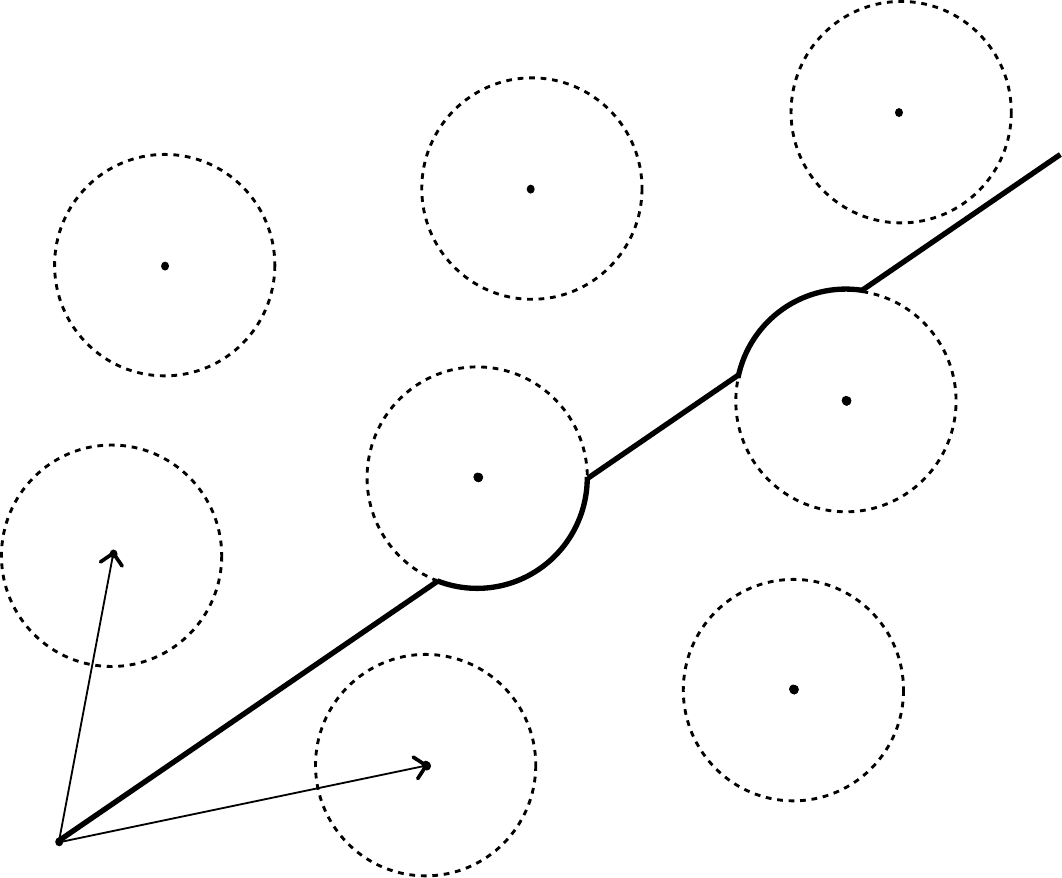}
     \caption{The vectors denote the generators $\omega_1$ and $\omega_2$ of the lattice $\Lambda(\omega_1,\omega_2)$. The discs are centered at the points in $\Lambda(\omega_1,\omega_2)^*$ and have radius $\delta>0$  small enough that they do not intersect. The bold path $r(\delta)$ is determined by the direction of the non-Stokes ray $r$, and  takes a detour along the boundary of any disc intersected by $r$. These detours traverse arcs of the boundary of angle $<\pi$.}
     \label{detourpathfig}
 \end{figure}

\begin{prop}
\label{last}
Let $r\subset \bC^*$ be a  non-Stokes ray with respect to the lattice $\Lambda(\omega_1,\omega_2)$. Then there is $D>0$  such that for all $0<\delta <D$ the integrals \begin{equation} \label{works2}F^{r(\delta)}(\epsilon\b\omega_1,\omega_2)=\int_{r(\delta)} \mathrm{Li}_2(e^{- \eta/\epsilon}) \, \mathscr{f}(\eta \b \omega_1,\omega_2)\mathrm{d}\eta,
\end{equation}
\begin{equation}\label{depot2}
K_{i}^{r(\delta)}(\epsilon\b \omega_1,\omega_2)=\int_{r(\delta)} \Li_1(e^{- \eta/\epsilon})\, \mathscr{k}_i (\eta\b \omega_1,\omega_2)\mathrm{d}\eta,\end{equation} are absolutely convergent for all $\epsilon \in \bH_r$. The resulting integrals depend holomorphically on $\epsilon \in \mathbb{H}_{r}$, and are independent of $\delta$. Moreover 
\begin{equation}
    F^{r(\delta)}(\epsilon\b \omega_1,\omega_2)=F^{r}(\epsilon\b \omega_1,\omega_2), \qquad K_{i}^{r(\delta)}(\epsilon\b \omega_1,\omega_2)=K_{i}^{r}(\epsilon\b \omega_1,\omega_2)
\end{equation}
whenever $\mu(r)<\infty$. \qed 
\end{prop}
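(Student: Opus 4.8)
The plan is to establish, in order, absolute convergence of the detour integrals, their holomorphic dependence on $\epsilon$, their independence of $\delta$, and finally their agreement with the straight-ray integrals $F^r,K_i^r$ when $\mu(r)<\infty$. Throughout I would fix $D=\tfrac12\min\{|\omega|:\omega\in\Lambda^*(\omega_1,\omega_2)\}$, so that for $\delta<D$ the closed discs of radius $\delta$ centred at the nonzero lattice points are pairwise disjoint. With this choice the detour path $r(\delta)$ has the crucial property that every point on it lies at distance $\geq\delta$ from $\Lambda^*(\omega_1,\omega_2)$: on the straight portions because any disc met by $r$ is circumnavigated, and on each arc because the point lies at distance exactly $\delta$ from its centre and at distance $\geq 2D-\delta\geq\delta$ from every other lattice point.

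The two analytic inputs I would isolate first are the local behaviour of $\mathscr{f}$ and $\mathscr{k}_i$ at a pole, and the decay of the polylogarithm factors. Expanding \eqref{curlydef1} and \eqref{curlydef2} about $\omega_0=a_1\omega_1+a_2\omega_2$ gives principal parts
\[ \mathscr{f}(\eta)=-\frac{\omega_0}{(\eta-\omega_0)^2}+\frac{1}{\eta-\omega_0}+O(1), \qquad \mathscr{k}_i(\eta)=\frac{-a_i}{\eta-\omega_0}+O(1), \]
so (using Proposition \ref{propone}) the pole coefficients grow only linearly in $|\omega_0|$, while away from the lattice the quasi-periodicity of the Weierstrass functions bounds $\mathscr{f}$ and $\mathscr{k}_i$ polynomially in $|\eta|$ on $\{d(\eta,\Lambda)\geq\delta\}$. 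For the decay, since $\epsilon\in\bH_r$ means $\Re(\zeta/\epsilon)>0$ for $r=\bR_{>0}\zeta$, there is $c=c(\epsilon,r)>0$ with $\Re(\eta/\epsilon)\geq c|\eta|$ for $\eta\in r$, hence $\Re(\eta/\epsilon)\geq c|\eta|-C'$ within distance $D$ of $r$; thus $|\mathrm{Li}_2(e^{-\eta/\epsilon})|\lesssim e^{-c|\eta|}$ and $|\mathrm{Li}_1(e^{-\eta/\epsilon})|\lesssim e^{-c|\eta|}$ away from $\eta=0$, where the $\mathrm{Li}_1$ factor has only a mild integrable logarithmic singularity.

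Granting these, the heart of the argument is absolute convergence, obtained by splitting $r(\delta)$ into straight segments and detour arcs. On the straight part the integrand is bounded by $(\text{polynomial in }|\eta|)\cdot e^{-c|\eta|}$, which is integrable. The arc around $\omega_0$ has length $<\pi\delta$ and, by the principal parts above, carries an integrand bounded by $\dfrac{C|\omega_0|}{\delta}\,e^{-c|\omega_0|}$ in the $\mathscr{f}$ case and by $C|a_i|\,e^{-c|\omega_0|}$ in the $\mathscr{k}_i$ case; since the number of lattice points within distance $\delta$ of $r$ and of modulus $\leq R$ grows like $O(R)$, summing these bounds against the exponential factor yields a convergent series. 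All estimates are uniform for $\epsilon$ in a closed subsector of $\bH_r$, where $c$ is bounded below, so by the standard theorem on parameter integrals of holomorphic integrands $F^{r(\delta)}$ and $K_i^{r(\delta)}$ are holomorphic on $\bH_r$. For independence of $\delta$ I would compare $r(\delta_1)$ and $r(\delta_2)$ with $\delta_1<\delta_2<D$: because each detour follows the minor arc (angle $<\pi$) on the side of $r$ away from the enclosed lattice point, the region swept between the two paths near any $\omega_0$ lies strictly on the far side of $r$ from $\omega_0$ and so contains no pole. Cauchy's theorem on the region between the two paths truncated at radius $R$, together with the vanishing $O(e^{-cR})$ of the closing cross-cut as $R\to\infty$, gives $F^{r(\delta_1)}=F^{r(\delta_2)}$ and likewise for $K_i$.

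Finally, when $\mu(r)<\infty$ the straight-ray integrals \eqref{works}, \eqref{depot} converge absolutely by Theorem \ref{Borelsumthm}. The same far-side observation shows that each detour arc is homotopic rel endpoints, through a pole-free circular segment, to the corresponding chord of $r$, so the arc and chord integrals coincide termwise; absolute convergence of both $\int_r$ and $\int_{r(\delta)}$ then permits summing these equalities to conclude $F^{r(\delta)}=F^r$ and $K_i^{r(\delta)}=K_i^r$. I expect the main obstacle to be precisely the convergence estimate of the third step: one must confirm that the linear-in-$|\omega_0|$ growth of the pole coefficients, multiplied by the linear growth of the lattice-point count in a thin strip about $r$, is genuinely dominated by the exponential decay $e^{-c|\omega_0|}$ uniformly in $\epsilon$, and that the polynomial bound on $\mathscr{f},\mathscr{k}_i$ off the lattice is strong enough on the straight segments. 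Both rest on making the quasi-periodic growth estimates for the Weierstrass functions fully precise, which is where the genuine work lies; the geometric far-side property, by contrast, makes the $\delta$-independence and the comparison with $F^r$ essentially formal once convergence is in hand.
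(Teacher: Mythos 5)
Your proposal is correct and takes essentially the same route as the paper's proof: absolute convergence from the uniform distance-$\delta$ separation of $r(\delta)$ from the poles (quasi-periodicity bounds on the elliptic functions plus exponential decay of the polylogarithm factors), followed by $\delta$-independence and the identification with $F^r$, $K_i^r$ via Cauchy's theorem on the pole-free regions between the minor arcs and their chords, together with a truncation/limit argument justified by absolute convergence of both integrals. The only cosmetic differences are that the paper recycles its Lemma~\ref{lemma2}/Lemma~\ref{lemma3}/Proposition~\ref{borelsumproof} machinery rather than redoing the arc estimates directly, and proves holomorphicity in $\epsilon$ by Morera plus Fubini instead of uniform estimates on closed subsectors (also, your intermediate bound for the $\mathscr{f}$-integrand on an arc should be $C|\omega_0|/\delta^2$, a harmless slip that does not affect the summation over lattice points).
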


Thus in the case of a non-Stokes ray $r\subset \bC^*$ satisfying $\mu(r)=\infty$ we can use \eqref{works2} and \eqref{depot2} to define substitutes for the functions  \eqref{transform2}, although these are no longer directly related  to the Borel sums of the series  $F(\epsilon\b \omega_1,\omega_2)$ and $K_i(\epsilon\b \omega_1,\omega_2)$.

Finally, we record how the integrals corresponding to different rays are related to each other. This proposition is proved at the end of Section \ref{stokesjumpssec}.

\begin{prop}\label{stokesjumpsprop}
    Let $\Delta\subset \bC^*$ be a convex sector whose boundary consists of two  rays $r_1$ and $r_2$ taken in clockwise order.  Assume the rays $r_1, r_2$ are non-Stokes  with respect to the lattice $\Lambda(\omega_1,\omega_2)$. Then for $\epsilon \in \mathbb{H}_{r_1}\cap \mathbb{H}_{r_2}$ and small enough $\delta>0$ we have
    
    \begin{equation}
    \begin{split}
        K_{i}^{r_2(\delta)}(\epsilon \b \omega_1,\omega_2)-K_{i}^{r_1(\delta)}(\epsilon \b \omega_1,\omega_2)&=2\pi \mathrm{i}\sum_{\omega\in \Delta\cap \Lambda^*(\omega_1,\omega_2)}a_i\cdot \log(1-e^{-\omega/\epsilon})\\
         F^{r_2(\delta)}(\epsilon,\omega_1,\omega_2)-F^{r_1(\delta)}(\epsilon,\omega_1,\omega_2)&=2\pi \mathrm{i}\sum_{\omega \in \Delta\cap \Lambda^*(\omega_1,\omega_2)}\frac{\partial}{\partial \epsilon}\left(\epsilon\,  \mathrm{Li}_2(e^{-\omega/\epsilon})\right)\,,
    \end{split}
    \end{equation}
    where we write $\omega=a_1\omega_1+a_2\omega_2$. \qed
\end{prop}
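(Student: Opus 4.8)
The plan is to realise each difference as an integral over a single closed contour and to evaluate it by the residue theorem, using the integral representations \eqref{works} and \eqref{depot}. Fix $\epsilon\in\bH_{r_1}\cap\bH_{r_2}$ and a sufficiently small $\delta>0$; since the detour integrals are independent of $\delta$ by Proposition \ref{last}, it suffices to work with one convenient value. I would build the closed contour $C_{R,\rho}$ which runs outward along $r_2(\delta)$ from radius $\rho$ to radius $R$, returns along a circular arc of radius $R$ lying in $\Delta$, comes back inward along $r_1(\delta)$, and closes along a small arc of radius $\rho$. Because $r_1,r_2$ are taken in clockwise order and $\Delta$ is convex, $C_{R,\rho}$ is positively oriented, so that
\[
\int_{r_2(\delta)}-\int_{r_1(\delta)}\;=\;\oint_{C_{R,\rho}}-\int_{\mathrm{arc}_R}-\int_{\mathrm{arc}_\rho},
\]
with $\oint_{C_{R,\rho}}$ given by $2\pi\mathrm{i}$ times the sum of the residues of the integrand at the enclosed lattice points, and the limits $\rho\to 0$, $R\to\infty$ to be taken afterwards.

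Two analytic inputs make the boundary arcs disappear. For $\eta\in\overline{\Delta}\setminus\{0\}$ one has $\Re(\eta/\epsilon)>0$ on the whole closed sector, since $\epsilon\in\bH_{r_1}\cap\bH_{r_2}$ and $\Delta$ is convex; hence $e^{-\eta/\epsilon}\to 0$ and both $\mathrm{Li}_1(e^{-\eta/\epsilon})$ and $\mathrm{Li}_2(e^{-\eta/\epsilon})$ decay exponentially as $|\eta|\to\infty$ in $\Delta$. Against this, $\mathscr{f}$ and $\mathscr{k}_i$ grow only polynomially away from their poles, so choosing a sequence $R_n\to\infty$ of radii staying a fixed distance from $\Lambda^*(\omega_1,\omega_2)$ kills the arc at infinity. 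At the origin $\mathscr{f},\mathscr{k}_i$ are holomorphic while $\mathrm{Li}_1(e^{-\eta/\epsilon})\sim-\log(\eta/\epsilon)$ and $\mathrm{Li}_2(e^{-\eta/\epsilon})\to\mathrm{Li}_2(1)$, so the inner arc contributes $O(\rho\log\rho)\to 0$. The same exponential decay shows that the resulting sum over $\omega\in\Delta\cap\Lambda^*(\omega_1,\omega_2)$ converges absolutely.

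It remains to identify the enclosed poles and compute residues. The detour convention of Proposition \ref{last} (arcs of angle $<\pi$) is arranged precisely so that, for small $\delta$, $C_{R,\rho}$ encircles exactly the nonzero lattice points of the open sector $\Delta$, each once. For the $K_i$ integral the integrand is $\mathrm{Li}_1(e^{-\eta/\epsilon})\,\mathscr{k}_i(\eta)$; writing $\mathscr{k}_i=\partial_{\omega_i}\mathscr{h}$ as in \eqref{curlydef2} and differentiating the Weierstrass product for $\sigma$, one finds that $\mathscr{k}_i$ has a simple pole at $\omega=a_1\omega_1+a_2\omega_2$ with residue $-a_i$. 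Since $\mathrm{Li}_1(z)=-\log(1-z)$, the residue of the integrand is $a_i\log(1-e^{-\omega/\epsilon})$, giving the first formula. For the $F$ integral the integrand is $\mathrm{Li}_2(e^{-\eta/\epsilon})\,\mathscr{f}(\eta)$; from \eqref{curlydef1} and the local expansions $\zeta(\eta)\sim(\eta-\omega)^{-1}$, $\wp(\eta)\sim(\eta-\omega)^{-2}$ one reads off the principal part $\mathscr{f}(\eta)=-\omega(\eta-\omega)^{-2}+(\eta-\omega)^{-1}+\cdots$. Using $\tfrac{d}{d\eta}\mathrm{Li}_2(e^{-\eta/\epsilon})=-\epsilon^{-1}\mathrm{Li}_1(e^{-\eta/\epsilon})$, the residue at $\omega$ equals $\mathrm{Li}_2(e^{-\omega/\epsilon})+\tfrac{\omega}{\epsilon}\mathrm{Li}_1(e^{-\omega/\epsilon})$, which is exactly $\tfrac{\partial}{\partial\epsilon}\big(\epsilon\,\mathrm{Li}_2(e^{-\omega/\epsilon})\big)$. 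Summing over the enclosed lattice points and multiplying by $2\pi\mathrm{i}$ yields the second formula.

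I expect the main obstacle to be the rigorous bookkeeping at the two ends of the contour: controlling the arc at infinity uniformly (selecting admissible radii that avoid the lattice and using the exponential decay of the polylogarithms to beat the polynomial growth of the elliptic functions), together with verifying that the detour arcs hugging the boundary rays $r_1,r_2$ do not change the count of enclosed poles, so that the sum is genuinely indexed by $\Delta\cap\Lambda^*(\omega_1,\omega_2)$. The residue computations themselves are routine once the principal parts of $\mathscr{f}$ and $\mathscr{k}_i$ are in hand.
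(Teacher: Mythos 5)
Your overall strategy --- close the contour inside the sector, apply the residue theorem, and show the two boundary arcs vanish --- is exactly the paper's proof, and your residue computations are correct: $\mathscr{k}_i$ has residue $-a_i$ at $\omega$, and the principal part of $\mathscr{f}$ gives the residue $\mathrm{Li}_2(e^{-\omega/\epsilon})+\tfrac{\omega}{\epsilon}\mathrm{Li}_1(e^{-\omega/\epsilon})=\partial_\epsilon\big(\epsilon\,\mathrm{Li}_2(e^{-\omega/\epsilon})\big)$. One pleasant difference: you evaluate the double-pole residue of $\mathrm{Li}_2(e^{-\eta/\epsilon})\,\mathscr{f}(\eta)$ directly, whereas the paper first integrates by parts (using $\tilde{\wp}=-\mathrm{d}\tilde{\zeta}/\mathrm{d}\eta$) to rewrite $F^{r(\delta)}$ as an integral against $\tilde{\zeta}$, which has only simple poles of residue $1$; the two computations agree, and yours is slightly more direct. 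Your treatment of the origin and of the pole-counting convention for the detours (arcs of angle $<\pi$ leave exactly the points of the open sector enclosed) also matches what is needed.

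However, one step fails as written: ``choosing a sequence $R_n\to\infty$ of radii staying a fixed distance from $\Lambda^*(\omega_1,\omega_2)$.'' No such sequence exists. For a rank-two lattice the number of points with $|\omega|\in[R,R+1]$ grows linearly in $R$, so the gaps between consecutive norms $|\omega|$ tend to zero; hence for any fixed $d>0$ and all sufficiently large $R$, every circle of radius $R$ (and likewise its portion inside $\Delta$, unless the sector is very thin) passes within $d$ of some lattice point. This is precisely the difficulty the paper's proof is built around: it closes the contour with detoured arcs $A_n(\delta)$, which follow circles of radius $R_n$ but detour around the $\delta$-discs centered at the lattice points they meet --- the same device as the detour rays $r(\delta)$ --- and then shows $K_i^{A_n(\delta)}, F^{A_n(\delta)}\to 0$ using the quasi-periodicity bounds of Lemma \ref{lemma2} together with the exponential decay of the polylogarithms on $\overline{\Delta}$. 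Alternatively, your circular arcs can be salvaged by letting the avoidance distance shrink with the radius: the set of radii in $[R,R+1]$ lying within $c/R$ of some norm $|\omega|$ has measure $O(c)$, so by pigeonhole there exist $R_n\in[n,n+1]$ whose circles keep distance $\geq c/R_n$ from $\Lambda^*(\omega_1,\omega_2)$; the integrand then grows at most like $R_n^2$ near the closest poles, which the factor $e^{-\kappa R_n}$ still beats. Either repair completes your argument; as stated, the arc-at-infinity estimate has a genuine gap.
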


%%%%%%%%%%%%%%%%%%%%%%%%%%%%%%%%%%%%%%%%%%%%%%%%%%%%%%%%%%%%%%%%%%%%%%%%%%%%%%%%%%%%%%%%%%%%%%%%
%%%%%%%%%%%%%%%%%%%%%%%%%%%%%%%%%%%%%%%%%%%%%%%%%%%%%%%%%%%%%%%%%%%%%%%%%%%%%%%%%%%%%%%%%%%%%%%%
%%%%%%%%%%%%%%%%%%%%%%%%%%%%%%%%%%%%%%%%%%%%%%%%%%%%%%%%%%%%%%%%%%%%%%%%%%%%%%%%%%%%%%%%%%%%%%%%
%%%%%%%%%%%%%%%%%%%%%%%%%%%%%%%%%%%%%%%%%%%%%%%%%%%%%%%%%%%%%%%%%%%%%%%%%%%%%%%%%%%%%%%%%%%%%%%%

\section{Free energy, DT invariants and the RH problem}\label{appsec}

In this section we  use the analytic results of the previous section to prove our main results. As in the introduction we consider a smooth projective CY threefold with an elliptic fibration $\pi\colon X \to B$  satisfying Assumptions \ref{CYass}.

\subsection{Free energy and its Borel sums}
 Recall from \eqref{free_energy} that the $g\geq 2$ part of the GW generating function in the fibre classes is given by the formal power series in $\lambda$
 \begin{equation}
F_{\text{GW}}(\lambda\b \tau)=-e(X)\cdot \sum_{g\geq 2} \frac{B_{2g} \, G_{2g-2}(\tau)}{4g(2g-2)}\cdot \left(\frac{\lambda}{2\pi}\right)^{2g-2},\end{equation}
whose coefficients depend on a K{\"a}hler parameter $\tau\in \bC$ satisfying $\Im(\tau)>0$. As before, we set $2\pi \mathrm{i}\epsilon=\lambda/2\pi$, and via the change of variables \eqref{free_energyres}   consider $F_{\text{GW}}(\epsilon \b \omega_1,\omega_2)$  as a function of $\epsilon\in \bC^*$ and $(\omega_1,\omega_2)$ lying in the region
\begin{equation}\label{reg2}
    \cR=\{(\omega_1,\omega_2)\in (\bC^*)^2: \Im (\omega_2/\omega_1)>0\}.
\end{equation} Recall from \eqref{FGWFrelres} that $F_{\text{GW}}(\epsilon \b \omega_1,\omega_2)$ and $F(\epsilon \b \omega_1,\omega_2)$ are related by a rescaling by $-e(X)/(2\pi \mathrm{i})^2$. The following result then follows  immediately by combining Proposition \ref{propone} and Theorem \ref{Borelsumthm}. 

\begin{thm}
 \label{FGWBsum}
Fix $(\omega_1,\omega_2)\in \cR$.
\begin{itemize}
    \item[(i)] 
The Borel transform of the series $F_{\text{GW}}(\epsilon\b\omega_1,\omega_2)$  is a meromorphic function on $\bC$ with double poles at the non-zero lattice points $\Lambda^*(\omega_1,\omega_2)$ and no other poles. 

\item[(ii)]Suppose a non-Stokes ray $r\subset \bC^*$ satisfies $\mu(r)<\infty$ with respect to the lattice $\Lambda(\omega_1,\omega_2)$. Then the Borel sum $F^r_{\text{GW}}(\epsilon\b\omega_1,\omega_2)$ exists for all $\epsilon \in \mathbb{H}_r$. \qed
\end{itemize}
\end{thm}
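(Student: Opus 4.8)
The plan is to deduce Theorem \ref{FGWBsum} directly from the two analytic results quoted just before it, namely Proposition \ref{propone} and Theorem \ref{Borelsumthm}, using only the elementary relation \eqref{FGWFrelres} between $F_{\text{GW}}$ and $F$. The key observation is that the passage from $F$ to $F_{\text{GW}}$ is a multiplication by the nonzero scalar constant $-e(X)/(2\pi\mathrm{i})^2$, and scaling a formal power series by a nonzero constant scales its Borel transform by the same constant. Hence every analytic statement about $F$ transfers verbatim to $F_{\text{GW}}$, with poles in the same locations and Borel summability along exactly the same rays. The entire proof is therefore a matter of tracking this constant through the definitions.

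For part (i), I would argue as follows. Write $\phi_{\text{GW}}(\eta\b\omega_1,\omega_2)$ for the Borel transform of $F_{\text{GW}}(\epsilon\b\omega_1,\omega_2)$ and recall that $f(\eta\b\omega_1,\omega_2)$ denotes the Borel transform of $F(\epsilon\b\omega_1,\omega_2)$. Since the Borel transform sends $\sum a_i\epsilon^i$ to $\sum a_i\eta^{i-1}/(i-1)!$ and is therefore $\bC$-linear, the relation \eqref{FGWFrelres} gives
\begin{equation}
\phi_{\text{GW}}(\eta\b\omega_1,\omega_2)=-\frac{e(X)}{(2\pi\mathrm{i})^2}\cdot f(\eta\b\omega_1,\omega_2).
\end{equation}
By Proposition \ref{propone}(i) the right-hand side has positive radius of convergence, and by Proposition \ref{propone}(ii) it extends to a meromorphic function on $\bC$ whose poles are exactly the double poles at the nonzero lattice points $\Lambda^*(\omega_1,\omega_2)$; multiplying by the nonzero constant $-e(X)/(2\pi\mathrm{i})^2$ changes neither the location nor the order of these poles, which establishes (i).

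For part (ii), I would fix a non-Stokes ray $r\subset\bC^*$ with $\mu(r)<\infty$ and take $\epsilon\in\bH_r$. Theorem \ref{Borelsumthm}(i) asserts that the Laplace integral $F^r(\epsilon\b\omega_1,\omega_2)=\int_r e^{-\eta/\epsilon}f(\eta\b\omega_1,\omega_2)\,\mathrm{d}\eta$ is absolutely convergent and holomorphic in $\epsilon\in\bH_r$, so that $F(\epsilon\b\omega_1,\omega_2)$ is Borel summable along $r$. Multiplying the integrand by the constant $-e(X)/(2\pi\mathrm{i})^2$ preserves absolute convergence and holomorphy, and by the displayed identity for the Borel transforms the resulting integral is precisely the Borel sum $F^r_{\text{GW}}(\epsilon\b\omega_1,\omega_2)$; explicitly $F^r_{\text{GW}}(\epsilon\b\omega_1,\omega_2)=-\frac{e(X)}{(2\pi\mathrm{i})^2}F^r(\epsilon\b\omega_1,\omega_2)$. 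This shows the Borel sum exists for every $\epsilon\in\bH_r$, completing (ii).

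Since each step is a direct transcription of an already-proven statement across a constant rescaling, there is no genuine obstacle in this proof; the theorem is essentially a corollary, and indeed the text preceding the statement already signals that it ``follows immediately by combining Proposition \ref{propone} and Theorem \ref{Borelsumthm}.'' The only point requiring the slightest care is the bookkeeping for the constant factor and the remark that multiplication by a nonzero scalar affects neither the pole structure in (i) nor the convergence and holomorphy in (ii); everything substantive has been relegated to the analytic results of Section \ref{analysis}, whose proofs are deferred to later sections.
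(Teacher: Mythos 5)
Your proof is correct and follows exactly the paper's route: the paper itself disposes of this theorem with the remark that it ``follows immediately by combining Proposition \ref{propone} and Theorem \ref{Borelsumthm}'' via the rescaling \eqref{FGWFrelres}, and your write-up simply makes the constant-factor bookkeeping explicit. Nothing is missing; the linearity of the Borel transform and the invariance of pole structure, absolute convergence, and holomorphy under multiplication by the nonzero constant $-e(X)/(2\pi\mathrm{i})^2$ are precisely the (trivial) points the paper leaves implicit.
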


Theorem \ref{FGWBsum} together with part (iv) of Theorem \ref{roth} implies that for almost all non-Stokes rays $r\subset \bC^*$ the Borel sum of $F_{\text{GW}}(\epsilon\b\omega_1,\omega_2)$ exists for $\epsilon\in \bH_r$.
Combining Proposition \ref{stokesjumpsprop} with \eqref{FGWFrelres} gives the following result relating the Borel sums along different rays.

\begin{prop}\label{stokesjumpsfreeenergy}
    Fix $(\omega_1,\omega_2)\in \cR$. Let $\Delta\subset \bC^*$ be a convex sector whose boundary consists of two  rays $r_1$ and $r_2$ taken in clockwise order.  Assume the rays $r_1, r_2$ are non-Stokes  and satisfy $\mu(r_i)<\infty$ with respect to the lattice $\Lambda(\omega_1,\omega_2)$.  Then
    \begin{equation}
         F^{r_2}_{\text{GW}}(\epsilon\b \omega_1,\omega_2)-F^{r_1}_{\text{GW}}(\epsilon\b\omega_1,\omega_2)=-\frac{e(X)}{2\pi \mathrm{i}}\cdot \sum_{\omega \in \Delta\cap \Lambda^*(\omega_1,\omega_2)}\frac{\partial}{\partial \epsilon}\left(\epsilon\,  \mathrm{Li}_2(e^{-\omega/\epsilon})\right)\,
    \end{equation}
    for all $\epsilon \in \mathbb{H}_{r_1}\cap \mathbb{H}_{r_2}$. \qed
\end{prop}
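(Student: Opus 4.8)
The plan is to deduce this directly from the rescaling identity \eqref{FGWFrelres} together with the Stokes-jump formula of Proposition \ref{stokesjumpsprop}, with the hypothesis $\mu(r_i)<\infty$ serving to identify the Borel sums with the detour integrals appearing there.

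First I would observe that since Borel summation along a fixed ray is a linear operation, the scalar relation \eqref{FGWFrelres} between the formal series $F_{\text{GW}}(\epsilon\b\omega_1,\omega_2)$ and $F(\epsilon\b\omega_1,\omega_2)$ passes to their Borel sums: for any non-Stokes ray $r$ with $\mu(r)<\infty$ and any $\epsilon\in\bH_r$,
\[F^r_{\text{GW}}(\epsilon\b\omega_1,\omega_2)=-\frac{e(X)}{(2\pi\mathrm{i})^2}\cdot F^r(\epsilon\b\omega_1,\omega_2),\]
where both Borel sums exist by Theorem \ref{FGWBsum}(ii) (equivalently Theorem \ref{Borelsumthm}(i)).

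Next, for the two boundary rays I would invoke Proposition \ref{last}: since $\mu(r_i)<\infty$ for $i=1,2$, the Borel sum $F^{r_i}(\epsilon\b\omega_1,\omega_2)$ coincides with the detour integral $F^{r_i(\delta)}(\epsilon\b\omega_1,\omega_2)$ for all sufficiently small $\delta>0$ and all $\epsilon\in\bH_{r_i}$. Restricting to the common domain $\epsilon\in\bH_{r_1}\cap\bH_{r_2}$ and fixing a single $\delta>0$ small enough for both rays, this rewrites the difference $F^{r_2}_{\text{GW}}-F^{r_1}_{\text{GW}}$ in terms of $F^{r_2(\delta)}-F^{r_1(\delta)}$.

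Finally I would apply the second line of Proposition \ref{stokesjumpsprop}, which evaluates
\[F^{r_2(\delta)}(\epsilon\b\omega_1,\omega_2)-F^{r_1(\delta)}(\epsilon\b\omega_1,\omega_2)=2\pi\mathrm{i}\sum_{\omega\in\Delta\cap\Lambda^*(\omega_1,\omega_2)}\frac{\partial}{\partial\epsilon}\left(\epsilon\,\mathrm{Li}_2(e^{-\omega/\epsilon})\right),\]
and multiply through by the scalar $-e(X)/(2\pi\mathrm{i})^2$. The numerical factors combine as $-e(X)/(2\pi\mathrm{i})^2\cdot 2\pi\mathrm{i}=-e(X)/(2\pi\mathrm{i})$, giving exactly the asserted identity. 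The entire substance of the argument is contained in the earlier results---Proposition \ref{stokesjumpsprop} supplies the jump and Proposition \ref{last} licenses replacing the Borel sums by detour integrals---so there is no genuine analytic obstacle at this stage; the only point requiring care is bookkeeping, namely ensuring that all three inputs are applied on the shared domain $\bH_{r_1}\cap\bH_{r_2}$ with one choice of $\delta$ that is simultaneously admissible for Proposition \ref{stokesjumpsprop} and for the detour identification of Proposition \ref{last}.
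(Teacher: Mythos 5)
Your proposal is correct and is essentially the paper's own argument: the paper proves this proposition in one line by ``combining Proposition \ref{stokesjumpsprop} with \eqref{FGWFrelres}'', which is exactly your chain of reasoning, with the identification $F^{r_i}=F^{r_i(\delta)}$ via Proposition \ref{last} (valid since $\mu(r_i)<\infty$) left implicit in the paper but spelled out by you. Your constant bookkeeping $-e(X)/(2\pi\mathrm{i})^2\cdot 2\pi\mathrm{i}=-e(X)/(2\pi\mathrm{i})$ is also correct.
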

This matches previous results on the Stokes jumps of the Borel sum of free energies and their relation to DT invariants. See for example \cite[Equation 4.55]{ASTT} or \cite[Equation 1.1 and 1.4]{IM}.

\subsection{Stability conditions and DT invariants}\label{stabDTsec}

  We consider the full triangulated subcategory $\cD(\pi)\subset \D^b\Coh(X)$ of the bounded derived category of coherent sheaves  consisting of objects  whose set-theoretic support is contained in a finite union of fibres of $\pi$. The Chern characters of such objects can be viewed as elements 
\begin{equation}
    \ch(E)=(\ch_2(E),\ch_3(E))\in N(\pi)=N_1(\pi)\oplus N_0(X),
\end{equation}
where $N_1(\pi)\subset N_1(X)$ consists of curve classes contracted by $\pi$. The group $N_0(X)$ is freely generated by the class of a point, which it is convenient to denote by $-\gamma_1$. The assumption that $\pi$ has integral fibres implies that $N_1(\pi)$ is freely generated by the class $\gamma_2$  of a fibre. Then 
\begin{equation}\label{che}\ch\colon K_0(\cD(\pi))\to N(\pi)=\bZ\gamma_1\oplus \bZ\gamma_2,
\end{equation}
sends a  rank $r$, degree $d$ bundle supported on a smooth fibre of $\pi$ to the class $-d\gamma_1+r\gamma_2$. The Riemann-Roch theorem shows that for any objects $A,B\in \cD(\pi)$ we have
\[\chi(A,B):=\sum_{i\in \bZ} \dim_{\bC} \Hom_X(A,B[i])=0.\]
Thus the Euler form for the category $\cD(\pi)$ is identically zero, and we therefore also equip the group $N(\pi)$ with the zero form $\<-,-\>=0$.

The definition of the subcategory $\cD(\pi)$ ensures that the standard t-structure on $\cD^b\Coh(X)$ induces a t-structure on $\cD(\pi)$. The heart $\cA(\pi)\subset \cD(\pi)$ consists of coherent sheaves on $X$ whose set-theoretic support is contained in a finite union of fibres of $\pi$. 
Fix an element $\tau\in \bC$ with $\Im(\tau)>0$. Then, as in \cite[Example 2.3 (iii)]{T}, there is  a unique stability condition on the category $\cD(\pi)$ whose heart is the subcategory $\cA(\pi)\subset \cD(\pi)$, and whose central charge $Z\colon K_0(\cD(\pi))\to \bC$ is the composition of the Chern character \eqref{che} with the map
\[Z\colon N(\pi)\to \bC, \qquad Z(a_1\gamma_1+a_2\gamma_2)=a_1+a_2\tau.\]

There is a standard action of the group $\bC$ on the space of stability conditions which rotates the central charge and shifts  the phases of the semistable objects. Applying this to the stability conditions constructed above we obtain for each point $(\omega_1,\omega_2)\in \cR$,  a stability condition, unique up to the action of even shifts, whose central charge is the composition of \eqref{che} with the map
 \begin{equation}\label{z}
 Z\colon N(\pi)\to \bC, \qquad Z(a_1\gamma_1+a_2\gamma_2)=a_1\omega_1+a_2\omega_2.\end{equation}

Since rotating stability conditions does not effect the subcategories of semistable objects, the 
calculation of Toda \cite[Theorem 6.9]{T} shows that the DT invariants for any of these stability conditions are given by 
  \begin{equation}\Omega(a_1\gamma_1+a_2\gamma_2)=-e(X), \qquad (a_1,a_2)\in \bZ^2\setminus\{0\},\end{equation}
  where $e(X)$ is the topological Euler characteristic of the complex projective variety $X$.

\subsection{Riemann-Hilbert problem}

Fix a  point $(\omega_1,\omega_2)\in \cR$. The data  introduced in the previous section defines what is called a BPS structure in \cite{RHDT}. Namely we have a finite-rank free abelian group $N(\pi)$ equipped with a skew-symmetric form $\<-,-\>$, a group homomorphism $Z\colon N(\pi)\to \bC$, and a map of sets $\Omega\colon N(\pi)\to \bZ$ which encodes the DT invariants.  Following \cite{RHDT}, and exactly as in \cite{B}, we now explain the steps to go from this data to a RH problem.

Since the skew-symmetric form $\<-,-\>$ on $N(\pi)$ is identically zero, in order to obtain a non-trivial RH problem we must first perform the doubling procedure of \cite[Section 2.8]{RHDT}. To do this we introduce the dual abelian group $N(\pi)^{\vee}=\Hom_{\bZ}(N(\pi),\bZ)$ and consider the lattice $\Gamma=N(\pi)\oplus N(\pi)^{\vee}$ equipped with the canonical skew-symmetric pairing \begin{equation}\label{non-degp}\langle-,-\rangle\colon \Gamma\times \Gamma\to \bZ, \qquad \langle (\gamma,\lambda),(\gamma',\lambda')\rangle=\lambda(\gamma')-\lambda'(\gamma).\end{equation}
We denote by $\gammac_i$ the basis element of $N(\pi)^{\vee}$ dual to $\gamma_i$. Thus $\langle \gammac_i,\gamma_j\rangle=\delta_{ij}$.
We extend the central charge map $Z\colon N(\pi)\to \bC$ defined by \eqref{z} arbitrarily to a homomorphism $Z\colon \Gamma\to \bC$. The choice of this extension will play no significant role below. We also extend the map of sets $\Omega\colon N(\pi)\to \bZ$ to $\Gamma$ by insisting that $\Omega(\gamma)=0$ unless $\gamma\in N(\pi)\subset \Gamma$. 

The resulting doubled BPS structure $(\Gamma,Z,\Omega)$ has several special properties identified in \cite{RHDT}. It is convergent because for large enough $R> 0$
 \begin{equation}\sum_{(a_1,a_2)\in \bZ^2\setminus\{0\}} \exp({-R|a_1\omega_1+a_2\omega_2|})\, <\infty.\end{equation}
 It is moreover uncoupled since $\{\gamma \in \Gamma : \Omega(\gamma)\neq 0\}\subset N(\pi)$ and $\langle \gamma_1,\gamma_2\rangle=0$ for $\gamma_1,\gamma_2\in N(\pi)$.
 We can then formulate a RH problem  exactly as in \cite{B}. As well as the BPS structure $(\Gamma,Z,\Omega)$ it depends on an element $\xi$ of the twisted torus
 \begin{equation}\{\xi\colon \Gamma\to \bC^*:\xi(\gamma_1+\gamma_2)=(-1)^{\langle \gamma_1,\gamma_2\rangle}\xi(\gamma_1) \xi(\gamma_2)\}\end{equation}
 called the constant term.
 
 Recall from Section \ref{DTRHintrosec} that a ray $r\subset \bC^*$ is called a Stokes ray if it contains a point of $\Lambda(\omega_1,\omega_2)$, and otherwise a non-Stokes ray. 
 The  RH problem involves holomorphic functions $X_{\gamma}^{r}\colon \bH_r\to \bC^*$ for each non-Stokes ray $r$ and each class $\gamma\in \Gamma$, where as before  $\bH_r\subset \bC^*$ denotes the half-plane centered on $r$. 
 Arguing as in \cite[Section 5.1]{RHDT} we can use the fact that $(\Gamma,Z,\Omega)$ is an uncoupled BPS structure to write for each $i=1,2$ 
 \begin{equation}
 X_{\gamma_i}^{r}(\epsilon)=\exp(-Z(\gamma_i)/\epsilon)\cdot \xi(\gamma_i), \qquad X_{\gammac_i}^{r}=\exp(-Z(\gammac_i)/\epsilon)\cdot \xi(\gammac_i)\cdot Y_{i}^{r}(\epsilon),
 \end{equation}
 with $Y_{i}^{r}\colon \bH_r\to \bC^*$ holomorphic. For $i=1,2$ we define
 \begin{equation}\ell_i\colon \Lambda(\omega_1,\omega_2)\to \bZ, \qquad \ell_i(a_1\omega_1+a_2\omega_2)=a_i.\end{equation}
 
  Choose a constant term $\xi\colon \Gamma\to \bC^*$ satisfying $\xi(\gamma_i)=1$ for $i=1,2$.  Then the  RH problem can be formulated as follows:
 
 \begin{problem}
 For each non-Stokes ray $r\subset \bC^*$ find holomorphic functions $Y_{i}^{r}\colon \bH_r\to \bC^*$ with $i=1,2$ such that the following statements hold.
 \begin{itemize}
 \item[(RH1)] If $\Delta\subset \mathbb{C}^*$ is a convex sector whose boundary consists of non-Stokes rays $r_1,r_2$ taken in clockwise order then
 \begin{equation}
 \label{jumps2}Y_{i}^{r_2}(\epsilon)=Y_{i}^{r_1}(\epsilon)\cdot \prod_{\omega\in \Delta(r_1,r_2)\cap \Lambda^*}\left(1-e^{- \omega/\epsilon}\right)^{-\ell_i(\omega)\cdot e(X)},\end{equation}
for $\epsilon\in \bH_{r_1}\cap \bH_{r_2}$ with $0<|\epsilon|\ll 1$.

\item[(RH2)] As $\epsilon\to 0$ in any closed subsector of $\bH_r$ we have $Y^r(\epsilon)\to 1$.
 \item[(RH3)] There is an $N>0$ such that as $\epsilon\to \infty$ in $\bH_r$ there is a bound $|\epsilon|^{-N}< |Y^r(\epsilon)|<|\epsilon|^N$.
\end{itemize}
\end{problem}

If this problem has a solution then it is unique \cite{B}. We shall instead consider what we call the weak RH problem 
in which we drop condition (RH3). The resulting solutions are unique up to multiplication of $Y_{i}^{r}$ by  arbitrary holomorphic functions $P_i\colon \bC\to \bC^*$ satisfying $P_i(0)=1$. 
 
 \subsection{Solution to the weak RH problem}
We again fix a point $(\omega_1,\omega_2)\in \cR$. Recall the functions $K_i^{r(\delta)}(\epsilon \b \omega_1,\omega_2)$ defined in Proposition \ref{last}. For each non-Stokes ray $r\subset \bC^*$ we define a function $Y_{i}^{r}\colon \bH_r\to \bC^*$ by

\begin{equation}
\label{hear}
\begin{split}
    Y_{i}^r(\epsilon)&:=\exp\Big( -\frac{e(X)}{2\pi \mathrm{i}} \cdot K_{i}^{r(\delta)}(\epsilon\b\omega_1,\omega_2)\Big)\\
    &=\exp\left(-\frac{e(X)}{2\pi\mathrm{i}}\int_{r(\delta)} \mathrm{Li}_1(e^{- \eta/\epsilon})\,\mathscr{k}_i(\eta\b\omega_1,\omega_2)\mathrm{d}\eta\right)\,.
\end{split}
\end{equation}
The integral is absolutely convergent, holomorphic in $\epsilon \in \mathbb{H}_{r}$, and does not depend on $0< \delta\ll 1$ by Proposition \ref{last}. As before, we remark that for almost all non-Stokes rays $r$ we have $\mu(r)<\infty$, and for such rays it follows from Proposition \ref{last} and Theorem \ref{Borelsumthm} that 
\begin{equation}\label{wRHsol}
    Y_{i}^r(\epsilon)=\exp\left(-\frac{e(X)}{2\pi \mathrm{i}}\int_r e^{-\eta/\epsilon}\, k_i(\eta\b\omega_1,\omega_2)\mathrm{d}\eta\right)\,.
\end{equation}

 \begin{thm}\label{RHthm} The functions $Y_{i}^r\colon \bH_r\to \bC^*$ give a solution to the weak RH problem. 
\end{thm}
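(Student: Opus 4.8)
The plan is to verify directly from the definition \eqref{hear} the properties required of a solution to the weak RH problem: that each $Y_i^r$ is a holomorphic $\bC^*$-valued function on $\bH_r$, that the family satisfies the jump condition (RH1), and that it satisfies the asymptotic condition (RH2). The first point is immediate, since Proposition \ref{last} already guarantees that $K_i^{r(\delta)}(\epsilon)$ converges absolutely, is holomorphic in $\epsilon\in\bH_r$, and is independent of $\delta$; as $Y_i^r$ is the exponential of this holomorphic function it is automatically holomorphic and nowhere vanishing. Thus the content of the theorem reduces to (RH1) and (RH2).

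For (RH1) I would argue formally from the definition. Taking logarithms in \eqref{hear}, the jump across a convex sector $\Delta$ bounded by non-Stokes rays $r_1,r_2$ in clockwise order is controlled by $K_i^{r_2(\delta)}(\epsilon)-K_i^{r_1(\delta)}(\epsilon)$, which by Proposition \ref{stokesjumpsprop} equals $2\pi\mathrm{i}\sum_{\omega\in\Delta\cap\Lambda^*(\omega_1,\omega_2)}a_i\log(1-e^{-\omega/\epsilon})$ with $\omega=a_1\omega_1+a_2\omega_2$ and $a_i=\ell_i(\omega)$. Multiplying by $-e(X)/2\pi\mathrm{i}$ and exponentiating, the factors of $2\pi\mathrm{i}$ cancel and the logarithms become the integer powers $(1-e^{-\omega/\epsilon})^{-\ell_i(\omega)e(X)}$, which is exactly the product in \eqref{jumps2}. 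Since $e(X)$ and $\ell_i(\omega)$ are integers there is no branch ambiguity, and the identity holds on $\bH_{r_1}\cap\bH_{r_2}$ for small enough $\delta$, in particular for $0<|\epsilon|\ll1$ as required.

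The real work is (RH2), i.e. showing $K_i^{r(\delta)}(\epsilon)\to0$ as $\epsilon\to0$ in a closed subsector $\Sigma\subset\bH_r$, whence $Y_i^r(\epsilon)\to1$. The geometry of $\Sigma$ yields a constant $c>0$ with $\Re(\eta/\epsilon)\geq c|\eta|/|\epsilon|$ for all $\eta$ on $r(\delta)$ and all $\epsilon\in\Sigma$, after shrinking $\delta$ if necessary; this uses that the detour arcs occur only around \emph{nonzero} lattice points and hence stay a bounded distance from the origin. I would then split the contour at $|\eta|=\eta_1$, with $\eta_1$ less than the modulus of the nearest nonzero lattice point, so the inner part is a straight segment. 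On the outer part the bound forces $|e^{-\eta/\epsilon}|\leq e^{-c\eta_1/|\epsilon|}\leq\tfrac12$ once $|\epsilon|$ is small, so $|\Li_1(e^{-\eta/\epsilon})|\leq 2e^{-c|\eta|/|\epsilon|}$; combined with the at-most-polynomial growth of $\mathscr{k}_i$ along the contour (a consequence of the quasi-periodicity of the Weierstrass functions, already implicit in the convergence statement of Proposition \ref{last}), the outer integral is dominated by $\int_{\eta_1}^{\infty}e^{-cs/|\epsilon|}(1+s^{p})\,\mathrm{d}s\to0$ for some fixed $p$. For the inner part I would rescale $\eta=|\epsilon|\,s\,\zeta/|\zeta|$ and use the crucial fact that $\mathscr{k}_i(\eta)=O(\eta^{2})$ near $\eta=0$, which follows from \eqref{expan} since $\mathscr{h}$, and hence $\mathscr{k}_i=\partial_{\omega_i}\mathscr{h}$, starts at order $u^{2}$. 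The inner integral then acquires a prefactor $|\epsilon|^{3}$ times $\int_0^{\eta_1/|\epsilon|}|\Li_1(e^{-\beta s})|\,s^{2}\,\mathrm{d}s$ with $|\beta|=1$ and $\Re(\beta)\geq c$, and this last integral is bounded uniformly because $\Li_1(e^{-\beta s})$ has only an integrable logarithmic singularity at $s=0$ and decays like $e^{-cs}$ at infinity. Hence the inner part is $O(|\epsilon|^{3})$ and $K_i^{r(\delta)}(\epsilon)\to0$.

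The main obstacle is precisely the behaviour of this inner integral near $\eta=0$: a naive dominated-convergence estimate fails, because for fixed small $\eta$ the factor $\Li_1(e^{-\eta/\epsilon})$ develops a $\log|\epsilon|$ singularity as $\epsilon\to0$, so one genuinely needs both the second-order vanishing of $\mathscr{k}_i$ at the origin and the rescaling in order to extract the saving powers of $\epsilon$. Everything else — holomorphy and non-vanishing, the jump (RH1), and the exponential control on the outer part of the contour — follows mechanically from Propositions \ref{last} and \ref{stokesjumpsprop} together with elementary bounds on $\Li_1$. I note that for the almost-all rays with $\mu(r)<\infty$ the conclusion of (RH2) is also consistent with the Borel-sum representation \eqref{wRHsol} and the standard fact that a Borel sum is asymptotic to its defining series, which here has no constant term; the detour argument above has the advantage of treating all non-Stokes rays uniformly.
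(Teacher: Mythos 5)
Your proof is correct, and two of its three parts coincide with the paper's: holomorphy and non-vanishing of $Y_i^r$ are read off from Proposition \ref{last}, and (RH1) follows by exponentiating the jump formula of Proposition \ref{stokesjumpsprop}, exactly as in the paper's proof of Theorem \ref{RHthm}. Where you genuinely diverge is (RH2). The paper's Lemma \ref{rh2} proves this by dominated convergence: on a closed subsector $S_r$ with $|\epsilon|<K$ one has $\Re(\eta/\epsilon)>C|\eta|/K$ along $r(\delta)$, which gives the $\epsilon$-independent dominating function $-\log\bigl(1-e^{-C|\eta|/K}\bigr)\,|\mathscr{k}_i(\eta)|$; this is integrable along $r(\delta)$ (its logarithmic singularity at $\eta=0$ is integrable, and it decays exponentially at infinity against the polynomial growth of $\mathscr{k}_i$), and since $\Li_1(e^{-\eta/\epsilon})\to 0$ pointwise for every fixed $\eta\neq 0$, the integral tends to $0$. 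Your alternative — splitting the contour at a fixed radius $\eta_1$, bounding the outer piece by $e^{-c|\eta|/|\epsilon|}$ against polynomial growth, and rescaling $\eta=|\epsilon|s$ on the inner piece using $\mathscr{k}_i(\eta)=O(\eta^2)$ — is also valid, and it buys something the paper's argument does not: a quantitative rate, namely $K_i^{r(\delta)}(\epsilon)=O(|\epsilon|^3)$ up to exponentially small terms, rather than bare convergence to zero.

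However, your stated reason for needing this machinery is incorrect: the ``naive'' dominated-convergence argument does \emph{not} fail, and it is precisely the paper's proof. For each \emph{fixed} $\eta\neq 0$ the factor $\Li_1(e^{-\eta/\epsilon})$ tends to $0$ as $\epsilon\to 0$ in $S_r$; no $\log|\epsilon|$ singularity appears pointwise — that behaviour lives only on the moving scale $|\eta|\sim|\epsilon|$, which is irrelevant for pointwise convergence. What dominated convergence requires is an $\epsilon$-independent integrable majorant near $\eta=0$, and the function $-\log\bigl(1-e^{-C|\eta|/K}\bigr)\,|\mathscr{k}_i(\eta)|$ supplies one, since a logarithmic singularity is integrable; neither the second-order vanishing of $\mathscr{k}_i$ at the origin nor the rescaling is needed for the qualitative statement (RH2). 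So your proof stands, but the sentence claiming the obstruction should be removed or replaced by the observation that your route yields a rate of convergence.
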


 \begin{proof}
Let  $\Delta\subset \mathbb{C}^*$ be a convex sector whose boundary consists of non-Stokes rays $r_1,r_2$ taken in clockwise order. By Proposition \ref{stokesjumpsprop} it follows that for $\epsilon \in \mathbb{H}_{r_1}\cap \mathbb{H}_{r_2}$
 \begin{equation}
 \begin{split}
     Y_{i}^{r_2}(\epsilon)&=Y_{i}^{r_1}(\epsilon)\cdot \exp\left(-e(X)\sum_{\omega\in \Delta\cap \Lambda^*}\ell_i(\omega)\cdot \log(1-e^{-\omega/\epsilon})\right)\\
     &=Y_{i}^{r_1}(\epsilon)\prod_{\omega\in \Delta\cap \Lambda^*}(1-e^{-\omega/\epsilon})^{-e(X)\cdot \ell_i(\omega)}
\end{split}
 \end{equation}
 so property (RH1) holds. Property (RH2) follows from the following lemma. %Thus $Y_{i}^{r}(\epsilon)$ gives a solution to the weak RH problem.
 \end{proof}

 \begin{lemma}\label{rh2}
    Fix a non-Stokes ray $r\subset \bC^*$ and a closed subsector  $S_r\subset \mathbb{H}_r$. Then 
    \begin{equation}
        \lim_{\epsilon \to 0, \;\epsilon \in S_r}K^{r(\delta)}_i(\epsilon\b \omega_1,\omega_2)=0.
    \end{equation}
\end{lemma}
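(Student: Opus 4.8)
The plan is to work directly with the detour-path integral representation \eqref{depot2}, namely
\[
K_i^{r(\delta)}(\epsilon\b\omega_1,\omega_2)=\int_{r(\delta)}\mathrm{Li}_1(e^{-\eta/\epsilon})\,\mathscr{k}_i(\eta\b\omega_1,\omega_2)\,\mathrm{d}\eta,\qquad \mathrm{Li}_1(z)=-\log(1-z),
\]
which by Proposition \ref{last} is absolutely convergent, holomorphic in $\epsilon\in\mathbb{H}_r$, and independent of small $\delta$. Write $r=\mathbb{R}_{>0}\cdot\zeta$ with $|\zeta|=1$ and set $\hat\zeta=\zeta$. The first observation is geometric: since $S_r$ is a \emph{closed} subsector strictly inside $\mathbb{H}_r$, there is $\psi>0$ with $|\arg(\eta/\epsilon)|\le \tfrac{\pi}{2}-\psi$ for all $\eta\in r$ and $\epsilon\in S_r$, and the detour arcs perturb $\arg\eta$ by an amount that stays in a compact range. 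Hence there is $c>0$ with $\Re(\eta/\epsilon)\ge c\,|\eta|/|\epsilon|$ for all $\eta\in r(\delta)$ and $\epsilon\in S_r$; in particular $\Re(\eta/\epsilon)\to\infty$, so $e^{-\eta/\epsilon}\to 0$ and the integrand tends to $0$ pointwise on $r(\delta)\setminus\{0\}$. I would then split $r(\delta)$ into a near region $\{|\eta|\le \delta_0\}$ and a far region $\{|\eta|>\delta_0\}$, where $\delta_0<\min(|\omega_1|,|\omega_2|)-\delta$ is chosen so that the near part of the path is exactly the straight segment $\eta=t\hat\zeta$, $t\in[0,\delta_0]$ (it meets no detour disc), and $\mathscr{k}_i$ is bounded there, say by $C_1$, being holomorphic at $0$.

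On the far region I would argue by dominated convergence, comparing to a fixed $\epsilon_0\in S_r$. For $|\epsilon|\le c\,|\epsilon_0|$ one has $\Re(\eta/\epsilon)\ge c|\eta|/|\epsilon|\ge |\eta|/|\epsilon_0|\ge \Re(\eta/\epsilon_0)$, so after shrinking $\delta_0$ so that $|e^{-\eta/\epsilon_0}|\le\tfrac12$ on the far region, the elementary bounds $|\mathrm{Li}_1(z)|\le 2|z|$ and $|\mathrm{Li}_1(z)|\ge \tfrac12|z|$ for $|z|\le\tfrac12$ give
\[
\big|\mathrm{Li}_1(e^{-\eta/\epsilon})\big|\le 2e^{-\Re(\eta/\epsilon)}\le 2e^{-\Re(\eta/\epsilon_0)}=2\big|e^{-\eta/\epsilon_0}\big|\le 4\big|\mathrm{Li}_1(e^{-\eta/\epsilon_0})\big|.
\]
Thus on the far region the integrand is dominated by $4\,|\mathrm{Li}_1(e^{-\eta/\epsilon_0})\,\mathscr{k}_i(\eta)|$, which is integrable by Proposition \ref{last}; since the integrand tends to $0$ pointwise, dominated convergence shows the far-region integral tends to $0$ as $\epsilon\to0$ in $S_r$. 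Using the $\mathrm{Li}_1$-form rather than $e^{-\eta/\epsilon_0}\,\mathscr{k}_i$ here is essential, because for Liouville directions (where $\mu(r)=\infty$) the latter need not be integrable.

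The near region is where the genuine subtlety lies and is the step I expect to be the main obstacle, because $\mathrm{Li}_1(e^{-\eta/\epsilon})$ has a logarithmic singularity as $\eta\to0$ that sharpens as $\epsilon\to0$, so no $\epsilon$-uniform dominating function exists. I would handle it by rescaling. On the straight segment, substituting $\eta=t\hat\zeta$ and then $t=|\epsilon|\rho$ gives
\[
\int_0^{\delta_0}\big|\mathrm{Li}_1(e^{-t\hat\zeta/\epsilon})\big|\,|\mathscr{k}_i(t\hat\zeta)|\,\mathrm{d}t
\le C_1\,|\epsilon|\int_0^{\delta_0/|\epsilon|}\big|\mathrm{Li}_1(e^{-\rho e^{\mathrm{i}\theta}})\big|\,\mathrm{d}\rho,
\]
where $\theta=\arg(\hat\zeta/\epsilon)$ satisfies $|\theta|\le\tfrac{\pi}{2}-\psi$. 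The function $\rho\mapsto|\mathrm{Li}_1(e^{-\rho e^{\mathrm{i}\theta}})|$ behaves like $|\log\rho|$ near $\rho=0$ and decays like $e^{-\rho\cos\theta}$ as $\rho\to\infty$, so its integral over $(0,\infty)$ is finite and bounded by some $M^*<\infty$ uniformly for $\theta$ in the compact admissible range. Hence the near-region contribution is at most $C_1 M^*|\epsilon|\to0$. Combining the near- and far-region estimates yields $K_i^{r(\delta)}(\epsilon)\to0$ as $\epsilon\to0$ in $S_r$, which is the claim.
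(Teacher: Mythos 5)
Your proof is correct, but it takes a genuinely different route from the paper's. The paper runs a single dominated convergence argument over all of $r(\delta)$: for $\epsilon\in S_r$ with $|\epsilon|<K$ one has $\Re(\eta/\epsilon)>C|\eta|/K$ along $r(\delta)$, hence
\begin{equation*}
|\mathrm{Li}_1(e^{-\eta/\epsilon})\,\mathscr{k}_i(\eta)|\;\le\;-\log\bigl(1-e^{-C|\eta|/K}\bigr)\,|\mathscr{k}_i(\eta)|,
\end{equation*}
an $\epsilon$-independent majorant whose integrability over $r(\delta)$ follows by the arguments of Proposition \ref{weakRHsol}; dominated convergence then gives the limit $\int_{r(\delta)}\mathrm{Li}_1(0)\,\mathscr{k}_i\,\mathrm{d}\eta=0$ in one step. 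You instead split into a near and a far region: on the far region you dominate by $4|\mathrm{Li}_1(e^{-\eta/\epsilon_0})\,\mathscr{k}_i(\eta)|$ for a fixed $\epsilon_0$ — a nice touch, since integrability of that majorant is literally Proposition \ref{last} rather than a ``same argument as'' citation — and on the near region your rescaling $t=|\epsilon|\rho$ gives the quantitative bound $C_1M^*|\epsilon|$, uniform in the angle. So your argument is slightly longer but yields a rate on the near part and leans only on stated results.

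One substantive remark: your stated motivation for the split is wrong. You claim that near $\eta=0$ ``no $\epsilon$-uniform dominating function exists'' because the logarithmic singularity of $\mathrm{Li}_1(e^{-\eta/\epsilon})$ ``sharpens'' as $\epsilon\to0$. In fact the pointwise supremum over $\epsilon\in S_r$, $|\epsilon|\le K$ satisfies $\sup_\epsilon|\mathrm{Li}_1(e^{-\eta/\epsilon})|\le-\log(1-e^{-C|\eta|/K})=O(\log(1/|\eta|))$, which is integrable near $\eta=0$ (and is further tamed by $\mathscr{k}_i(\eta)=O(\eta)$); this envelope is exactly the paper's majorant. The blow-up region of the integrand shrinks with $\epsilon$, but its envelope does not worsen, so the near/far split is a valid but unnecessary detour. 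Two cosmetic slips: to get $|e^{-\eta/\epsilon_0}|\le\tfrac12$ on the far region you should take $|\epsilon_0|$ small (or $\delta_0$ large) — \emph{shrinking} $\delta_0$ works against you; and the condition guaranteeing the near part of $r(\delta)$ is a straight segment should involve $\min_{\omega\in\Lambda^*(\omega_1,\omega_2)}|\omega|$ rather than $\min(|\omega_1|,|\omega_2|)$, since e.g. $|\omega_1-\omega_2|$ can be smaller than both. Similarly, your claim that $e^{-\eta/\epsilon_0}\mathscr{k}_i$ ``need not be integrable'' on $r(\delta)$ for Liouville directions is false (along the detour path one stays distance $\ge\delta$ from all poles, so $\mathscr{k}_i$ grows at most polynomially there and the exponential wins); the genuine advantage of the $\mathrm{Li}_1$-form is only that its integrability is already on record.
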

\begin{proof}
    Given the closed sector $S_r$, we can assume that $\delta>0$ is small enough such that 
    \begin{equation}
        \text{Re}(\eta/\epsilon)>0 \text{ for all $\eta\in r(\delta)$ and all  $\epsilon \in S_r$\,.}
    \end{equation}
   More precisely, given $K>0$  there exists a constant $C>0$ such that 
    \begin{equation}
        \text{Re}(\eta/\epsilon)=\left|\frac{\eta}{\epsilon}\right|\cos(\arg(\eta/\epsilon))>C\cdot \left|\frac{\eta}{\epsilon}\right|>\frac{C}{K}\cdot |\eta|>0
    \end{equation}
    for all $\eta \in r(\delta)$ and $\epsilon \in S_r$ with $|\epsilon|<K$. This in particular implies (using that $|\log(1-z)|<-\log(1-|z|)$ for $|z|<1|$) that in the same range of parameters
    \begin{equation}
        |\mathrm{Li_1}(e^{-\eta/\epsilon})\, \mathscr{k}_i(\eta)|<-\log(1-|e^{-\eta/\epsilon}|)\, |\mathscr{k}_i(\eta)|<-\log(1-e^{-C|\eta|/K})\, |\mathscr{k}_i(\eta)|\,.
    \end{equation}
   
    Finally, by the same argument as in Proposition \ref{weakRHsol} one can show that 
    \begin{equation}
        -\int_{r(\delta)}\log(1-e^{-C|\eta|/K})|\mathscr{k}_i(\eta)||\mathrm{d}\eta|<\infty,
    \end{equation}
    so by applying the dominated convergence theorem, we can interchange limits and integrals and obtain
    \begin{equation}
    \begin{split}
         \lim_{\epsilon \to 0, \;\epsilon \in S_r}K_{i}^{r(\delta)}&=\int_{r(\delta)}\mathrm{Li}_1(0)\mathscr{k}_i(\eta,\omega_1,\omega_2)\mathrm{d}\eta=0\,.\\
    \end{split}
    \end{equation}
\end{proof}

Fix a non-Stokes ray $r\subset \bC^*$. For  $\epsilon \in \mathbb{H}_r$ we define
 \begin{equation}\label{taurdef}
    \tau^r_{\text{GW}}(\epsilon \b \omega_1,\omega_2):=\exp\Big(-\frac{e(X)}{(2\pi\mathrm{i})^2}\cdot F^{r(\delta)}(\epsilon\b \omega_1, \omega_2)\Big)\,,
\end{equation}
where  $0<\delta\ll 1$  and 
$F^{r(\delta)}(\epsilon\b \omega_1, \omega_2)$ is as in Proposition \ref{last}. Note  that by \eqref{FGWFrelres}
\begin{equation}
    \tau^r_{\text{GW}}(\epsilon \b \omega_1, \omega_2)=\exp\Big(F^{r(\delta)}_{\text{GW}}(\epsilon \b \omega_1,\omega_2)\Big)\,.
\end{equation}
In the case that $\mu(r)<\infty$ the following result relates the Borel sum of the free energy to the solution to the weak RH problem constructed above.

\begin{thm}
    For each $i=1,2$ there is a relation
    \begin{equation}\label{tauRH}
        \frac{\partial}{\partial \omega_i}\log\tau^r_{\text{GW}}(\epsilon \b \omega_1,\omega_2)=\frac{1}{2\pi \mathrm{i}}\cdot \frac{\partial}{\partial \epsilon}\log Y_{i}^{r}(\epsilon \b \omega_1,\omega_2)\,.
    \end{equation}
    \begin{proof}
        Using \eqref{taurdef} and \eqref{wRHsol}, showing that \eqref{tauRH} holds reduces to showing that 
        \begin{equation}
            \frac{\partial}{\partial \omega_i}F^{r(\delta)}(\epsilon \b \omega_1,\omega_2)=\frac{\partial}{\partial \epsilon} K_i^{r(\delta)}(\epsilon \b \omega_1,\omega_2)\,.
        \end{equation}
        This easily follows by differentiating under the integral sign and integrating by parts twice using \eqref{mod_zeta}, \eqref{mod_wp}, \eqref{curlydef1}, and \eqref{curlydef2}.
    \end{proof}
\end{thm}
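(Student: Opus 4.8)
The plan is to reduce the claimed identity \eqref{tauRH} to the single statement
\begin{equation}\label{planreduce}
\frac{\partial}{\partial \omega_i}F^{r(\delta)}(\epsilon \b \omega_1,\omega_2)=\frac{\partial}{\partial \epsilon} K_i^{r(\delta)}(\epsilon \b \omega_1,\omega_2),
\end{equation}
exactly as suggested by the definitions \eqref{taurdef} and \eqref{wRHsol}. Taking $\partial/\partial\omega_i$ of $\log\tau^r_{\text{GW}}$ brings down $-\frac{e(X)}{(2\pi\mathrm{i})^2}\cdot\frac{\partial}{\partial\omega_i}F^{r(\delta)}$, while taking $\partial/\partial\epsilon$ of $\log Y^r_i$ brings down $-\frac{e(X)}{2\pi\mathrm{i}}\cdot\frac{\partial}{\partial\epsilon}K_i^{r(\delta)}$; the ratio of prefactors is precisely the factor $1/(2\pi\mathrm{i})$ appearing in \eqref{tauRH}, so \eqref{planreduce} is equivalent to the theorem. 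I would state this reduction first and then devote the rest of the argument to \eqref{planreduce}.

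First I would justify differentiating under the integral sign in both \eqref{works2} and \eqref{depot2}. The integrands are $\mathrm{Li}_2(e^{-\eta/\epsilon})\,\mathscr{f}(\eta\b\omega_1,\omega_2)$ and $\mathrm{Li}_1(e^{-\eta/\epsilon})\,\mathscr{k}_i(\eta\b\omega_1,\omega_2)$ along the detour path $r(\delta)$, and the absolute and locally uniform convergence established in Proposition \ref{last}, together with the holomorphic dependence on $\epsilon$ there, supplies dominating bounds on compact subsets of $\bH_r$; the $\omega_i$-derivatives of $\mathscr{f}$ and $\mathscr{k}_i$ have the same pole structure (Proposition \ref{propone}(ii)) and the same exponential suppression from $\mathrm{Re}(\eta/\epsilon)>0$, so the hypotheses of the standard differentiation-under-the-integral theorem hold. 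I would note that the contour $r(\delta)$ itself does not move with $\epsilon$ or $\omega_i$ (for $\omega_i$ one keeps $\delta$ fixed and small enough to remain valid in a neighbourhood), so no boundary terms from a varying contour arise.

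The computational heart is then to verify the pointwise integrand identity after differentiation and two integrations by parts. Differentiating \eqref{works2} in $\omega_i$ gives $\int_{r(\delta)}\mathrm{Li}_2(e^{-\eta/\epsilon})\,\frac{\partial}{\partial\omega_i}\mathscr{f}(\eta)\,\mathrm{d}\eta$, and differentiating \eqref{depot2} in $\epsilon$ gives an integral whose integrand involves $\frac{\partial}{\partial\epsilon}\mathrm{Li}_1(e^{-\eta/\epsilon})$ times $\mathscr{k}_i(\eta)$. Using the polylogarithm relations $\frac{\partial}{\partial\eta}\mathrm{Li}_2(e^{-\eta/\epsilon})=-\tfrac{1}{\epsilon}\mathrm{Li}_1(e^{-\eta/\epsilon})$ and $\frac{\partial}{\partial\eta}\mathrm{Li}_1(e^{-\eta/\epsilon})=-\tfrac{1}{\epsilon}\mathrm{Li}_0(e^{-\eta/\epsilon})$ (equivalently $\epsilon\frac{\partial}{\partial\epsilon}$ acting as $\eta\frac{\partial}{\partial\eta}$ up to the relevant shift), I would transfer $\eta$-derivatives onto $\mathscr{f}$ and $\mathscr{k}_i$ by integrating by parts twice. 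The key algebraic input is the compatibility between the mixed partials of $\mathscr{h}$: from \eqref{mod_zeta}, \eqref{mod_wp}, \eqref{curlydef1}, \eqref{curlydef2} one has $\frac{\partial}{\partial\omega_i}\mathscr{f}=\frac{\partial}{\partial\omega_i}(2\tilde\zeta-u\tilde\wp)$ and $\frac{\partial}{\partial u}\mathscr{k}_i=\frac{\partial}{\partial u}\frac{\partial}{\partial\omega_i}\mathscr{h}=\frac{\partial}{\partial\omega_i}\tilde\zeta$, and the equality of mixed partials of the single potential $\mathscr{h}$ is what makes the two sides match. I expect the main obstacle to be purely bookkeeping: ensuring that the boundary terms produced by the two integrations by parts genuinely vanish. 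Along the infinite part of $r(\delta)$ this follows from the exponential decay of $\mathrm{Li}_k(e^{-\eta/\epsilon})$ as $|\eta|\to\infty$ with $\mathrm{Re}(\eta/\epsilon)>0$, while near the lattice points the arcs of $r(\delta)$ form closed detours so the contributions of the two endpoints of each arc cancel against the adjacent straight segments; I would check that the simple/double pole orders of $\mathscr{k}_i$ and $\mathscr{f}$ (Proposition \ref{propone}(ii)) are exactly low enough that no residual boundary contribution survives. Granting this, the integrand identity holds and \eqref{planreduce}, hence \eqref{tauRH}, follows.
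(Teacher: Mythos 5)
Your proposal is correct and follows essentially the same route as the paper: the same reduction to $\frac{\partial}{\partial \omega_i}F^{r(\delta)}=\frac{\partial}{\partial \epsilon}K_i^{r(\delta)}$, then differentiation under the integral sign and two integrations by parts, with the equality of mixed partials of the potential $\mathscr{h}$ (via \eqref{mod_zeta}, \eqref{mod_wp}, \eqref{curlydef1}, \eqref{curlydef2}) making the integrands match. You merely spell out the bookkeeping (polylogarithm derivative identities, vanishing boundary terms at $0$ and $\infty$, locally fixing the contour in $\omega_i$) that the paper's one-line proof leaves implicit.
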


%%%%%%%%%%%%%%%%%%%%%%%%%%%%%%%%%%%%%%%%%%%%%%%%%%%%%%%%%%%%%%%%%%%%%%%%%%%%%%%%%%%%%%%%%%%%%%%%
%%%%%%%%%%%%%%%%%%%%%%%%%%%%%%%%%%%%%%%%%%%%%%%%%%%%%%%%%%%%%%%%%%%%%%%%%%%%%%%%%%%%%%%%%%%%%%%%
%%%%%%%%%%%%%%%%%%%%%%%%%%%%%%%%%%%%%%%%%%%%%%%%%%%%%%%%%%%%%%%%%%%%%%%%%%%%%%%%%%%%%%%%%%%%%%%%
%%%%%%%%%%%%%%%%%%%%%%%%%%%%%%%%%%%%%%%%%%%%%%%%%%%%%%%%%%%%%%%%%%%%%%%%%%%%%%%%%%%%%%%%%%%%%%%%
\section{Borel transforms}\label{Boreltranssec}

In this section we prove Proposition \ref{propone} concerning the Borel transforms of our series. We fix a point  $(\omega_1,\omega_2)\in \cR$ throughout. 

\subsection{Hadamard product}
Consider again the power series
\begin{equation}H(\epsilon)=\sum_{g\geq 2} \frac{B_{2g} \,G_{2g-2}\, (2\pi \mathrm{i})^{2g}\epsilon^{2g-1}}{4g(2g-1)(2g-2)} \in \epsilon\bC[[\epsilon]].\end{equation}
 from \eqref{power1}. We have suppressed the dependence  on $\omega_1,\omega_2$ from the notation. The Borel transform is the series
\begin{equation}h(\eta)=\sum_{g\geq 2} \frac{B_{2g} \, G_{2g-2}\, (2\pi \mathrm{i})^{2g}\eta^{2g-2}}{2(2g-2)(2g)!}\in \bC[[\eta]].\end{equation}
 Following the approach of \cite[Section 3.1]{ASTT} we can write $h(\eta)$ as a Hadamard product of  series $h_1(\eta)$ and $h_2(\eta)$, where
\begin{equation}
\label{h1h2}h_1(\eta)=-\sum_{g\geq 2} \frac{B_{2g} \, (2\pi \mathrm{i})^{2g}\eta^{2g-2}}{2 (2g)!}, \qquad h_2(\eta)=-\sum_{g\geq 2} \frac{G_{2g-2}\,  \eta^{2g-2}}{2g-2}.\end{equation}

Using the defining generating series for the Bernoulli numbers we find that $h_1(\eta)$ is the Taylor expansion at the origin of a meromorphic function on $\bC$ with poles only at the points  $m\in \bZ\setminus\{0\}$. Indeed
\begin{equation}\label{h1sum}
    h_1(\eta)=(2\pi \mathrm{i})^2\left(\frac{1}{2(2\pi \mathrm{i}\eta)(1-e^{2\pi \mathrm{i}\eta})}+\frac{1}{2(2\pi \mathrm{i}\eta)^2} -\frac{1}{4(2\pi \mathrm{i}\eta)} +\frac{1}{24}\right).
\end{equation}

Moreover $h_2(\eta)$ is  the Taylor expansion at $\eta=0$ of the  function 
\begin{equation}
    \label{mod_h2}
\mathscr{h}(\eta)=\log \sigma(\eta)-\log(\eta)-\half G_2 \eta^2\end{equation}
introduced in Section \ref{elliptic}.
It follows from \cite[Lemma 3.2]{ASTT} that $h(\eta)$ is the Taylor expansion at $\eta=0$ of the function
$h(\eta)$ given by the anti-clockwise contour integral
\begin{equation}
    \label{countdown}
h(\eta)=\frac{1}{2\pi \mathrm{i}} \int_{ |s|=\tfrac{1}{2}} h_1(s)\,  h_2(\eta/s) \, \frac{\mathrm{d}s}{s}.\end{equation}
This expression is valid providing $|\eta|<\tfrac{1}{2} |\omega|$ for all nonzero lattice points $\omega \in \Lambda^*(\omega_1,\omega_2)$. This ensures that  the singularities of $h_2(\eta/s)$ all lie inside the contour $|s|=\tfrac{1}{2}$. Note that  the poles  of $h_1(s)$ always lie outside this contour.

\subsection{Borel transforms}
  Recall the  formal power series $F(\epsilon)$ and $K_i(\epsilon)$ defined by \eqref{power2}. Reall also the functions \begin{equation}\label{curlydef12}\mathscr{f}(\eta\b\omega_1,\omega_2)=2\tilde{\zeta}(\eta\b\omega_1,\omega_2)-\eta \tilde{\wp}(\eta\b\omega_1,\omega_2), \end{equation}
\begin{equation}\label{curlydef22}\mathscr{k}_i(\eta\b\omega_1,\omega_2)=\frac{\partial}{\partial \omega_i} \mathscr{h} (\eta\b\omega_1,\omega_2).\end{equation}
defined in Section \ref{elliptic}. They are holomorphic in a neighbourhood of $\eta=0$. 

\begin{prop}
\label{count}
    The Borel transforms of the  series $F(\epsilon)$ and $K_i(\epsilon)$ are the Taylor expansions  of  holomorphic functions $f(\eta)$ and $k_i(\eta)$ defined near $\eta=0$  by the expressions
   \begin{equation}
    \label{inte}f(\eta)=\frac{1}{2\pi \mathrm{i}} \int_{ |s|=\tfrac{1}{2}} h_1(s)\,  \mathscr{f}(\eta/s) \, \frac{\mathrm{d}s}{s^2}, \qquad k_i(\eta)=\frac{1}{2\pi \mathrm{i}} \int_{ |s|=\tfrac{1}{2}} h_1(s)\,  \mathscr{k}_i(\eta/s) \, \frac{\mathrm{d}s}{s}.\end{equation}
\end{prop}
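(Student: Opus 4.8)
The plan is to reduce the two contour-integral formulas in \eqref{inte} to the single Hadamard representation \eqref{countdown} for the Borel transform $h(\eta)$ of $H(\epsilon)$, by tracking how the two operations $F=\partial_\epsilon H$ and $K_i=\partial_{\omega_i}H$ of \eqref{power2} act at the level of Borel transforms. Writing $\mathcal{B}$ for the Borel transform and $H(\epsilon)=\sum_n c_n(\omega_1,\omega_2)\,\epsilon^n$ formally, I would first record two elementary identities. Since $\mathcal{B}$ touches only the powers of $\epsilon$ while the $c_n$ depend holomorphically on $\omega_1,\omega_2$, the operator $\partial_{\omega_i}$ commutes with $\mathcal{B}$, giving $k_i=\partial_{\omega_i}h$. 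For the $\epsilon$-derivative I would use the standard rules $\mathcal{B}[\epsilon\,G]=\int_0^\eta \mathcal{B}[G]$ and $\mathcal{B}[\epsilon\,\partial_\epsilon H]=\partial_\eta(\eta h)$; combining them with $\epsilon F=\epsilon\,\partial_\epsilon H$ yields $\int_0^\eta f=h+\eta h'$ and hence, after one differentiation,
\[ f=2h'+\eta h''. \]
(Either identity can equally well be checked by matching Taylor coefficients, using the explicit series for $h$, $h_1$, $\mathscr{f}$ and $\mathscr{k}_i$.)

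The second step is to feed these identities into \eqref{countdown}. Both $f=2h'+\eta h''$ and $k_i=\partial_{\omega_i}h$ are obtained by applying a differential operator in $\eta$ (resp.\ a derivative in $\omega_i$) to $h$, so I would differentiate \eqref{countdown} under the integral sign. For $k_i$ this is immediate: $h_1(s)$ is independent of $\omega_1,\omega_2$ and $h_2=\mathscr{h}$ by \eqref{mod_h2}, so $\partial_{\omega_i}h_2=\mathscr{k}_i$ by \eqref{curlydef22}, giving the stated formula for $k_i$ with weight $\mathrm{d}s/s$. For $f$ I would use $\partial_\eta h_2(\eta/s)=s^{-1}h_2'(\eta/s)$ and $\partial_\eta^2 h_2(\eta/s)=s^{-2}h_2''(\eta/s)$, and then rewrite $\eta\,s^{-2}=s^{-1}(\eta/s)$ so that the bracketed factor becomes $s^{-1}\big(2h_2'(\eta/s)+(\eta/s)\,h_2''(\eta/s)\big)$. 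The key algebraic input is the identity
\[ 2\mathscr{h}'(u)+u\,\mathscr{h}''(u)=\mathscr{f}(u), \]
which is exactly \eqref{curlydef12} rewritten via $\tilde\zeta=\mathscr{h}'$ and $\tilde\wp=-\mathscr{h}''$ from \eqref{mod_zeta} and \eqref{mod_wp}. Substituting $u=\eta/s$ turns the integrand into $h_1(s)\,\mathscr{f}(\eta/s)$, with the extra factor $s^{-1}$ absorbed into the measure, producing precisely the weight $\mathrm{d}s/s^2$ in \eqref{inte}.

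Finally I would settle holomorphy and the justification for differentiating under the integral. The representation \eqref{countdown} is valid for $|\eta|<\tfrac12|\omega|$ for every $\omega\in\Lambda^*(\omega_1,\omega_2)$, and on the fixed compact contour $|s|=\tfrac12$ this forces $\eta/s$ to stay inside the disc where $\mathscr{h}$, $\mathscr{f}$ and $\mathscr{k}_i$ are holomorphic (their nearest singularities being at the nonzero lattice points). Hence the integrand is jointly holomorphic in $\eta$ for small $\eta$ and continuous in $s$ on the contour, so the integrals in \eqref{inte} define holomorphic functions of $\eta$ near $0$, and the derivatives in $\eta$ and $\omega_i$ may be passed under the integral by the standard theorem on parameter-dependent contour integrals.

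I expect the main obstacle to be a mild one: keeping the bookkeeping of the power of $s$ correct in the $f$-computation (the passage from $\mathrm{d}s/s$ to $\mathrm{d}s/s^2$ via $\eta s^{-2}=s^{-1}(\eta/s)$) and confirming that the domain $|\eta|<\tfrac12|\omega|$ is preserved under the differential operators so that the identities $f=2h'+\eta h''$ and $k_i=\partial_{\omega_i}h$, valid as formal series, genuinely hold for the holomorphic functions defined by \eqref{countdown}. Once the joint holomorphy on $|s|=\tfrac12$ is in place, both points are routine, and comparing the resulting Taylor coefficients at $\eta=0$ with $\mathcal{B}[F]$ and $\mathcal{B}[K_i]$ completes the argument.
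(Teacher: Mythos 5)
Your proposal is correct and follows essentially the same route as the paper: the identity $f=2h'+\eta h''$ is exactly the paper's formula $f(\eta)=\tfrac{1}{\eta}\tfrac{\mathrm{d}}{\mathrm{d}\eta}\big(\eta^2\tfrac{\mathrm{d}}{\mathrm{d}\eta}h(\eta)\big)$ written out, and both arguments then differentiate the Hadamard representation \eqref{countdown} under the integral sign, using $\partial_{\omega_i}h_2=\mathscr{k}_i$ for the $k_i$ formula and $2\mathscr{h}'(u)+u\,\mathscr{h}''(u)=\mathscr{f}(u)$ with $u=\eta/s$ for the $f$ formula. The only cosmetic difference is that the paper verifies the Borel-transform derivative rule by checking monomials $H(\epsilon)=\epsilon^n$, whereas you derive it from the rules $\mathcal{B}[\epsilon G]=\int_0^\eta\mathcal{B}[G]$ and $\mathcal{B}[\epsilon\,\partial_\epsilon H]=\partial_\eta(\eta h)$.
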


\begin{proof}The expression for $k_i(\eta)$ follows immediately by differentiating \eqref{countdown} under the integral with respect to $\omega_i$. To obtain the expression for $f(\eta)$ 
   note first that if formal series $F(\epsilon), H(\epsilon)\in \epsilon\bC[[\epsilon]]$ have Borel transforms $f(\eta), h(\eta)\in \bC[[\eta]]$ respectively, then
\begin{equation}F(\epsilon)=\frac{\mathrm{d}}{\mathrm{d}\epsilon} H(\epsilon) \implies f(\eta)=\frac{1}{\eta} \frac{\mathrm{d}}{\mathrm{d}\eta} \bigg(\eta^2 \,\frac{\mathrm{d}}{\mathrm{d}\eta}h(\eta)\bigg).\end{equation}
Indeed, it is enough to check the case $H(\epsilon)=\epsilon^n$ when the relation becomes
\begin{equation} \frac{n\,\eta^{n-2}}{(n-2)!}=\frac{1}{\eta} \,\frac{\mathrm{d}}{\mathrm{d}\eta} \bigg(\eta^2 \frac{\mathrm{d}}{\mathrm{d}\eta}\bigg(\frac{\eta^{n-1}}{(n-1)!}\bigg)\bigg).\end{equation}
Next note that \eqref{mod_h} - \eqref{mod_wp} give
\begin{equation}\frac{1}{\eta} \frac{\mathrm{d}}{\mathrm{d}\eta} \bigg(\eta^2 \,\frac{\mathrm{d}}{\mathrm{d}\eta}\mathscr{h}\left(\frac{\eta}{s}\right)\bigg)=\frac{1}{s}\left(2\tilde{\zeta}\left(\frac{\eta}{s}\right)-\frac{\eta}{s}\tilde{\wp}\left(\frac{\eta}{s}\right)\right)=\frac{1}{s}\mathscr{f}\left(\frac{\eta}{s}\right).\end{equation}
The result then follows by differentiating \eqref{countdown} under the integral with respect to $\eta$. 
\end{proof}

\subsection{Explicit expressions}

The following result completes the proof of Proposition \ref{propone}.

\begin{prop}
    \label{down}
The functions $f(\eta)$ and $k_i(\eta)$  extend to meromorphic functions of $\eta\in \bC$ with poles precisely at the nonzero lattice points $\Lambda^*(\omega_1,\omega_2)$. These poles are double poles in the case of $f$ and simple poles in the case of $k_i$.
There are explicit expressions
\begin{equation}
    \label{curly}
f(\eta)=\sum_{m\geq 1} \frac{1}{m^3} \cdot \mathscr{f}\Big(\frac{\eta}{m}\Big), \qquad 
k_i(\eta)=\sum_{m\geq 1} \frac{1}{m^2} \cdot \mathscr{k}_i\Big(\frac{\eta}{ m}\Big), \end{equation}
where the two series converge uniformly and absolutely in $\eta$ on compact subsets of $\bC$.
\end{prop}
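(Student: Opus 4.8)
The plan is to take as starting point the Cauchy-integral formulas for $f(\eta)$ and $k_i(\eta)$ furnished by Proposition~\ref{count}, which are valid for $|\eta|<\tfrac12\min_{\omega\in\Lambda^*(\omega_1,\omega_2)}|\omega|$, and to push the contour $|s|=\tfrac12$ outwards to the circles $|s|=N+\tfrac12$, collecting the residues of $h_1(s)$ at the nonzero integers. First I would compute, from the explicit expression \eqref{h1sum}, that $\operatorname{Res}_{s=m}h_1(s)=-\tfrac{1}{2m}$ for every $m\in\bZ\setminus\{0\}$; consequently the integrand $h_1(s)\,\mathscr{f}(\eta/s)\,s^{-2}$ has residue $-\tfrac{1}{2m^3}\mathscr{f}(\eta/m)$ at $s=m$. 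Using the parity relation $\mathscr{f}(-u)=-\mathscr{f}(u)$ from \eqref{evenodd}, the residues at $s=m$ and $s=-m$ coincide, so they combine to $-m^{-3}\mathscr{f}(\eta/m)$; the same computation for $h_1(s)\,\mathscr{k}_i(\eta/s)\,s^{-1}$, now using that $\mathscr{k}_i$ is even, yields $-m^{-2}\mathscr{k}_i(\eta/m)$.

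Next, since for $|\eta|<\tfrac12\min_{\omega}|\omega|$ the poles of $\mathscr{f}(\eta/s)$ and $\mathscr{k}_i(\eta/s)$ all lie strictly inside $|s|<\tfrac12$, enlarging the contour to $|s|=N+\tfrac12$ crosses only the poles $s=\pm1,\dots,\pm N$ of $h_1$. The residue theorem then gives
\[
f(\eta)=\frac{1}{2\pi\mathrm{i}}\oint_{|s|=N+\frac12}h_1(s)\,\mathscr{f}(\eta/s)\,\frac{\mathrm{d}s}{s^2}+\sum_{m=1}^{N}\frac{1}{m^3}\,\mathscr{f}\Big(\frac{\eta}{m}\Big),
\]
and similarly for $k_i$. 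The crux is to show the boundary integral vanishes as $N\to\infty$. Here I would use that $h_1$ is bounded uniformly in $N$ on the half-integer circles --- because $(1-e^{2\pi\mathrm{i}s})^{-1}$ stays bounded away from the integer poles on such circles and $h_1(s)\to(2\pi\mathrm{i})^2/24$ as $|s|\to\infty$ --- together with the fact that $\mathscr{f}$ and $\mathscr{k}_i$ are holomorphic near $0$ and vanish there to orders $1$ and $2$ respectively, so that $\mathscr{f}(\eta/s)=O(|s|^{-1})$ and $\mathscr{k}_i(\eta/s)=O(|s|^{-2})$ on $|s|=N+\tfrac12$. Combined with the $s^{-2}$, resp.\ $s^{-1}$, factor, each integrand is $O(N^{-3})$ on a contour of length $O(N)$, so the boundary term is $O(N^{-2})$. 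This establishes \eqref{curly} for small $|\eta|$.

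It then remains to promote these identities to global statements. Fixing a compact $K\subset\bC$, for all $m$ exceeding some $M_K$ the argument $\eta/m$ lies in the disc on which $\mathscr{f},\mathscr{k}_i$ are holomorphic and vanish at $0$, so the corresponding tail terms are bounded by $C_K\, m^{-4}$ uniformly on $K$; hence both series converge absolutely and locally uniformly. Since the finitely many remaining terms are meromorphic with poles contained in $\bigcup_{m\geq1}m\Lambda^*(\omega_1,\omega_2)=\Lambda^*(\omega_1,\omega_2)$, each series defines a meromorphic function on $\bC$ with poles confined to the nonzero lattice points, and by the identity theorem it coincides with $f$, resp.\ $k_i$, already matched with it near $\eta=0$. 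Finally, to pin down the pole orders I would isolate, at a lattice point $\omega_0=a_1\omega_1+a_2\omega_2$ with $d=\gcd(a_1,a_2)$, exactly the singular terms $m\mid d$ and sum their principal parts: using that $\mathscr{f}$ has leading Laurent coefficient $-\lambda$ at each pole $\lambda\in\Lambda^*$ and $\mathscr{k}_i$ has residue $-b_i$ at $\lambda=b_1\omega_1+b_2\omega_2$, one finds that $f$ has a genuine double pole at $\omega_0$ with leading coefficient $-\omega_0\sum_{m\mid d}m^{-2}\neq0$, while $k_i$ has a simple pole. The main obstacle throughout is the uniform-in-$N$ control of $h_1$ on the enlarging circles needed to kill the boundary integral.
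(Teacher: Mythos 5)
Your proof is correct and follows essentially the same route as the paper's: push the Cauchy-integral contour of Proposition \ref{count} out to large half-integer contours (the paper uses the squares $C_N$, you use circles $|s|=N+\tfrac12$ --- immaterial, since both stay distance $\tfrac12$ from the integers), pair the residues of $h_1$ at $s=\pm m$ using the parity \eqref{evenodd}, kill the boundary term by the same $O(N^{-2})$ estimate coming from the uniform boundedness of $h_1$ and the vanishing of $\mathscr{f},\mathscr{k}_i$ at the origin, and then obtain global meromorphy from uniform absolute convergence of the tail on compacts. If anything, your final step is slightly more careful than the paper's, since you compute the leading Laurent coefficients $-\omega_0\sum_{m\mid d}m^{-2}$ and $-a_i\sum_{m\mid d}m^{-2}$ to rule out cancellation among the finitely many singular terms, a point the paper leaves implicit.
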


\begin{proof}
We show the result for $f$, since a similar argument applies to $k_i$.  For each integer $N>0$ we consider the square contour
\begin{equation} C_{N}=\big\{s\in \bC: \max \big(|\text{Re}(s)|,|\text{Im}(s)|\big)=N+\tfrac{1}{2}\big\},\end{equation}
taken with the anti-clockwise orientation.
Take $\eta\in \bC$ such that $|\eta|<\tfrac{1}{2}|\omega|$ for all nonzero lattice points $\omega\in \Lambda^*(\omega_1,\omega_2)$.

Note that the function $h_1(s)$ has a simple pole at each  point $m\in \mathbb{Z}\setminus\{0\}$ with residue $-1/2m$. Moving the contour in \eqref{inte} and using $\mathscr{f}(-\eta)=-\mathscr{f}(\eta)$  therefore shows that for any  integer $N>0$
\begin{equation}\label{parres}
f(\eta)-\frac{1}{2\pi \mathrm{i}} \int_{C_{N}} h_1(s)\,  \mathscr{f}(\eta/s) \, \frac{\mathrm{d}s}{s^2}=\sum_{m=1}^N \frac{1}{m^3} \cdot \mathscr{f}\Big(\frac{\eta}{m}\Big).
\end{equation}
On the other hand, one easily checks using the power expansion of $\mathscr{f}(\eta)$ at $\eta=0$ that 
\begin{equation}
    \mathscr{f}(\eta/s)=\mathcal{O}(1/s), \quad \text{as $|s|\to \infty$},
\end{equation} 
while 
\begin{equation}
    \left|\frac{1}{2(2\pi \mathrm{i} s)(1-e^{2\pi \mathrm{i}s})}\right|<\frac{C}{N+\tfrac{1}{2}}, \quad \text{for $s\in C_N$,}
\end{equation}
for some $C>0$ independent of $N$. It then follows from \eqref{h1sum} that there is some constant $D>0$ such that for all $N$ sufficiently large
\begin{equation}
    \left|\,\int_{C_{N}} h_1(s)\,  \mathscr{f}(\eta/s) \, \frac{\mathrm{d}s}{s^2}\,\right|<\frac{D}{(N+\tfrac{1}{2})^{2}}\,.
\end{equation}
Thus the integral on the left-hand side of \eqref{parres} tends to $0$ as $N\to \infty$, and \eqref{curly} holds. 

It remains to show that the series \eqref{curly} defines a meromorphic function on $\bC$ with double poles exactly at the points of  $\Lambda^*(\omega_1,\omega_2)$.  Note that $\mathscr{f}(\eta)$ has poles only at the points $\Lambda^*(\omega_1,\omega_2)$ and in particular is holomorphic at $\eta=0$.  

Let $D\subset \mathbb{C}$ be  small disc centered at $0$ such that $D\cap\Lambda^*(\omega_1,\omega_2)=\emptyset$ and $K\subset \bC$ a compact subset. For $\eta\in K$ and $M>0$ sufficiently large we have  
$\eta/m\in D$  for $m\geq M$.
In particular, the functions $\mathscr{f}(\eta/m)$   are holomorphic and uniformly  bounded for $m\geq M$ and $\eta \in K$. It follows that the tail of the first sum in \eqref{curly} (i.e. the sum for $m\geq M$) converges uniformly and absolutely on $K$, and hence to a holomorphic function on $K$. Since any lattice point $\omega \in \Lambda^*(\omega_1,\omega_2)$ is a double pole of the function $\mathscr{f}(\eta/m)$ for a finite but non-empty set of positive integers $m$, the resulting $f(\eta)$ is a meromorphic function with double poles at $\Lambda^*(\omega_1,\omega_2)$. 
\end{proof}

%%%%%%%%%%%%%%%%%%%%%%%%%%%%%%%%%%%%%%%%%%%%%%%%%%%%%%%%%%%%%%%%%%%%%%%%%%%%%%%%%%%%%%%%%%%%%%%%
%%%%%%%%%%%%%%%%%%%%%%%%%%%%%%%%%%%%%%%%%%%%%%%%%%%%%%%%%%%%%%%%%%%%%%%%%%%%%%%%%%%%%%%%%%%%%%%%
%%%%%%%%%%%%%%%%%%%%%%%%%%%%%%%%%%%%%%%%%%%%%%%%%%%%%%%%%%%%%%%%%%%%%%%%%%%%%%%%%%%%%%%%%%%%%%%%
%%%%%%%%%%%%%%%%%%%%%%%%%%%%%%%%%%%%%%%%%%%%%%%%%%%%%%%%%%%%%%%%%%%%%%%%%%%%%%%%%%%%%%%%%%%%%%%%

\section{Borel summability and detour integrals}\label{proofssec}

In this section we collect the results needed to show that the series $K_i(\epsilon\b \omega_1,\omega_2)$ and $F(\epsilon\b \omega_1,\omega_2)$ are Borel summable along almost all non-Stokes rays $r\subset \bC^*$. More precisely, the Borel sum exists for non-Stokes rays $r$ that, in the sense of Section \ref{Borelssummary}, have finite irrationality measure $\mu(r)<\infty$ with respect to the lattice $\Lambda(\omega_1,\omega_2)$. For a general non-Stokes ray $r$, we define  integrals along certain detour paths $r(\delta)$ which coincide with the Borel sums when $\mu(r)<\infty$. 

\subsection{Key lemmas}
We begin with the  following useful lemmas:

\begin{lemma}\label{lemma2} Fix $(\omega_1,\omega_2)\in \cR$ and take $\omega=a_1\omega_1+a_2\omega_2\in \Lambda(\omega_1,\omega_2)$. Then the functions  $\wp(\eta\b\omega_1,\omega_2)$, $\zeta(\eta\b \omega_1,\omega_2)$ and $\rho_i(\eta \b \omega_1,\omega_2):=\partial_{\omega_i}\log(\sigma(\eta \b \omega_1,\omega_2))$ satisfy

\begin{equation}\label{eq:periodicity}
    \begin{split}
\wp(\eta+\omega\b\omega_1,\omega_2)&=\wp(\eta\b\omega_1,\omega_2),\\
\zeta(\eta+\omega\b\omega_1,\omega_2)&=\zeta(\eta\b\omega_1,\omega_2)+2(a_1\eta_1+a_2\eta_2), \quad \text{where} \quad \eta_i=\zeta(\omega_i/2\b\omega_1,\omega_2),\\
        \rho_i(\eta+\omega\b \omega_1,\omega_2)&=\rho_i(\eta\b \omega_1,\omega_2)-a_i\zeta(\eta\b \omega_1,\omega_2)+\sum_{j=1,2}a_j(-a_i\eta_j+(2\eta+\omega)\partial_{\omega_i}\eta_j)\,.\\
    \end{split}
\end{equation}
\end{lemma}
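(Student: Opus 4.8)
The plan is to establish the three quasi-periodicity relations in \eqref{eq:periodicity} by reducing everything to the classical quasi-periodicity of the Weierstrass zeta function and then differentiating in the lattice parameters $\omega_1,\omega_2$. Recall the standard facts that $\wp$ is $\Lambda$-periodic, that $\zeta(\eta+\omega_i\b\omega_1,\omega_2)=\zeta(\eta\b\omega_1,\omega_2)+2\eta_i$ with $\eta_i=\zeta(\omega_i/2\b\omega_1,\omega_2)$ (Legendre's setup), and that $\sigma$ satisfies $\sigma(\eta+\omega_i)=-e^{2\eta_i(\eta+\omega_i/2)}\sigma(\eta)$. The first relation is immediate from periodicity of $\wp$. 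For the second, I would iterate the single-period shift: writing $\omega=a_1\omega_1+a_2\omega_2$ and applying the shift by $\omega_1$ a total of $a_1$ times and by $\omega_2$ a total of $a_2$ times, the increments $2\eta_i$ add up to $2(a_1\eta_1+a_2\eta_2)$, using that each intermediate shift leaves the increment unchanged because $\zeta$'s quasi-period is independent of the base point. (One should check the $a_i<0$ cases by shifting in the opposite direction, but this only flips signs consistently.)

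\emph{The main step} is the third relation, for $\rho_i=\partial_{\omega_i}\log\sigma$. First I would derive the quasi-periodicity of $\log\sigma$ directly from the transformation law for $\sigma$ under a full lattice shift: iterating the single-period rule gives
\begin{equation}\label{logsigmashift}
\log\sigma(\eta+\omega\b\omega_1,\omega_2)=\log\sigma(\eta\b\omega_1,\omega_2)+(a_1\eta_1+a_2\eta_2)(2\eta+\omega)+\text{const},
\end{equation}
up to an additive constant (a multiple of $\pi\mathrm{i}$ coming from the signs $-1$) which is annihilated by $\partial_{\omega_i}$. Then I would apply $\partial_{\omega_i}$ to both sides of \eqref{logsigmashift}. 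The subtlety is that the left-hand side is $\rho_i$ evaluated at the shifted argument $\eta+\omega$, and since $\omega=a_1\omega_1+a_2\omega_2$ itself depends on $\omega_i$, the chain rule produces an extra term. Concretely,
\begin{equation}\label{chainrule}
\partial_{\omega_i}\big[\log\sigma(\eta+\omega\b\omega_1,\omega_2)\big]=\rho_i(\eta+\omega\b\omega_1,\omega_2)+a_i\cdot\frac{\partial}{\partial\eta}\log\sigma(\eta+\omega\b\omega_1,\omega_2),
\end{equation}
where the second term uses $\partial_{\omega_i}\omega=a_i$ and that $\partial_\eta\log\sigma=\zeta$.

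Combining \eqref{chainrule} with the $\omega_i$-derivative of the right-hand side of \eqref{logsigmashift} then yields the result: the $\rho_i$ term matches, the $-a_i\zeta(\eta+\omega)$ term from transposing the chain-rule contribution combines with periodicity of $\zeta$ (the second relation) to produce the $-a_i\zeta(\eta\b\omega_1,\omega_2)$ appearing in the claim, and differentiating $(a_1\eta_1+a_2\eta_2)(2\eta+\omega)$ in $\omega_i$ gives precisely $\sum_{j}a_j\big(\partial_{\omega_i}\eta_j\cdot(2\eta+\omega)+\eta_j\cdot a_i\big)$ after accounting for $\partial_{\omega_i}\omega=a_i$. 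A careful bookkeeping of the $a_i\eta_j$ terms against the $-a_i\zeta$ shift term is what produces the stated formula. \emph{The hard part} will be keeping the chain-rule contributions straight, in particular correctly attributing each occurrence of $a_i$ either to the explicit $\omega_i$-dependence of $\sigma$ or to the implicit dependence through the argument $\eta+\omega$; I expect the bulk of the work to be this purely bookkeeping verification that all the $\eta_j$-type terms assemble into $\sum_{j}a_j(-a_i\eta_j+(2\eta+\omega)\partial_{\omega_i}\eta_j)$, with the $-a_i\zeta(\eta\b\omega_1,\omega_2)$ term emerging cleanly from the interplay between the chain rule and the second periodicity relation.
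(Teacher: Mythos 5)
Your proposal is correct and follows essentially the same route as the paper: take the logarithm of the full-lattice transformation law $\sigma(\eta+\omega\b\omega_1,\omega_2)=(-1)^{a_1+a_2+a_1a_2}e^{(2\eta+\omega)(a_1\eta_1+a_2\eta_2)}\sigma(\eta\b\omega_1,\omega_2)$, differentiate in $\omega_i$ with the chain-rule term $a_i\zeta(\eta+\omega)$ coming from $\partial_{\omega_i}\omega=a_i$, and then use the quasi-periodicity of $\zeta$ to convert $\zeta(\eta+\omega)$ into $\zeta(\eta)+2\sum_j a_j\eta_j$, after which the bookkeeping gives exactly the stated formula. The only cosmetic difference is that you derive the $\sigma$ transformation law by iterating the single-period rule, whereas the paper quotes it directly.
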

\begin{proof}
    The first and second identity follow from the well-known periodicity of $\wp$ and quasi-periodicity of $\zeta$. On the other hand, the $\sigma$ function satisfies
    \begin{equation}\label{sigmaper}
    \sigma(\eta+\omega\b \omega_1,\omega_2)=(-1)^{a_1+a_2+a_1a_2}e^{(2\eta+\omega)(a_1\eta_1+a_2\eta_2)}\sigma(\eta\b\omega_1,\omega_2)\,,
\end{equation}
from which the last identity follows by taking logs and  derivatives with respect to $\omega_i$. Indeed, differentiating the left-hand side of \eqref{sigmaper} gives
\begin{equation}
\partial_{\omega_i}\log(\sigma(\eta+\omega \b \omega_1,\omega_2))=a_i\zeta(\eta+\omega\b \omega_1,\omega_2)+\rho_i(\eta+\omega\b \omega_1,\omega_2)\,,
\end{equation}
while differentiating the right hand side gives
\begin{equation}
    \partial_{\omega_i}\left((2\eta+\omega)(a_1\eta_1+a_2\eta_2)+\log(\sigma(\eta\b \omega_1,\omega_2))\right)=\rho_i(\eta\b \omega_1,\omega_2)+\sum_{j=1}^2a_j(a_i\eta_j +(2\eta+\omega)\partial_{\omega_i}\eta_j)\,.
\end{equation}
The last identity of \eqref{eq:periodicity} then follows by applying the second identity  and reorganizing terms. 
\end{proof}

Recall the notion of the irrationality measure $\mu(\alpha)$ of a real number $\alpha\in \mathbb{R}$, and of the irrationality measure $\mu(r)$ of a ray $r\subset \bC^*$ with respect to a lattice $\Lambda(\omega_1,\omega_2)$ introduced in Section \ref{Borelssummary}. In the following, we will use that if $\alpha\in \bR\setminus \bQ$ and $n>\mu(\alpha)$ then 
\begin{equation}\label{keyinq}
    |\alpha-p/q|\geq 1/q^n
\end{equation}
for all $p,q\in \mathbb{Z}$ with $q$ sufficiently large. 
\begin{lemma}\label{lemma3} Fix $(\omega_1,\omega_2)\in \cR$ and let $r\subset \bC^*$ be a non-Stokes ray such that $\mu(r)<\infty$ with respect to the lattice $\Lambda(\omega_1,\omega_2)$. Then for $\epsilon \in \mathbb{H}_{r}$ the  integrals 
\begin{equation}\label{eq:finiteint}
    \int_{r} e^{-\eta/\epsilon}\mathscr{f}(\eta\b \omega_1,\omega_2)\mathrm{d}\eta, \quad \int_{r} e^{-\eta/\epsilon}\mathscr{k}_i(\eta\b \omega_1,\omega_2)\mathrm{d}\eta,
\end{equation}
are absolutely convergent.
\end{lemma}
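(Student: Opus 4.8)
The plan is to bound the two integrands pointwise along $r$ and show that the resulting bound is integrable. Two competing effects must be balanced: the exponential decay of $e^{-\eta/\epsilon}$ as $\eta\to\infty$ along $r$, against the growth of $\mathscr{f}$ and $\mathscr{k}_i$ near the lattice points $\Lambda^*(\omega_1,\omega_2)$, which the non-Stokes ray approaches arbitrarily closely. First I would record the exponential decay: since $\epsilon\in\bH_r$ and $\eta$ lies on $r$, the argument $\arg(\eta/\epsilon)$ stays in a closed subinterval of $(-\pi/2,\pi/2)$, so there is $\kappa=\kappa(\epsilon)>0$ with $\Re(\eta/\epsilon)\geq \kappa|\eta|$, giving $|e^{-\eta/\epsilon}|\leq e^{-\kappa|\eta|}$.

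Next I would establish a global growth bound. Writing $d(\eta)=\operatorname{dist}(\eta,\Lambda^*(\omega_1,\omega_2))$, I claim there is a constant $C$ with
\[|\mathscr{f}(\eta\b\omega_1,\omega_2)| + |\mathscr{k}_i(\eta\b\omega_1,\omega_2)| \leq C(1+|\eta|)^2\bigl(1 + d(\eta)^{-2}\bigr).\]
To prove this, reduce $\eta$ modulo the lattice to a fixed fundamental domain, $\eta=\eta'+\omega$ with $\omega=a_1\omega_1+a_2\omega_2$ and $|a_i|\leq C'(1+|\eta|)$, and apply the transformation formulas of Lemma \ref{lemma2}. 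The $\wp$-contribution is genuinely periodic, hence bounded globally by $C\max(1,d^{-2})$, while the $\zeta$- and $\rho_i$-contributions are only quasi-periodic: the correction terms in \eqref{eq:periodicity} (the $a_i\zeta$, $a_ia_j\eta_j$ and $(2\eta+\omega)\partial_{\omega_i}\eta_j$ pieces) are polynomial in $|\eta|$, which is the source of the factor $(1+|\eta|)^2$. The local pole orders (a double pole for $\mathscr{f}$, a simple pole for $\mathscr{k}_i$) are then absorbed into the factor $1+d^{-2}$.

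The Diophantine input enters through the distance function. Parametrize $r$ by $\eta=t\zeta$, $t>0$, where $\zeta=\omega_1+\alpha\omega_2$ with $\alpha\in\bR\setminus\bQ$ and $\mu(\alpha)=\mu(r)<\infty$ (the case $\pm\omega_2\in r$ is excluded, since then $r$ would be Stokes). Because $\omega_1,\omega_2$ are $\bR$-linearly independent,
\[d(t\zeta) \geq c\bigl(|t-a_1| + |t\alpha - a_2|\bigr)\]
for the nearest integers $a_1,a_2$. On the interval $t\in[q-\tfrac12,q+\tfrac12]$ the optimal first coordinate is $a_1=q$, and taking $a_2=p_q$ equal to the nearest integer to $q\alpha$ yields $d(t\zeta)\gtrsim |t-q|+\delta_q$ with $\delta_q:=|q\alpha-p_q|$. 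Fixing any $n>\mu(r)$, the inequality \eqref{keyinq} gives $\delta_q = q|\alpha - p_q/q| \geq q^{1-n}$ for all large $q$, hence $\delta_q^{-1}\leq q^{n-1}$.

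Finally I would assemble the estimate by splitting $\int_r$ into the pieces $t\in[q-\tfrac12,q+\tfrac12]$. On each piece the exponential factor is at most $e^{-\kappa(q-1/2)}$, the polynomial factor is at most $C(1+q)^2$, and
\[\int_{q-1/2}^{q+1/2}\bigl(1 + d(t\zeta)^{-2}\bigr)\,|\mathrm{d}\eta| \lesssim 1 + \int_0^{1/2}\frac{\mathrm{d}u}{(u+\delta_q)^2} \lesssim \delta_q^{-1} \leq q^{\,n-1}.\]
Thus the $q$-th piece is $\lesssim e^{-\kappa q}q^{\,n+1}$, and $\sum_{q}e^{-\kappa q}q^{\,n+1}<\infty$, which gives absolute convergence of both integrals; for $\mathscr{k}_i$ the simple pole produces the even better factor $\log(1/\delta_q)\lesssim\log q$ in place of $\delta_q^{-1}$. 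I expect the main obstacle to be the third step: pinning down the lower bound $d(t\zeta)\gtrsim|t-q|+\delta_q$ uniformly in $t$, handling the fact that the nearest integer to $t\alpha$ may shift across the interval and that $|\alpha|$ need not be small, together with the verification that the corrections supplied by Lemma \ref{lemma2} grow only polynomially in $|\eta|$. Everything beyond this has ample exponential slack.
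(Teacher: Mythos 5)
Your proposal is correct and follows essentially the same route as the paper's proof: exponential decay of $e^{-\eta/\epsilon}$ for $\epsilon\in\bH_r$, polynomial growth of $\mathscr{f},\mathscr{k}_i$ away from poles via the quasi-periodicity relations of Lemma \ref{lemma2}, and the Diophantine inequality \eqref{keyinq} to convert near-misses of the ray with lattice points into polynomially bounded blow-up, which the exponential factor then absorbs. The only differences are bookkeeping: the paper splits the ray into segments inside/outside small discs around lattice points and uses sup bounds (distance to $\omega=a_1\omega_1+a_2\omega_2$ bounded below by a constant times $|a_1|^{1-n}$) on each disc segment, whereas you partition into unit parameter intervals and integrate the pointwise bound $d(t\zeta)^{-2}\lesssim(|t-q|+\delta_q)^{-2}$, obtaining the slightly sharper factor $\delta_q^{-1}\leq q^{n-1}$ — both of which are harmless given the exponential slack, and the uniformity issue you flag (the nearest integer to $t\alpha$ shifting, $|\alpha|$ not small) is resolved by the elementary estimate $(1+|\alpha|)\bigl(\|t\|+\|t\alpha\|\bigr)\geq|t-q|+\delta_q$.
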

\begin{proof}
    The functions $\mathscr{f}(\eta)$, $\mathscr{k}_i(\eta)$ are meromorphic in $\eta$ with poles only at the nonzero lattice points  $\Lambda^*(\omega_1,\omega_2)$. 
    The fact that $r$ is a non-Stokes ray implies that $r$ can be parameterised as 
    \begin{equation}
        \eta(t)=\pm t(\omega_1+\alpha \omega_2) \,,
    \end{equation}
    with $t\in \bR_{\geq 0}$ and $\alpha \in \mathbb{R}\setminus \mathbb{Q}$. We assume that we are in the case with the $+$ sign,  with the other case being completely analogous. 
    
    Fix $K>0$. Since the  functions $\mathscr{f}$, $\mathscr{k}_i$ are holomorphic at $0\in \mathbb{C}$, the integrals over $t\in [0,K]$ are finite. 
    Now note that by \eqref{mod_h}-\eqref{mod_wp}, the functions $\tilde{\wp}$, $\tilde{\zeta}$ and $\mathscr{k}_i$ differ from $\wp$, $\zeta$ and $\rho_i$  by terms that have at most polynomial growth in $\eta(t)$ as $t\to \infty$. Due to the exponential decay of $e^{-\eta(t)/\epsilon}$ for $\epsilon \in \mathbb{H}_{r}$ as $t\to \infty$,  to show that \eqref{eq:finiteint} holds it is then  enough to check that 
    \begin{equation}
    \int_{r_{\infty}}|e^{-\eta/\epsilon}\wp(\eta)\mathrm{d}\eta|<\infty, \quad \int_{r_{\infty}} |e^{-\eta/\epsilon}\zeta(\eta)\mathrm{d}\eta|<\infty, \quad \int_{r_{\infty}} |e^{-\eta/\epsilon}\rho_i(\eta)\mathrm{d}\eta|<\infty,
\end{equation}
where $r_{\infty}$ is the segment given by $\eta(t)$ for $t\in [K,\infty)$. We start with the first of these statements.

We work with the inner product on $\bC$ in which  $\omega_1$ and $\omega_2$ are orthonormal and consider discs $D_\delta(\omega)$ of radius $0<\delta<K$   centered at the points of $\Lambda(\omega_1,\omega_2)$. We denote the norm induced by this inner product by $||
\cdot ||$ to distinguish it from the canonical norm $|
\cdot|$. We take  $\delta>0$ sufficiently small so that these discs do not intersect each other. Subdivide the ray $r_{\infty}$ into two sets 
    \begin{equation}
        r_{\infty}=r_p\cup r_c
    \end{equation}
    where $r_p$ is made up of the segments of $r_{\infty}$ inside the discs, and $r_{c}$ is the complement of $r_p$ in $r_{\infty}$. In particular, we can write
    \begin{equation}
        r_p=\bigcup_{\omega\in \Lambda^*(\omega_1,\omega_2)}r_{\omega}
    \end{equation}
    where $r_{\omega}$ is the segment contained in the disc $D_\delta(\omega)$. 
    
    Take a non-zero lattice point $\omega=a_1\omega_1+a_2\omega_2\in \Lambda^*(\omega_1,\omega_2)$ and consider
    \begin{equation}
        \int_{r_{\omega}} |e^{- \eta/\epsilon} \wp(\eta\b\omega_1,\omega_2)\mathrm{d}\eta|\,.
    \end{equation}
     Using the fact that $\wp$ is periodic for the lattice $\Lambda(\omega_1,\omega_2)$ and the Laurent expansion of $\wp(\eta)$ at $\eta=0$ we know  that  when $\eta\in D_\delta(\omega)$
    \begin{equation}
        \wp(\eta)=\wp(\eta-\omega)=\frac{1}{(\eta-\omega)^2}+\text{Reg}(\eta-\omega),
    \end{equation}
    where $\text{Reg}$ is a holomorphic function in the disc $D_\delta(0)$. So in particular we have 
    \begin{equation}
        |\wp(\eta)|=|\wp(\eta-\omega)|\leq \frac{1}{|\eta-\omega|^2}+D_1\leq \frac{1+C\cdot D_1\delta^2}{|\eta-\omega|^2}=\frac{D_2}{|\eta-\omega|^2}\,,
    \end{equation}
     where  $D_1>0$, $C>0$ is such that $|
     \cdot |<C||\cdot ||$, and  $D_2=1+C\cdot D_1\delta^2>0$ are constants independent of $\omega$. Since the canonical norm $|\cdot|$ is equivalent to $||\cdot||$, it follows that   
    \begin{equation}
        |\eta(t)-\omega|^2\geq D_3((t-a_1)^2 +(t\alpha -a_2)^2)
    \end{equation}
    for some $D_3>0$. Minimizing the right hand side we find

    \begin{equation}
        (t-a_1)^2 +(t\alpha -a_2)^2\geq \frac{(a_1\alpha -a_2)^2}{1+\alpha^2}\,.
    \end{equation}
    By picking $n>\mu(\alpha)$ and possibly increasing $K>0$, we can assume that for all $\omega\in \Lambda^*(\omega_1,\omega_2)$ such that $r_{\omega}$ is non-empty we have
    \begin{equation}
        |\alpha-a_2/a_1|\geq\frac{1}{|a_1|^n}\,,
    \end{equation}
    where we again wrote $\omega=a_1\omega_1+a_2\omega_2$. 
    Hence, overall on $r_{\omega}$ we have 
    \begin{equation}
        |\wp(\eta)|\leq \frac{D_2 (1+\alpha^2)\cdot |a_1|^{2n-2}}{D_3}\,.
    \end{equation}
    
    Recall that the discs $D_\delta(\omega)$ are defined with respect to the inner product where $\omega_1$ and $\omega_2$ are orthonormal. The points of intersection of the ray $\eta(t)$ with the boundary of  $D_\delta(\omega)$  occur when
    \begin{equation}
        t=t_{\pm}(\omega)=\frac{(a_1+\alpha a_2)\pm \sqrt{(1+\alpha^2)\delta^2-(\alpha a_1-a_2)^2}}{(1+\alpha^2)}\,.
    \end{equation}
    Note that if $r_{\omega}$ is not empty, we must have 
    \begin{equation}
        (1+\alpha^2)\delta^2-(\alpha a_1-a_2)^2=(1+\alpha^2)(\delta^2-\text{dist}(r,\omega)^2)\geq 0\,,
    \end{equation}
    where $\text{dist}(r,\omega)$ denotes the distance between $r$ and $\omega$ in the norm where $\omega_1$ and $\omega_2$ are orthonormal. Hence, we obtain
    \begin{equation}
        \int_{r_{\omega}}| e^{- \eta/\epsilon} \wp(\eta\b\omega_1,\omega_2)\mathrm{d}\eta|\leq \delta_1|a_1|^{2n-2}\int_{t_{-}}^{t_{+}}e^{-t\delta_2}\mathrm{d}t=\delta_1|a_1|^{2n-2}\left(-\frac{1}{\delta}_2(e^{-t_{+}\delta_2}-e^{-t_{-}\delta_2})\right)
    \end{equation}
    where 
    \begin{equation}
        \delta_1=\frac{D_2 (1+\alpha^2)|\omega_1+\alpha\omega_2|}{D_3}>0, \quad \delta_2 = \left|\frac{\omega_1+\alpha\omega_2}{\epsilon}\right|\cos\left(\text{Arg}\left(\frac{\omega_1+\alpha\omega_2}{\epsilon}\right)\right)>0\,,
    \end{equation}
    so $\delta_1,\delta_2$ are constants depending only on $\alpha,\omega_1,\omega_2,\delta,\epsilon$. Furthermore, note that 
    \begin{equation}
       e^{-t_{-}\delta_2}-e^{-t_{+}\delta_2}=2e^{-\delta_2\frac{a_1+\alpha a_2}{1+\alpha^2}}\sinh\left(\delta_2\frac{\sqrt{\delta^2-\text{dist}(r,\omega)^2}}{(1+\alpha^2)^{1/2}}\right)\leq 2e^{-\delta_2\frac{a_1+\alpha a_2}{1+\alpha^2}}\sinh\left(\frac{\delta_2\delta}{(1+\alpha^2)^{1/2}}\right)\,,
    \end{equation}
    so that 
    \begin{equation}
        \int_{r_{\omega}} |e^{- \eta/\epsilon} \wp(\eta\b\omega_1,\omega_2)\mathrm{d}\eta|\leq 2\frac{\delta_1}{\delta_2}|a_1|^{2n-2}\sinh\left(\frac{\delta_2\delta}{(1+\alpha^2)^{1/2}}\right)e^{-\delta_2\frac{a_1+\alpha a_2}{1+\alpha^2}}\,.
    \end{equation}
    Now note that if $\eta(t)=t(\omega_1+\alpha\omega_2)$ intersects the disc centered at $\omega=a_1\omega_1+a_2\omega_2$, and $\omega$ is sufficiently large in norm, then we must have $a_1,\alpha a_2>0$.  In particular, we find that 

    \begin{equation}
    \begin{split}
        \int_{r_{p}} |e^{- \eta/\epsilon} &\wp(\eta\b \omega_1,\omega_2)\mathrm{d}\eta|=\sum_{\omega\in \Lambda^*(\omega_1,\omega_2)}\int_{r_{\omega}} |e^{-\eta/\epsilon} \wp(\eta \b \omega_1,\omega_2)\mathrm{d}\eta|\\
        &<\sum_{(a_1,a_2)\in \mathbb{Z}^2\; :\; r_{a_1\omega_1+a_2\omega_2}\neq \emptyset}2\frac{\delta_1}{\delta_2}|a_1|^{2n-2}\sinh\left(\frac{\delta_2\delta}{(1+\alpha^2)^{1/2}}\right)e^{-\delta_2\frac{a_1+\alpha a_2}{1+\alpha^2}}<\infty\,.\\
    \end{split}
    \end{equation}
  
  On the other hand, on $r_{c}$ we simply have that due to the periodicity of $\wp$, the factor $\wp(\eta)$ is bounded and hence the integral over $r_c$ is also finite. We then conclude that 
  \begin{equation}
    \int_{r_{\infty}}\left|e^{-\eta/\epsilon}\wp(\eta)\mathrm{d}\eta\right|= \int_{r_{c}}\left|e^{-\eta/\epsilon}\wp(\eta)\mathrm{d}\eta\right|+\int_{r_{p}}\left|e^{-\eta/\epsilon}\wp(\eta)\mathrm{d}\eta\right|<\infty 
\end{equation}

    The argument for the convergence of 
    \begin{equation}
        \int_{r_{\infty}}\left|e^{-\eta/\epsilon}\zeta(\eta)\mathrm{d}\eta\right|
    \end{equation}
    is similar. The only difference is that now $\zeta$ is not periodic, so we must use the corresponding identity in Lemma \ref{lemma2}. This shows that for $\eta(t)\in r_{\omega}$ we have
\begin{equation}
        \zeta(\eta)=\zeta(\eta-\omega)+2(a_1\eta_1+a_2\eta_2)=\frac{1}{\eta-\omega}+\text{Reg}(\eta-\omega) + 2(a_1\eta_1+a_2\eta_2)\,,
    \end{equation}
    where Reg as before is a holomorphic function (independent of $\omega$) on a disc of radius $\delta$ centered at $0$, 
    so that 
    \begin{equation}
        |\zeta(\eta)|\leq \frac{D_1}{|\eta -\omega|}+D_2(|a_1|+|a_2|)<\frac{D_1 (1+\alpha^2)^{1/2}\cdot |a_1|^{n-1}}{C}+D_2(|a_1|+|a_2|)
    \end{equation}
    for some constants $C, D_1,D_2$ independent of $\omega=a_1\omega_1+a_2\omega_2$. The argument for the convergence over $r_p$ follows as before. For the convergence over $r_c$ we again use the quasi-periodicity of $\zeta$ from Lemma \ref{lemma2} as before to show that as we go to $\infty$ along $r_c$ we have 
    \begin{equation}
        |\zeta(\eta)|=\mathcal{O}(|\eta|)\,.
    \end{equation}
   
    Finally, to show 
    \begin{equation}
        \int_{r_{\infty}}\left|e^{-\eta/\epsilon}\rho_i(\eta)\mathrm{d}\eta\right|<\infty
    \end{equation}
    we use that 
    \begin{equation}
        \int_{r_{\infty}}\left|e^{-\eta/\epsilon}\zeta(\eta)\mathrm{d}\eta\right|<\infty
    \end{equation}
    together with Lemma \ref{lemma2} and a simple modification of the argument from before. 
\end{proof}
\subsection{Proof of the Borel summability}\label{borelsummabilityproofsec}

Given the previous lemmas, we now prove the Borel summability of $K_i(\epsilon \b \omega_1,\omega_2)$ and $F(\epsilon\b \omega_1,\omega_2)$.
\begin{prop}\label{borelsumproof} Fix $(\omega_1,\omega_2)\in \cR$ and 
    consider a non-Stokes ray $r$ such that $\mu(r)<\infty$ with respect to $\Lambda(\omega_1,\omega_2)$. Then for $\epsilon \in \mathbb{H}_r$ the following integrals are absolutely convergent
    \begin{equation}
        K_{i}^{r}(\epsilon\b\omega_1,\omega_2)=\int_{r} e^{-\eta/\epsilon} k_i(\eta\b\omega_1,\omega_2)\mathrm{d}\eta, \quad F^r(\epsilon\b\omega_1,\omega_2)=\int_{r} e^{-\eta/\epsilon} f(\eta\b\omega_1,\omega_2)\mathrm{d}\eta\,,
    \end{equation}
    and depend holomorphically on $\epsilon$.
    In particular, the formal series $K_i(\epsilon\b\omega_1,\omega_2)$ and $F(\epsilon\b\omega_1,\omega_2)$ are Borel summable along $r$. Additionally, we have the alternate expressions
    \begin{equation}\label{borelalt}
        \begin{split}
        K_{i}^{r}(\epsilon\b\omega_1,\omega_2)&=\int_r \mathrm{Li}_1(e^{- \eta/\epsilon})\mathscr{k}_i(\eta\b\omega_1,\omega_2)\mathrm{d}\eta\,,\\
        F^r(\epsilon\b\omega_1,\omega_2)&=\int_r \mathrm{Li}_2(e^{- \eta/\epsilon})\mathscr{f}(\eta\b\omega_1,\omega_2)\mathrm{d}\eta\,.
        \end{split}
    \end{equation}
\end{prop}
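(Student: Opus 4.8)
The plan is to reduce both integrals to the single–function integrals already controlled by Lemma \ref{lemma3}, using the series representations $f(\eta)=\sum_{m\geq 1}m^{-3}\,\mathscr{f}(\eta/m)$ and $k_i(\eta)=\sum_{m\geq 1}m^{-2}\,\mathscr{k}_i(\eta/m)$ from Proposition \ref{down}, and then to justify every interchange of summation and integration by a single absolute–convergence (Tonelli) estimate. I would first note that since $r$ is non-Stokes it avoids $\Lambda^*(\omega_1,\omega_2)$, hence it avoids $m\Lambda^*\subseteq\Lambda^*$, which is exactly the pole set of $\mathscr{f}(\eta/m)$ and $\mathscr{k}_i(\eta/m)$; so every term is integrated along a path missing all poles. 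The substitution $\eta\mapsto m\eta$, which preserves the ray $r$, then gives $\int_r|e^{-\eta/\epsilon}|\,|\mathscr{f}(\eta/m)|\,|\mathrm{d}\eta|=m\,I(m)$ and $\int_r|e^{-\eta/\epsilon}|\,|\mathscr{k}_i(\eta/m)|\,|\mathrm{d}\eta|=m\,I_k(m)$, where $I(m)=\int_r|e^{-m\eta/\epsilon}\mathscr{f}(\eta)|\,|\mathrm{d}\eta|$ and $I_k(m)=\int_r|e^{-m\eta/\epsilon}\mathscr{k}_i(\eta)|\,|\mathrm{d}\eta|$.

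The crux is to bound $\sum_m m^{-3}\cdot mI(m)=\sum_m m^{-2}I(m)$ and $\sum_m m^{-2}\cdot mI_k(m)=\sum_m m^{-1}I_k(m)$. Because $\epsilon\in\mathbb{H}_r$ forces $\operatorname{Re}(\eta/\epsilon)>0$ for $0\neq\eta\in r$, one has $|e^{-m\eta/\epsilon}|\leq|e^{-\eta/\epsilon}|$ for all $m\geq1$, so $I(m)\leq I(1)<\infty$ and $I_k(m)\leq I_k(1)<\infty$ by Lemma \ref{lemma3}. This already settles the $f$–case, since $\sum_m m^{-2}I(m)\leq I(1)\sum_{m\geq1}m^{-2}<\infty$. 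The $k_i$–case requires the sharper estimate $I_k(m)=O(1/m)$, which I would obtain by splitting $r$ at radius $K<\min_{\omega\in\Lambda^*}|\omega|$: on $|\eta|\leq K$ the function $\mathscr{k}_i$ is bounded and $\int_{|\eta|\leq K}|e^{-m\eta/\epsilon}|\,|\mathrm{d}\eta|=O(1/m)$, while on $|\eta|\geq K$ one writes $|e^{-m\eta/\epsilon}|\leq e^{-(m-1)\kappa K}|e^{-\eta/\epsilon}|$, where $|e^{-\eta/\epsilon}|=e^{-\kappa|\eta|}$ along $r$ with $\kappa>0$, and bounds the remaining integral by $I_k(1)$ via Lemma \ref{lemma3}. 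This yields $I_k(m)\leq C/m+e^{-(m-1)\kappa K}I_k(1)$ and hence $\sum_m m^{-1}I_k(m)<\infty$.

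With absolute convergence of the double sums in hand, Tonelli justifies interchanging $\sum_m$ with $\int_r$, and the resulting single integrals can be resummed using $\sum_{m\geq1}m^{-2}e^{-m\eta/\epsilon}=\mathrm{Li}_2(e^{-\eta/\epsilon})$ and $\sum_{m\geq1}m^{-1}e^{-m\eta/\epsilon}=\mathrm{Li}_1(e^{-\eta/\epsilon})$. This produces both the original integrals $\int_r e^{-\eta/\epsilon}f\,\mathrm{d}\eta$, $\int_r e^{-\eta/\epsilon}k_i\,\mathrm{d}\eta$ and the polylogarithmic forms \eqref{borelalt} simultaneously, and in particular shows all four integrals converge absolutely. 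Holomorphy in $\epsilon$ then follows because the dominating bounds are locally uniform on compact subsets of $\mathbb{H}_r$ (the rate $\kappa$ and the constants in Lemma \ref{lemma3} stay bounded below there), so one may differentiate under the integral sign, or invoke Morera together with Fubini. Finally, since $f$ and $k_i$ are the analytic continuations of the Borel transforms of $F$ and $K_i$ (Proposition \ref{propone}), the convergence of the integrals \eqref{transform2} is precisely the assertion of Borel summability along $r$.

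The main obstacle is the $k_i$–case of the second paragraph: the naive monotonicity bound $I_k(m)\leq I_k(1)$ only feeds the divergent weight $\sum_m 1/m$, so one genuinely needs the $O(1/m)$ decay coming from the region near $\eta=0$, and must verify that the contribution near the densely approached poles is absorbed into the exponentially small factor $e^{-(m-1)\kappa K}I_k(1)$ uniformly in $m$. All the Diophantine input governing how $r$ approaches the lattice is already encapsulated in Lemma \ref{lemma3}, so beyond this summation–in–$m$ bookkeeping no further number theory is required.
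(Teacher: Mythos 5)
Your proof is correct, and it rests on the same pillars as the paper's: Lemma \ref{lemma3} supplies all the Diophantine input, and Fubini--Tonelli together with the rescaling $\eta\mapsto m\eta$ converts between the Laplace integrals of $f$, $k_i$ and the polylogarithm forms \eqref{borelalt}. The difference is the direction in which you run the Tonelli chain, hence which quantity gets estimated directly. The paper bounds the polylogarithm-form integrand in one stroke, via $|\mathrm{Li}_1(z)|\le -\log(1-|z|)$ and $|\mathrm{Li}_2(z)|\le \mathrm{Li}_2(|z|)$: near $\eta=0$ the logarithmic singularity of $\mathrm{Li}_1(e^{-\eta/\epsilon})$ is integrable (and damped by $\mathscr{k}_i(\eta)=\mathcal{O}(\eta)$), while at infinity $-\log(1-|e^{-\eta/\epsilon}|)\sim |e^{-\eta/\epsilon}|$ reduces matters to Lemma \ref{lemma3}; the finiteness of all the termwise sums, including your $\sum_m m^{-1}I_k(m)$, then falls out of the Tonelli identities for free. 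You instead bound the series side term by term, which forces the per-$m$ refinement $I_k(m)\le C/m + e^{-(m-1)\kappa K}I_k(1)$; your splitting of the ray at radius $K$ is the unpacked counterpart of the paper's estimate, with the $O(1/m)$ piece from $|\eta|\le K$ resumming to the integrable log singularity at $\eta=0$, and the uniform exponential factor playing the role of the asymptotics at infinity. Your route costs more bookkeeping in the $k_i$ case (and you rightly flagged that the naive bound $I_k(m)\le I_k(1)$ fails there), but it uses nothing about polylogarithms beyond their defining series; the paper's route is shorter because one pointwise polylog bound handles all $m$ at once. Both arguments conclude identically: Tonelli/Fubini yields the four absolutely convergent integrals and the identities \eqref{borelalt} simultaneously, Morera (or differentiation under the integral sign) gives holomorphy in $\epsilon$, and Proposition \ref{propone} identifies the Laplace integrals as Borel sums.
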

\begin{proof}
Using that along $r$ we have $|e^{- \eta/\epsilon}|<1$ and using that 
\begin{equation}
    |\mathrm{Li}_1(z)|=|\log(1-z)|<-\log(1-|z|), \quad |z|<1,
\end{equation}
to show the absolute convergence of the first expression of \eqref{borelalt} it is enough to show the convergence of the integral
    \begin{equation}\label{eq:absRH}
        -\int_r \log(1-|e^{- \eta/\epsilon}|)|\mathscr{k}_i(\eta)||\mathrm{d}\eta|\,.
    \end{equation}
    Since near $\eta=0$ we have
    \begin{equation}
        \mathscr{k}_i(\eta)=\mathcal{O}(\eta)
    \end{equation}
    the integral in \eqref{eq:absRH} has no issue near $\eta=0$. On the other hand, as $\eta \to \infty$ along $r$ we have
    \begin{equation}
        -\log(1-|e^{- \eta/\epsilon}|)\sim |e^{- \eta/\epsilon}|\,,
    \end{equation}
    so by Lemma \ref{lemma3} we have that \eqref{eq:absRH} is finite. On the other hand, by Fubini-Tonneli and changing variables we have
    \begin{equation}\label{altcomp}
    \begin{split}
        -\int_r \log(1-|e^{- \eta/\epsilon}|)&|\mathscr{k}_i(\eta)||\mathrm{d}\eta|=\int_r \sum_{m\geq 1}\frac{|e^{- m \eta/\epsilon}|}{ m}|\mathscr{k}_i(\eta)||\mathrm{d}\eta|=\sum_{m\geq 1}\int_r \frac{|e^{-m \eta/\epsilon}|}{m}|\mathscr{k}_i(\eta)||\mathrm{d}\eta|\\
        &=\sum_{m\geq 1}\int_r\frac{|e^{- \eta/\epsilon}|}{ m^2}|\mathscr{k}_i(\eta/m)||\mathrm{d}\eta|=\int_r|e^{- \eta/\epsilon}|\sum_{m\geq 1}\frac{1}{ m^2}|\mathscr{k}_i(\eta/m)||\mathrm{d}\eta|\,.
    \end{split}
    \end{equation}
    In the above, we have used that along the ray $|e^{-\eta/\epsilon}|<1$, so that the series expansion of $-\log(1-z)$ is valid along $r$. Since the first integral is finite and $k_i(\eta\b\omega_1,\omega_2)$ is given by \eqref{curlyk2}, it follows that the Borel sum $K_{i}^{r}(\epsilon\b\omega_1,\omega_2)$ is absolutely integrable. By applying Fubini-Tonelli to the expressions without absolute values, we also get the alternate identity in \eqref{borelalt}. 

    The argument for $F^r$ follows similarly. Using that along $r$ we have $|e^{- \eta/\epsilon}|<1$ and 
    \begin{equation}
        |\mathrm{Li_2}(z)|\leq \mathrm{Li_2}(|z|), \quad |z|<1
    \end{equation}
    to show the absolute convergence of the second expression in \eqref{borelalt} it is enough to consider 
    \begin{equation}\label{eq:abstau}
    \int_r  \mathrm{Li}_2(|e^{- \eta/\epsilon}|)|\mathscr{f}(\eta)||\mathrm{d}\eta|\,.
    \end{equation}
    Since $\mathrm{Li}_2(1)<\infty$ and the modified functions $\widetilde{\zeta}$ and $\widetilde{\wp}$ are finite at $\eta=0$, the integrand does not have any issues at $\eta=0$. Similar to the previous case, as $\eta \to \infty$ along $r$ we have
    \begin{equation}
        \mathrm{Li}_2(|e^{- \eta/\epsilon}|)\sim |e^{- \eta/\epsilon}| 
    \end{equation}
    so by Lemma \ref{lemma3} we find that \eqref{eq:abstau} is finite. By Fubini-Tonneli and performing a change of variables as in \eqref{altcomp}, we find that
    \begin{equation}
        \begin{split}
            \int_r  \mathrm{Li}_2(|e^{- \eta/\epsilon}|)|\mathscr{f}(\eta)||\mathrm{d}\eta|
            &=\sum_{m\geq 1}\int_r \frac{|e^{-m \eta/\epsilon}|}{m^2}|\mathscr{f}(\eta)||\mathrm{d}\eta|=\sum_{m\geq 1}\int_r \frac{|e^{- \eta/\epsilon}|}{m^3}|\mathscr{f}(\eta/m)||\mathrm{d}\eta|\\
            &=\int_r|e^{- \eta/\epsilon}|\sum_{m\geq 1}\frac{1}{m^3}|\mathscr{f}(\eta/m)||\mathrm{d}\eta|\,.\\
        \end{split}
    \end{equation}
    As before, for the series expansion of $\mathrm{Li}_2(z)$ we have used that along $r$ we have $|e^{- \eta/\epsilon}|<1$.
    Since the first integral is finite, and $f$ is given by \eqref{curlyf}, it follows that the Borel sum $F^r(\epsilon,\omega_1,\omega_2)$ is absolutely integrable. By applying Fubini-Tonelli to the corresponding expressions without absolute values, we obtain the alternate identity for $F^r$ in \eqref{borelalt}.

    Finally, we show holomorphic dependence in $\epsilon \in \mathbb{H}_r$ for $K^r_i$, with an identical argument for $F^r$. Consider any contour $\partial \Delta \subset \mathbb{H}_r$. Then we clearly have
    \begin{equation}
        \int_{\partial \Delta}\left(\int_{r}|e^{-\eta/\epsilon}k_i(\eta \b \omega_1,\omega_2)||\mathrm{d}\eta|\right)|\mathrm{d}\epsilon|<\infty\,.
    \end{equation}
    By applying Fubini-Tonelli we can interchange the order of integration, and we find
    \begin{equation}
    \begin{split}
    \int_{\partial \Delta} K_i^r(\epsilon \b \omega_1,\omega_2)\mathrm{d}\epsilon&=
        \int_{\partial \Delta}\left(\int_{r}e^{-\eta/\epsilon}k_i(\eta \b \omega_1,\omega_2)\mathrm{d}\eta\right)\mathrm{d}\epsilon\\
        &=\int_{r}\left(\int_{\partial \Delta}e^{-\eta/\epsilon}\mathrm{d}\epsilon\right)k_i(\eta \b \omega_1,\omega_2)\mathrm{d}\eta=0\,.
    \end{split}
    \end{equation}
    Hence, by Morera's theorem it follows that $K_i^r(\epsilon\b \omega_1,\omega_2)$ is holomorphic in $\epsilon \in \mathbb{H}_r$.
\end{proof}

\subsection{Integrals along detour paths}\label{detourpathproof}
When $\mu(\alpha)=\infty$, we can still define something meaningful. The idea is as follows:
\begin{itemize}
    \item Give a non-Stokes ray $r$ with $\mu(r)=\infty$ with respect to the lattice $\Lambda(\omega_1,\omega_2)$, let $r(\delta)$ be the detour path defined in Section \ref{Borelssummary} for $\delta$ small enough (see figure \ref{detourpathfig}). 
    \item We then define the following expressions for $\epsilon \in \mathbb{H}_r$
    \begin{equation}\label{borelalt2}
        \begin{split}
        K_{i}^{r(\delta)}(\epsilon\b\omega_1,\omega_2)&=\int_{r(\delta)} \mathrm{Li}_1(e^{- \eta/\epsilon})\mathscr{k}_i(\eta\b\omega_1,\omega_2)\mathrm{d}\eta\,,\\
        F^{r(\delta)}(\epsilon\b\omega_1,\omega_2)&=\int_{r(\delta)} \mathrm{Li}_2(e^{- \eta/\epsilon})\mathscr{f}(\eta\b\omega_1,\omega_2)\mathrm{d}\eta\,.
        \end{split}
    \end{equation}
    \item We will then show the above expressions are independent of $\delta$ for $\delta$ small enough, and they coincide with $K_{i}^{r}$ and $F^r$ when $\mu(r)<\infty$. 
\end{itemize}

\begin{prop}\label{weakRHsol}
   Let $r$ be a  non-Stokes ray with respect to $\Lambda(\omega_1,\omega_2)$. Then there is $D>0$ such that for all $0<\delta <D$ the integrals $F^{r(\delta)}$ and $K_{i}^{r(\delta)}$ are absolutely convergent  for $\epsilon \in \bH_r$. These integrals  depend holomorphically on $\epsilon\in \mathbb{H}_r$, and are independent of the choice of such $\delta$. Moreover when $\mu(r)<\infty$ we have $F^{r(\delta)}=F^{r}$ and  $K_{i}^{r(\delta)}=K_{i}^{r}$\,.

\end{prop}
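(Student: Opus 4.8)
The plan is to separate the statement into two independent tasks: establishing the absolute convergence and holomorphy of $F^{r(\delta)}$ and $K_i^{r(\delta)}$ for small $\delta$, and then proving the independence of $\delta$ together with the identification with $F^r$, $K_i^r$ when $\mu(r)<\infty$. The organising observation is that the only place where the straight-ray argument of Lemma \ref{lemma3} used the hypothesis $\mu(r)<\infty$ was to bound from below the distance between $r$ and the lattice points it approaches. On the detour path $r(\delta)$ this distance is bounded below by a fixed multiple of $\delta$ \emph{by construction}, so the Diophantine input should disappear entirely and the estimates should close for an arbitrary non-Stokes ray.

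First I would fix $D>0$ small enough that the discs of radius $D$ centred at the points of $\Lambda^*(\omega_1,\omega_2)$ are pairwise disjoint, and decompose $r(\delta)=r_c\cup r_a$, where $r_c\subset r$ consists of the straight segments lying outside all the discs and $r_a$ is the union of the boundary arcs. On $r_c$ we have $|e^{-\eta/\epsilon}|\le 1$ for $\epsilon\in\bH_r$, and since every point of $r_c$ is at distance $\ge\delta$ from $\Lambda^*(\omega_1,\omega_2)$, the estimates of Lemma \ref{lemma3} and Proposition \ref{borelsumproof} apply with the uniform lower bound $\delta$ in place of the quantity $1/|a_1|^n$ produced there from the irrationality measure; this is precisely why no hypothesis on $\mu(r)$ is needed. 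On each arc of $r_a$ surrounding a point $\omega=a_1\omega_1+a_2\omega_2$, the arc has length $O(\delta)$ and stays at distance $\ge\delta$ from $\Lambda^*(\omega_1,\omega_2)$, so the periodicity and quasi-periodicity relations of Lemma \ref{lemma2} give uniform bounds of the shape $|\wp(\eta)|\le C\delta^{-2}$ and $|\zeta(\eta)|,|\rho_i(\eta)|\le C\delta^{-1}+C'(|a_1|+|a_2|)$, hence polynomial-in-$(a_1,a_2)$ bounds for $\mathscr{f}$ and $\mathscr{k}_i$. Combined with the exponential suppression $|e^{-\eta/\epsilon}|\le \mathrm{const}\cdot e^{-c|\omega|}$ along the arc and the polylogarithm bounds $|\mathrm{Li}_1|,|\mathrm{Li}_2|\lesssim |e^{-\eta/\epsilon}|$, each arc contributes at most $\mathrm{poly}(a_1,a_2)\cdot e^{-c|\omega|}$, and summing over the lattice points met by $r$ gives a convergent series. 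This yields absolute convergence for all $\epsilon\in\bH_r$, locally uniformly, and holomorphy then follows from Morera's theorem combined with Fubini--Tonelli exactly as in Proposition \ref{borelsumproof}.

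For the independence of $\delta$ I would argue by Cauchy's theorem. Given $0<\delta_1<\delta_2<D$, the paths $r(\delta_1)$ and $r(\delta_2)$ coincide away from the discs and differ only in a neighbourhood of each $\omega\in\Lambda^*(\omega_1,\omega_2)$, where together they bound a region lying between the two arcs. Since the arcs of angle $<\pi$ bulge to the side of $r$ away from $\omega$, this region excludes the centre $\omega$, so the integrand $\mathrm{Li}_k(e^{-\eta/\epsilon})\,\mathscr{g}(\eta)$ (with $\mathscr{g}\in\{\mathscr{f},\mathscr{k}_i\}$) is holomorphic there: the only poles of $\mathscr{f},\mathscr{k}_i$ are the lattice points, and $\mathrm{Li}_k(e^{-\eta/\epsilon})$ is holomorphic on $\{\Re(\eta/\epsilon)>0\}$. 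Cauchy's theorem then equates the two local contributions, and summing over $\omega$ --- justified by the absolute convergence just proved --- gives $F^{r(\delta_1)}=F^{r(\delta_2)}$ and $K_i^{r(\delta_1)}=K_i^{r(\delta_2)}$. When $\mu(r)<\infty$ the integrals $F^r$, $K_i^r$ exist by Proposition \ref{borelsumproof}, and running the same contour argument with the outer path replaced by the straight ray $r$ itself (the relevant region is then the circular segment cut off by each arc, again excluding $\omega$) yields $F^{r(\delta)}=F^r$ and $K_i^{r(\delta)}=K_i^r$.

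The main obstacle I anticipate is controlling the polylogarithm factor near the finitely many lattice points closest to the origin. For these $\omega$ the condition $\Re(\eta/\epsilon)>0$ on the arcs and on the deformation region need not persist as $\epsilon$ approaches $\partial\bH_r$, so both the polylog bounds used in the convergence estimate and the holomorphy of the integrand used in the Cauchy step require that $D$ be chosen small enough to keep these arcs inside the half-plane $\{\Re(\eta/\epsilon)>0\}$; since there are only finitely many such $\omega$ and the corresponding arcs are compact and avoid the poles, their contributions are harmless once $D$ is fixed appropriately. The remaining, genuinely infinite, part of the estimate is the summation over far-out lattice points, where the uniform distance $\delta$ furnishes the pole bound $C\delta^{-2}$ and the exponential decay $e^{-c|\omega|}$ dominates the polynomial growth coming from the quasi-periodicity of $\zeta$ and $\rho_i$.
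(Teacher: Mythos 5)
Your proposal is correct in substance and follows essentially the same route as the paper's proof: the same decomposition of $r(\delta)$ into straight segments and detour arcs, with the key observation that the arcs stay a fixed distance $\delta$ from the poles, so that periodicity of $\wp$ and the quasi-periodicity relations of Lemma \ref{lemma2} give uniform, Diophantine-free bounds; Morera plus Fubini--Tonelli for holomorphy in $\epsilon$; and Cauchy's theorem in pole-free regions both for the independence of $\delta$ and for the identification with $F^{r}$, $K_i^{r}$ when $\mu(r)<\infty$. The paper organises the contour argument slightly differently --- it compares truncated paths $r_n(\delta)$, $r_n(\delta')$, $r_n$ up to the $n$-th arc and then lets $n\to\infty$, using the already-established convergence of the full integrals --- whereas you sum local deformations around each lattice point; these are the same argument, and your observation that the arcs of angle $<\pi$ lie on the side of the chord away from $\omega$, so that every deformation region excludes the pole, is exactly what makes both versions work.

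The one step that does not survive scrutiny as written is the fix in your last paragraph: you cannot choose $D$ so as to keep the detour arcs inside $\{\Re(\eta/\epsilon)>0\}$ \emph{uniformly} in $\epsilon\in\bH_r$, and the proposition demands a single $D$ valid for every $\epsilon\in\bH_r$. Each arc around $\omega$ lies on one fixed side of the line through $r$, at distance up to $\delta$ from it; as $\epsilon$ approaches a boundary ray of $\bH_r$, the half-plane $\{\Re(\eta/\epsilon)>0\}$ degenerates onto one side of that line, so for any fixed $\delta>0$ the arcs on the other side around lattice points with $|\omega|$ below a threshold of order $\delta\cot\mu$ (with $\mu$ the angular distance of $\epsilon$ from $\partial\bH_r$) exit the half-plane, and the number of such arcs grows without bound as $\mu\to 0$. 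The repair is the one already implicit in your own sentence and in the paper's remark that $\Re(\eta(t)/\epsilon)>0$ only for $t$ sufficiently large: fix $\epsilon$ (or a closed subsector of $\bH_r$, as in Lemma \ref{rh2}). Then only finitely many arcs leave the half-plane; on this compact set $|e^{-\eta/\epsilon}|\leq e^{c\delta/|\epsilon|}$ and the integrand is bounded (provided, as both you and the paper tacitly assume, the path avoids the singularities of $\mathrm{Li}_k(e^{-\eta/\epsilon})$ at $\eta\in 2\pi\mathrm{i}\epsilon\bZ$), so these arcs contribute harmlessly to absolute convergence, while your exponential-decay estimate handles the tail. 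With this reading your argument closes and coincides with the paper's.
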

\begin{proof}
    We take $D>0$ such that the discs of radius $0<\delta<D$ and centered at $\Lambda(\omega_1,\omega_2)$ do not intersect each other. Notice that given any parametrization $\eta(t)$ of the corresponding detour path $r(\delta)$ and $\epsilon \in \mathbb{H}_r$, we have $\text{Re}(\eta(t)/\epsilon)>0$ for $t$ sufficiently big, so we still have exponential decay as $t\to \infty$.

    On the other hand, the proof of the absolute convergence follows a simpler argument than the one used in Lemma \ref{lemma3} and Proposition \ref{borelsumproof}. Indeed, one first needs a version of Lemma \ref{lemma3} for the detour paths $r(\delta)$. As in Lemma \ref{lemma3} we can focus on a segment $r_{\infty}(\delta)$ given by $\eta(t)$ for $t\in [K,\infty)$ and $K>0$ sufficiently big, and furthemore divide $r_{\infty}(\delta)$ into two sets
    \begin{equation}
        r_{\infty}(\delta)=r_p\cup r_c
    \end{equation}
    where $r_c$ is exactly as in Lemma \ref{lemma3}, and $r_p$ is now made of the arcs of the detour path, belonging to circles of radius $\delta$ centered at the poles. The argument of the absolute convergence over $r_c$ is exactly the same as in Lemma \ref{lemma3}, while the estimates for $r_p$ are easier, since we are now always a bounded distance from the poles. For example, when dealing with $\wp(\eta)$, using the periodicity of $\wp(\eta)$ we simply have a uniform bound for $|\wp(\eta)|$ along $r_p$, while for $\zeta(\eta)$ and $\rho_i(\eta)$ we use again Lemma \ref{lemma2}. One can then apply the same argument of Proposition \ref{borelsumproof} to show that the integrals in \eqref{borelalt2} are absolutely integrable.\footnote{Note that in the case of rays $r$ with $\mu(r)=\infty$ we do not know that the expressions \eqref{borelalt} are absolutely integrable, and hence we cannot directly relate \eqref{borelalt2} to Borel sums by trying to show that $K_{i}^{r(\delta)}$ and $F^{r(\delta)}$ coincide with \eqref{borelalt}. This is because the argument of Proposition \ref{borelsumproof} requires the absolute convergence of \eqref{borelalt} to show that $\eqref{borelalt}$ match the corresponding Borel sums.} 

    Now let $\delta$ be small enough and $0<\delta'<\delta$. Let $r_{n}(\delta)$ (resp. $r_{n}(\delta')$) be the segment of the detour path $r(\delta)$ (resp. $r(\delta')$) from $0$ to some point between the $n$-th arc and the $(n+1)$-th arc. We pick the endpoint to be the same for $r_{n}(\delta)$ and $r_{n}(\delta')$ for all $n>0$. Similarly, we denote by $r_n$ the segment of $r$ from $0$ to the endpoint of $r_{n}(\delta)$. By a trivial argument with contour integrals using the fact that the integrands have poles at $\Lambda^*(\omega_1,\omega_2)$ 
    \begin{equation}\label{parteq}
        K_{i}^{r_n(\delta)}=K_{i}^{r_n(\delta')}=K_{i}^{r_n}, \quad F^{r_n(\delta)}=F^{r_n(\delta')}=F^{r_n}\,, \quad \text{for all $n>0$}\,.
    \end{equation}
    When $r$ is an arbitrary non-Stokes ray, we can use the existence of $K_{i}^{r(\delta)}$ and $F^{r(\delta)}$ for all small enough $\delta$ and \eqref{parteq} to take a limit $n\to \infty$ and obtain $K_{i}^{r(\delta)}=K_{i}^{r(\delta')}$, $F^{r(\delta)}=F^{r(\delta')}$\footnote{Note that this does not show that $K_{i}^{r(\delta)}=K_{i}^{r}$, $F^{r(\delta)}=F^{r}$, since for a general non-Stokes ray $r$ we do not know that $F^r$ and $K_{i}^{r}$ exist, and the existence of a limit along a sequence does not guarantee the existence of the limit.}. 
    Furthermore, when $r$ is a non-Stokes ray such that $\mu(r)<\infty$, we can use the existence of $K_{i}^{r}$ and $F^{r}$ and \eqref{parteq} to obtain $K_{i}^{r(\delta)}=K_{i}^{r}$, $F^{r(\delta)}=F^{r}$.

    Finally, the holomorphicity in $\epsilon \in \mathbb{H}_r$ follows by a the same kind of argument as in Proposition \ref{borelsumproof}
\end{proof}
\subsection{Stokes jumps}\label{stokesjumpssec}

Finally, we discuss how the previous integrals along different paths relate to each other. 
\begin{prop}
    Let $r_1$ and $r_2$ be two non-Stokes rays ordered in clockwise order, and assume that $\mathbb{H}_{r_1}\cap \mathbb{H}_{r_2}\neq \emptyset$. Furthermore, let $\Delta(r_1,r_2)$ be the sector determined by $r_1$ and $r_2$. Then for $\epsilon \in \mathbb{H}_{r_1}\cap \mathbb{H}_{r_2}$ and small enough $\delta$ we have 
    \begin{equation}
    \begin{split}
        K_{i}^{r_2(\delta)}(\epsilon \b \omega_1,\omega_2)-K_{i}^{r_1(\delta)}(\epsilon \b \omega_1,\omega_2)&=2\pi \mathrm{i}\sum_{\omega\in \Delta(r_1,r_2)\cap \Lambda^*(\omega_1,\omega_2)}a_i\cdot \log(1-e^{-\omega/\epsilon})\\
         F^{r_2(\delta)}(\epsilon,\omega_1,\omega_2)-F^{r_1(\delta)}(\epsilon,\omega_1,\omega_2)&=2\pi \mathrm{i}\sum_{\omega \in \Delta(r_1,r_2)\cap \Lambda^*(\omega_1,\omega_2)}\frac{\partial}{\partial \epsilon}\left(\epsilon \mathrm{Li}_2(e^{-\omega/\epsilon})\right)\,,
    \end{split}
    \end{equation}
    where $\omega=a_1\omega_1+a_2\omega_2$.
\end{prop}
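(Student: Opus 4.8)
The plan is to read off each Stokes jump as a sum of residues over the lattice points trapped between the two detour paths. Fix $\epsilon \in \mathbb{H}_{r_1}\cap\mathbb{H}_{r_2}$ and $\delta>0$ small. The oriented contour I would use runs outward from $0$ along $r_2(\delta)$ to a large radius $R$, back through the sector $\Delta$ along an arc of radius $R$, and inward along $r_1(\delta)$ to $0$; since $r_1,r_2$ are taken in clockwise order this is positively oriented around $\Delta$, so that $\oint = K^{r_2(\delta)}_i - K^{r_1(\delta)}_i$ in the limit $R\to\infty$. The crucial preliminary observation is that on $\overline{\Delta}\setminus\{0\}$ one has $\mathrm{Re}(\eta/\epsilon)>0$ strictly — this is exactly the condition $\epsilon\in\mathbb{H}_{r_1}\cap\mathbb{H}_{r_2}$ together with convexity of $\Delta$ — so $|e^{-\eta/\epsilon}|<1$ there and the series $\mathrm{Li}_k(e^{-\eta/\epsilon})=\sum_{m\geq1}e^{-m\eta/\epsilon}/m^k$ converges. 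Hence $\mathrm{Li}_1(e^{-\eta/\epsilon})$ and $\mathrm{Li}_2(e^{-\eta/\epsilon})$ are single-valued and holomorphic on $\overline{\Delta}$: their logarithmic branch points sit at $\eta\in2\pi\mathrm{i}\epsilon\bZ$, where $\mathrm{Re}(\eta/\epsilon)=0$, and these lie strictly outside $\overline\Delta$. Consequently the only singularities of the two integrands inside the contour are the poles of $\mathscr{k}_i$ and $\mathscr{f}$ at the lattice points $\omega\in\Delta\cap\Lambda^*(\omega_1,\omega_2)$.

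The heart of the argument is the residue computation, for which I would use the (quasi-)periodicity identities of Lemma \ref{lemma2}. Writing $\omega=a_1\omega_1+a_2\omega_2$ and $v=\eta-\omega$, the expansion $\rho_i(\omega+v)=\rho_i(v)-a_i\zeta(v)+(\text{regular})$ together with $\zeta(v)=v^{-1}+O(v^3)$ shows that $\mathscr{k}_i$ has a simple pole at $\omega$ with residue $-a_i$ (the remaining terms $-\partial_{\omega_i}\log\eta-\tfrac12\partial_{\omega_i}(G_2\eta^2)$ being regular at $\omega\neq0$). Since $\mathrm{Li}_1(z)=-\log(1-z)$, the residue of the integrand is
\[
\mathrm{Res}_{\eta=\omega}\big(\mathrm{Li}_1(e^{-\eta/\epsilon})\,\mathscr{k}_i(\eta)\big)=-a_i\,\mathrm{Li}_1(e^{-\omega/\epsilon})=a_i\,\log(1-e^{-\omega/\epsilon}),
\]
which is exactly the claimed summand. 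For $\mathscr{f}=2\tilde{\zeta}-\eta\tilde{\wp}$ the periodicity of $\wp$ and the quasi-periodicity of $\zeta$ give the principal part $\mathscr{f}(\eta)=-\omega\,(\eta-\omega)^{-2}+(\eta-\omega)^{-1}+O(1)$, a genuine double pole. Combining this with $\mathscr{f}$ holomorphic elsewhere, the Taylor expansion of $\mathrm{Li}_2(e^{-\eta/\epsilon})$ at $\omega$, and the identity $\tfrac{\partial}{\partial\eta}\mathrm{Li}_2(e^{-\eta/\epsilon})=\tfrac1\epsilon\log(1-e^{-\eta/\epsilon})$ yields
\[
\mathrm{Res}_{\eta=\omega}\big(\mathrm{Li}_2(e^{-\eta/\epsilon})\,\mathscr{f}(\eta)\big)=\mathrm{Li}_2(e^{-\omega/\epsilon})+\frac{\omega}{\epsilon}\,\mathrm{Li}_1(e^{-\omega/\epsilon})=\frac{\partial}{\partial\epsilon}\big(\epsilon\,\mathrm{Li}_2(e^{-\omega/\epsilon})\big),
\]
again matching the stated summand. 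Multiplying by $2\pi\mathrm{i}$ and summing over $\omega\in\Delta\cap\Lambda^*(\omega_1,\omega_2)$ gives both formulas.

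It remains to justify the limit $R\to\infty$. I would truncate as in the proof of Proposition \ref{weakRHsol}: cutting both detour paths at a common large radius and closing with an arc inside $\Delta$, the truncated identity is an honest residue computation, and the detour arcs are absorbed by the contour-deformation argument behind \eqref{parteq} (the integrands are meromorphic with poles only on $\Lambda^*(\omega_1,\omega_2)$). The absolute convergence established in Proposition \ref{weakRHsol} then identifies the truncated integrals, in the limit, with $K^{r_j(\delta)}_i$ and $F^{r_j(\delta)}$, while the small arc near $0$ contributes nothing because $\mathscr{k}_i=O(\eta)$ and $\mathscr{f}$ is regular there. The closing arc at radius $R$ also tends to $0$: choosing $R\to\infty$ to avoid the $\delta$-discs around lattice points, the quasi-periodicity of Lemma \ref{lemma2} bounds $\mathscr{k}_i,\mathscr{f}$ polynomially in $R$ there, while $|\mathrm{Li}_k(e^{-\eta/\epsilon})|\lesssim|e^{-\eta/\epsilon}|$ decays exponentially since $\mathrm{Re}(\eta/\epsilon)\geq cR$ uniformly on $\overline\Delta$ for some $c>0$.

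I expect the main obstacle to be the careful bookkeeping of lattice points lying within distance $\delta$ of the boundary rays, which are skirted by the detour arcs of $r_1(\delta)$ or $r_2(\delta)$ rather than approached along a straight segment. One must verify that the ``arc of angle $<\pi$'' convention of Figure \ref{detourpathfig} makes the closed contour enclose exactly those poles with $\omega\in\Delta$ and exclude those with $\omega\notin\Delta$. Since $r_1,r_2$ are non-Stokes no lattice point lies on either ray, so every pole is unambiguously inside or outside $\Delta$; matching this enclosure to the detour convention, and confirming the resulting orientation sign, is the one point that requires genuine care, the rest being the two residue calculations above combined with the exponential-decay estimate.
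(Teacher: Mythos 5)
Your contour strategy is essentially the paper's: truncate the two detour paths at a large radius, close the contour inside $\Delta$, apply the residue theorem, and show the closing arc contributes nothing in the limit. Your residue computations are correct and match the paper's answer; the only methodological difference is that for $F$ you evaluate the residue of $\mathrm{Li}_2(e^{-\eta/\epsilon})\,\mathscr{f}(\eta)$ directly at the double pole (principal part $-\omega(\eta-\omega)^{-2}+(\eta-\omega)^{-1}$), whereas the paper first integrates by parts to rewrite $F^{r(\delta)}$ as an integral of $\bigl(\mathrm{Li}_2(e^{-\eta/\epsilon})+\tfrac{\eta}{\epsilon}\mathrm{Li}_1(e^{-\eta/\epsilon})\bigr)\tilde{\zeta}(\eta)$, so that only the simple poles of $\tilde{\zeta}$ (residue $1$) enter; both routes give $\mathrm{Li}_2(e^{-\omega/\epsilon})+\tfrac{\omega}{\epsilon}\mathrm{Li}_1(e^{-\omega/\epsilon})=\partial_\epsilon\bigl(\epsilon\,\mathrm{Li}_2(e^{-\omega/\epsilon})\bigr)$. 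Also, the point you flag as the main obstacle (bookkeeping of poles skirted by the detours near $r_1,r_2$) is in fact automatic: the angle-$<\pi$ convention forces each detour arc to pass on the opposite side of the chord from the pole, so every pole stays on the same side of $r_j(\delta)$ as it is of $r_j$, independently of $\delta$; this is what makes \eqref{parteq} work.

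The genuine gap is in your treatment of the closing arc. You assert that one can choose $R\to\infty$ so that the circular arc of radius $R$ avoids all the $\delta$-discs, with $\delta$ \emph{fixed}. This need not be possible, and in general is false: the arc of radius $R$ avoids the disc around $\omega\in\Delta\cap\Lambda^*(\omega_1,\omega_2)$ exactly when $\bigl||\omega|-R\bigr|\geq\delta$, and the number of such $\omega$ with $|\omega|\in[R,R+1]$ grows linearly in $R$, so the norms of lattice points in the sector have average spacing of order $1/R$. Your construction therefore requires gaps of size $\geq 2\delta$ in this set of norms at arbitrarily large radii, a strong arithmetic condition you have not established. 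It provably fails, for example, for $\omega_1=1$, $\omega_2=\mathrm{i}$ and a sector containing the closed first quadrant: the relevant norms then include all $\sqrt{n}$ with $n$ a sum of two squares, and the elementary bound that every interval $[n,n+Cn^{1/4}]$ contains a sum of two squares forces the gaps between consecutive norms to be $O(R^{-1/2})\to 0$. This is exactly why the paper does \emph{not} use plain circular arcs: it introduces detoured arcs $A_n(\delta)$, which follow the boundary of any $\delta$-disc they meet (just as the rays $r(\delta)$ do in Figure \ref{detourpathfig}), and then proves $K_i^{A_n(\delta)},F^{A_n(\delta)}\to 0$ by combining the exponential suppression of $e^{-\eta/\epsilon}$ on $A_n(\delta)$ (valid since $\epsilon\in\mathbb{H}_{r_1}\cap\mathbb{H}_{r_2}$) with the quasi-periodicity bounds of Lemma \ref{lemma2}, which give polynomial growth of $\mathscr{k}_i,\mathscr{f}$ at bounded distance from the poles. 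Your argument can be repaired either by adopting these detoured arcs, or by letting the disc radius shrink with $R$ (a gap of size $\gtrsim 1/R_n$ in the norms always exists, and the resulting $O(R_n^{3})$ growth of $|\mathscr{f}|$ on the arc is still beaten by $e^{-cR_n}$, with Proposition \ref{weakRHsol} guaranteeing that $K_i^{r(\delta)},F^{r(\delta)}$ are independent of the radius used); but as written the step fails.
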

\begin{proof}
    Consider a sequence $C_n$ with $n>0$ of discs centered at $0$ and of radius $R_n$ with $R_n \to \infty$ as $n\to \infty$. We denote by $A_n$ the arc of $C_n$ contained in $\Delta(r_1,r_2)$ and assume that $A_n$ does not intersect $\Lambda^*(\omega_1,\omega_2)$ for all $n$. We orient $A_n$ counter-clockwise. Furthermore, consider $\delta>0$ small enough such that the discs of radius $\delta>0$ centered at the points of $\Lambda(\omega_1,\omega_2)$ do not intersect. We consider detour arcs $A_n(\delta)$, defined similarly to the detour rays $r(\delta)$ by taking a detour along the circles of radius $\delta$ centered at the points of $\Lambda(\omega_1,\omega_2)$ through the shortest length arc. Furthermore, we denote by $\Delta_n(r_1,r_2)$ the region determined by $r_1$, $r_2$ and $A_n(\delta)$, and by $r_{n,1}(\delta)$ and $r_{n,2}(\delta)$ the segments of $r_1(\delta)$ and $r_2(\delta)$ from $0$ to the intersection points with $A_n(\delta)$. By using that $\mathscr{k}_i(\eta)$ has a simple pole at $\omega=a_1\omega_1+a_2\omega_2\in \Lambda^*(\omega_1,\omega_2)$ with residue $-a_i$ we obtain using \eqref{borelalt2} that
 \begin{equation}\label{Stokescont1}
        K_{i}^{r_{n,2}(\delta)}-K_{i}^{r_{n,1}(\delta)}+K_{i}^{A_n(\delta)}=2\pi \mathrm{i}\sum_{\omega\in \Delta_n(r^1,r^2)\cap \Lambda^*}a_i\cdot \log(1-e^{-\omega/\epsilon})
    \end{equation}
    where we have used that $\mathrm{Li}_1(z)=-\log(1-z)$. Similarly, using that $\widetilde{\wp}=-\mathrm{d}\widetilde{\zeta}/\mathrm{d}\eta$ and the definition of $\mathscr{f}$, we can use integration by parts on $F^{r(\delta)}$ to write
    \begin{equation}
    \begin{split}
        F^{r(\delta)}&=\int_{r(\delta)}\left(2\mathrm{Li}_2(e^{-\eta/\epsilon})-\frac{\mathrm{d}}{\mathrm{d}\eta}(\eta \mathrm{Li}_2(e^{-\eta/\epsilon}))\right)\widetilde{\zeta}(\eta,\omega_1,\omega_2)\mathrm{d}\eta\,\\
        &=\int_{r(\delta)}\left(\mathrm{Li}_2(e^{-\eta/\epsilon})+\frac{\eta}{\epsilon}\mathrm{Li}_1(e^{-\eta/\epsilon})\right)\widetilde{\zeta}(\eta,\omega_1,\omega_2)\mathrm{d}\eta\,\\
    \end{split}
    \end{equation}
    where we have used that the boundary terms of the integration by parts vanish. Using that $\widetilde{\zeta}$ has a simple pole at $\omega\in \Lambda^*(\omega_1,\omega_2)$ with residue $1$ then shows that 
    \begin{equation}\label{stokescont2}
    \begin{split}
        F^{r_{n,2}(\delta)}-F^{r_{n,1}(\delta)}+F^{A_n(\delta)}&=2\pi \mathrm{i}\sum_{\omega\in \Delta_n(r^1,r^2)\cap \Lambda^*}\left(\mathrm{Li}_2(e^{-\omega/\epsilon})+\frac{\omega}{\epsilon}\mathrm{Li}_1(e^{-\omega/\epsilon})\right)\\
        &=2\pi \mathrm{i}\sum_{\omega\in \Delta_n(r^1,r^2)\cap \Lambda^*}\partial_{\epsilon}\left(\epsilon \mathrm{Li}_2(e^{-\omega/\epsilon})\right)\\
    \end{split}
    \end{equation}
    Now note that since $\epsilon \in \mathbb{H}_{r^1}\cap \mathbb{H}_{r^2}$ the function $e^{-\eta/\epsilon}$ along $A_n(\delta)$ is exponentially suppressed as $n\to \infty$. By using a similar argument to Lemma \ref{lemma3}, Proposition \ref{borelsumproof} and Proposition \ref{weakRHsol} one then finds that
    \begin{equation}
        \lim_{n\to \infty}F^{A_n(\delta)}=\lim_{n\to \infty}K_{i}^{A_n(\delta)}=0\,
    \end{equation}
    and hence the result follows. 
\end{proof}
\appendix
\section{Free energy in fibre classes}\label{appendix}

Let $\pi:X\to B$ be an elliptic CY threefold satisfying the assumptions \ref{CYass}, and consider the GW generating function in fibre classes of $\pi:X \to B$ and for genus $g\geq 2$
\begin{equation}\label{GWfiber}
    F_{\text{GW}}(\lambda\b Q)=\sum_{g\geq 2}F_g(Q)\lambda^{2g-2}\,, \quad F_g(Q)=\sum_{n=0}^{\infty}\text{GW}(g,nF)Q^d\,.
\end{equation}
Here $\text{GW}(g,nF)$ denotes the GW invariant of the class $n F$ at genus $g$, and $F$ is the fiber class of $\pi:X \to B$.

In this Section we show that one can write $F_{\text{GW}}$ as in \eqref{free_energy}. The expression \eqref{FE1} below is the same as the one written in \cite[Section B.3]{OP}. We nevertheless include a more detailed computation for completeness.  

\begin{prop}
    Assuming the GW/DT correspondence holds for $X$, we can write $F_{\text{GW}}(\lambda \b Q)$ from \eqref{GWfiber} as
    \begin{equation}\label{FE1}
        F_{\text{GW}}(\lambda \b Q)=e(X)\sum_{g\geq2}\frac{(-1)^gB_{2g}}{4g}C_{2g-2}(Q)\lambda^{2g-2}\,,
    \end{equation}
    where $C_{2g-2}(Q)$ is the analytic function in $Q$ for $|Q|<1$ given by
    \begin{equation}
        C_{2g-2}(Q)=-\frac{B_{2g-2}}{(2g-2)\cdot (2g-2)!}+\frac{2}{(2g-2)!}\sum_{k,n\geq 1}k^{2g-3}Q^{kn}\,.
    \end{equation}
    Furthermore, setting $Q=e^{2\pi \mathrm{i}\tau}$ for $\text{Im}(\tau)>0$ we have
    \begin{equation}\label{FE3}
        F_{\text{GW}}(\lambda\b \tau)=-e(X)\cdot \sum_{g\geq 2} \frac{B_{2g} \, G_{2g-2}(\tau)}{4g(2g-2)} \left(\frac{\lambda}{2\pi}\right)^{2g-2}.
    \end{equation}
\end{prop}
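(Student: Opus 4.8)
The plan is to assemble $F_{\text{GW}}$ from its Gopakumar--Vafa (GV) presentation, handling the constant-map part ($n=0$) and the positive fibre classes ($n\geq 1$) separately, and then to recognise the resulting $q$-series as an Eisenstein series. The one piece of genuine geometric input is the collection of fibre-class GV invariants: by the GW/DT correspondence these are read off from Toda's DT count \eqref{dt} (equivalently, from the computation of \cite{OP}), so that the only connected invariants contributing to the $\lambda^{2g-2}$ terms with $g\geq 2$ are the genus-$0$ invariants $n^0_{mF}=-e(X)$ for every $m\geq 1$. Granting this, everything else is bookkeeping with Bernoulli numbers and a standard Fourier expansion.

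First I would treat the $n\geq 1$ part. Feeding $n^0_{mF}=-e(X)$ into the GV formula gives
\[ F'_{\text{GW}} = -e(X)\sum_{m,k\geq 1}\frac{1}{k}\,\frac{Q^{km}}{\big(2\sin(k\lambda/2)\big)^2}, \]
and expanding $\big(2\sin(k\lambda/2)\big)^{-2}$ via the (once-differentiated) generating series $\tfrac{u}{2}\cot\tfrac{u}{2}=\sum_{g\geq 0}\tfrac{(-1)^gB_{2g}}{(2g)!}u^{2g}$ produces $\sum_{g\geq 0}\tfrac{(-1)^{g+1}B_{2g}(2g-1)}{(2g)!}(k\lambda)^{2g-2}$. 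Collecting the coefficient of $\lambda^{2g-2}$, using $\tfrac{2g-1}{(2g)!}=\tfrac{1}{2g(2g-2)!}$, reorganises the double sum into $\sum_{m,k}k^{2g-3}Q^{km}$, which is exactly the $Q$-dependent part of $C_{2g-2}(Q)$. For the $n=0$ part I would invoke the Faber--Pandharipande constant-map formula $\text{GW}(g,0)=\tfrac{(-1)^g e(X)}{2}\int_{\overline M_g}\lambda_{g-1}^3$ together with $\int_{\overline M_g}\lambda_{g-1}^3=\tfrac{|B_{2g}|}{2g}\tfrac{|B_{2g-2}|}{2g-2}\tfrac{1}{(2g-2)!}$; rewriting $|B_{2g}|,|B_{2g-2}|$ in terms of $B_{2g},B_{2g-2}$ (whose signs combine to a single $-1$) matches precisely the constant term $-\tfrac{B_{2g-2}}{(2g-2)(2g-2)!}$ of $C_{2g-2}(Q)$. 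Combining the two pieces gives \eqref{FE1} with the stated $C_{2g-2}(Q)$.

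The final step, the passage from \eqref{FE1} to \eqref{FE3}, is purely the Fourier expansion of the Eisenstein series \eqref{eisenstein}. Putting $q=e^{2\pi\mathrm{i}\tau}$ and using the standard expansion $G_{2k}(\tau)=2\zeta(2k)+\tfrac{2(2\pi\mathrm{i})^{2k}}{(2k-1)!}\sum_{n\geq 1}\sigma_{2k-1}(n)q^n$ with the Riemann zeta value $\zeta(2k)=-\tfrac{(2\pi\mathrm{i})^{2k}B_{2k}}{2(2k)!}$, and recognising $\sum_{n\geq 1}\sigma_{2g-3}(n)q^n=\sum_{k,n\geq 1}k^{2g-3}q^{kn}$, one extracts the clean identity
\[ G_{2g-2}(\tau)=(2g-2)\,(2\pi\mathrm{i})^{2g-2}\,C_{2g-2}\!\big(e^{2\pi\mathrm{i}\tau}\big). \]
Substituting this into \eqref{FE1} replaces $C_{2g-2}$ by $G_{2g-2}(\tau)/\big((2g-2)(2\pi\mathrm{i})^{2g-2}\big)$; the factor $(2\pi\mathrm{i})^{-(2g-2)}$ converts $\lambda^{2g-2}$ into $(\lambda/2\pi)^{2g-2}$ at the cost of $(-1)^{g-1}$, which combines with the explicit $(-1)^g$ to an overall $-1$, yielding exactly \eqref{FE3} and hence \eqref{free_energy}.

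I expect the main obstacle to be the very first step: justifying that the fibre-class GV invariants are $n^0_{mF}=-e(X)$ with no higher-genus contributions to $g\geq 2$. This is the genuine geometric content, supplied by the GW/DT correspondence together with Toda's DT computation (or by \cite{OP}); once it is in hand the remainder is a matching of Bernoulli coefficients and a textbook Eisenstein $q$-expansion. A minor point to keep in mind is the $g=2$ case, where $G_2$ is only conditionally convergent and is fixed by the Eisenstein summation in \eqref{eisenstein}; since the argument passes through the $q$-expansion rather than the raw lattice sum, this causes no difficulty.
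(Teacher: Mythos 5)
Your proposal is correct and takes essentially the same route as the paper's own proof: splitting off the Faber--Pandharipande constant-map contribution, feeding Toda's fibre-class GV invariants (via GW/DT) into the Gopakumar--Vafa form, expanding $(2\sin(k\lambda/2))^{-2}$ through Bernoulli generating series, and then identifying the resulting $q$-series with $G_{2g-2}(\tau)$ via its Fourier expansion and $\zeta(2g-2)$. The only differences are cosmetic (you differentiate the cotangent series where the paper expands $e^{\mathrm{i}k\lambda}/(e^{\mathrm{i}k\lambda}-1)^2$, and you quote Faber--Pandharipande in Hodge-integral form with $|B_{2g}|$ rather than signed Bernoulli numbers), and your sign bookkeeping checks out throughout.
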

\begin{proof}
    On the one hand, for $g\geq 2$ and $n=0$ we have the universal contribution of constant maps on a CY threefold $X$  \cite[Theorem 4]{FR}
    \begin{equation}\label{constmapcont}
        \text{GW}(g,0)=-e(X)\frac{(-1)^gB_{2g}B_{2g-2}}{4g(2g-2)(2g-2)!}\,\,.
    \end{equation}

    On the other hand, consider the Gopakumar-Vafa form of the Gromov-Witten generating function 
    \begin{equation}
        \sum_{g\geq 0, \; \beta>0}\text{GW}(g,\beta)Q^{\beta}\lambda^{2g-2}=\sum_{g\geq 0, \; \beta>0, \; k>0}\frac{\text{GV}(g,\beta)}{k}\left(2\sin\left(\frac{k\lambda}{2}\right)\right)^{2g-2}Q^{k\beta}\,,
    \end{equation}
    where $\text{GV}(g,\beta)$ denotes the Gopakumar-Vafa invariants of the class $\beta$ at genus $g$. Assuming the DT/GW correspondence it is shown in  \cite[Section 6]{T} that for $\beta=nF$, $n>0$, we have
    \begin{equation}
      \text{GV}(0,nF)=-e(X), \qquad \text{GV}(1,nF)=e(B), \qquad \text{GV}(g,nF)=0 \quad \text{for} \quad g\geq 2\,.  
    \end{equation}
    Hence, restricting to the sum over fiber classes (and denoting $Q^{n\cdot F}=Q^n$ to simplify notation) we find 
    \begin{equation}
    \begin{split}
        \sum_{g\geq 0, \; n>0}\text{GW}(g,nF)Q^{n}\lambda^{2g-2}&=-e(X)\sum_{k,n>0}\frac{1}{k}\left(2\sin\left(\frac{k\lambda}{2}\right)\right)^{-2}Q^{kn}+e(B)\sum_{k,n>0}\frac{Q^{kn}}{k}\,,\\
        &=e(X)\sum_{k,n>0}\frac{e^{\mathrm{\mathrm{i}k\lambda}}}{k(e^{\mathrm{i}k\lambda}-1)^2}Q^{kn}-e(B)\sum_{n> 0}\log(1-Q^n)\,.
    \end{split}
    \end{equation}
    Using the generating function of Bernoulli numbers $B_{n}$ and the fact that $B_{n}=0$ for odd $n>1$,  one easily finds that
    \begin{equation}
        \frac{e^{\mathrm{i}k\lambda}}{(e^{\mathrm{\mathrm{i}k\lambda}}-1)^2}=\sum_{g=0}^{\infty}\frac{(2g-1)(-1)^gB_{2g}(k\lambda)^{2g-2}}{(2g)!}\,\,,
    \end{equation}
    so when considering only terms with $g\geq 2$ we find that
    \begin{equation}
    \begin{split}
        \sum_{g\geq 2, \; n>0}\text{GW}(g,nF)Q^{n}\lambda^{2g-2}&=e(X)\sum_{g\geq 2}\frac{(-1)^gB_{2g}}{(2g)\cdot (2g-2)!}\left(\sum_{k,n>0}k^{2g-3}Q^{k\cdot n}\right)\lambda^{2g-2}\,\\
        &=e(X)\sum_{g\geq 2}\frac{(-1)^gB_{2g}}{4g}\left(C_{2g-2}(Q)+\frac{B_{2g-2}}{(2g-2)\cdot (2g-2)!}\right)\lambda^{2g-2}\,.
    \end{split}
    \end{equation}
 The expression \eqref{FE1} then follows by adding the constant map contribution \eqref{constmapcont}.

Finally, to show \eqref{FE3} note that the Eisenstein series $G_{2g-2}(\tau)$ has the following expansion for $g\geq 2$, which is a slight rewriting of its  Fourier series\footnote{In the Fourier expansion of $G_{2g-2}(\tau)$ a sum of the form $\sum_{m>0}\sigma_{2g-3}(m)Q^m$ appears, where $\sigma_{2g-3}(m)=\sum_{d | m}d^{2g-3}$ and $Q=e^{2\pi \mathrm{i}\tau}$. We simply use the fact that $\sum_{m>0}\sigma_{2g-3}(m)Q^{m}=\sum_{k,n>0}k^{2g-3}Q^{k\cdot n}$.}

\begin{equation}
    G_{2g-2}(\tau)=2\zeta(2g-2)\left(1-\frac{(2\pi)^{2g-2}(-1)^g}{(2g-3)!\zeta(2g-2)}\sum_{k,n>0}k^{2g-3}Q^{kn}\right), \quad Q=e^{2\pi \mathrm{i}\tau}\,.
\end{equation}
In the above $\zeta(s)$ denotes the Riemann $\zeta$-function and not the Weierstrass $\zeta$-function that is used in the rest of the paper. 
From
\begin{equation}
    \zeta(2g-2)=\frac{(-1)^g(2\pi)^{2g-2}B_{2g-2}}{2(2g-2)!}\,, \quad g\geq 2\,,
\end{equation}
it then follows that
\begin{equation}\label{altEis}
        C_{2g-2}(Q)=-\frac{(-1)^g}{(2g-2)(2\pi)^{2g-2}}G_{2g-2}(\tau)\,\,.
\end{equation}
Hence \eqref{FE3} follows from \eqref{altEis} and \eqref{FE1}.
\end{proof}

\section{Lemma on irrationality measure}\label{app_irrat}

Recall the definition of irrationality measure $\mu(\alpha)$ from Section \ref{irrat}. Here we prove that if $\alpha\in \bR\setminus\bQ$ then
    \begin{equation} 
   \mu\bigg(\frac{a\alpha+b}{c\alpha+d}\bigg)=\mu(\alpha)\text{ for all } \mat{a}{b}{c}{d}\in \GL_2(\bZ).
   \end{equation}
It it is enough to show that the result holds for the generators
\begin{equation}
    T=\mat{1}{1}{0}{1}, \quad S=\mat{0}{-1}{1}{0}, \quad P=\mat{1}{0}{0}{-1}\,.
\end{equation}
of $\mathrm{GL}_2(\mathbb{Z})$. This is obvious for the transformations $\alpha\mapsto \alpha+1$ and $\alpha\mapsto -\alpha$  corresponding to $T$ and $P$. Thus it remains to prove that $\mu(\alpha)=\mu(1/\alpha)$. By the invariance of $\mu(\alpha)$ under $\alpha\mapsto -\alpha$  we can assume that $\alpha>0$. 

Suppose that $0< r<\mu(\alpha)$. By definition of $\mu(\alpha)$, this implies that there are infinitely many $p,q\in \mathbb{Z}$ with $q>0$ such that 
\begin{equation}\label{in1}
    |\alpha-p/q|<1/q^r\,.
\end{equation}
For a given $q>0$ there can only be finitely many $p$ satisfying \eqref{in1}. So there must be a sequence $p_n,q_n\in \mathbb{Z}$ with $q_n>0$ satisfying \eqref{in1} such that $q_n\to \infty$ as $n\to \infty$. Then
\begin{equation}\label{lim}
    \frac{p_n}{q_n}\to \alpha\,.
\end{equation} and hence $p_n\to \infty$ also. Passing to a subsequence we can assume that all $p_n>0$. It follows from \eqref{in1} and \eqref{lim} that for some constant $C>0$
\begin{equation}
    \left|\frac{1}{\alpha}-\frac{q_n}{p_n}\right|<\frac{1}{q_n^{r-1}p_n\alpha}=\frac{p_n^{r-1}}{q_n^{r-1}p_n^r\alpha}<\frac{C}{p_n^r}\,.
\end{equation}

If $0<r'<r$ then using the fact that $p_n\to \infty$ we can assume, after possibly passing to another subsequence, that $p_n^{r-r'}>C$ for all $n\in \mathbb{N}$, and hence that
\begin{equation}
     \left|\frac{1}{\alpha}-\frac{q_n}{p_n}\right|<\frac{1}{p_n^{r'}}\,.
\end{equation}
This implies that $r' \leq \mu(1/\alpha)$. Since this holds for all $0<r'<r$, it follows that $r\leq \mu(1/\alpha)$. But $0<r<\mu(\alpha)$ was chosen  arbitrarily so we conclude that $\mu(\alpha)\leq \mu(1/\alpha)$. Repeating the argument interchanging $\alpha$ and $1/\alpha$ gives $\mu(\alpha)=\mu(1/\alpha)$.

%%%%%%%%%%%%%%%%%%%%%%%%%%%%%%%%%%%%%%%%%%%%%%%%%%%%%%%%%%%%%%%%%%%%%%%%%%%%%%%%%%%%%%%%%%%%%%%%
%%%%%%%%%%%%%%%%%%%%%%%%%%%%%%%%%%%%%%%%%%%%%%%%%%%%%%%%%%%%%%%%%%%%%%%%%%%%%%%%%%%%%%%%%%%%%%%%
%%%%%%%%%%%%%%%%%%%%%%%%%%%%%%%%%%%%%%%%%%%%%%%%%%%%%%%%%%%%%%%%%%%%%%%%%%%%%%%%%%%%%%%%%%%%%%%%
%%%%%%%%%%%%%%%%%%%%%%%%%%%%%%%%%%%%%%%%%%%%%%%%%%%%%%%%%%%%%%%%%%%%%%%%%%%%%%%%%%%%%%%%%%%%%%%%

\end{document}